\documentclass[10pt,twocolumn]{article}
\usepackage{cite}
\usepackage{amsmath,amssymb,amsfonts}
\usepackage{algorithmic}
\usepackage{graphicx}
\usepackage{algorithm,algorithmic}
\usepackage{hyperref}
\hypersetup{hidelinks=true}
\usepackage{textcomp}
\usepackage{geometry}
\usepackage{enumerate}
\usepackage{overpic}
\usepackage{booktabs}
\usepackage{multirow}
\usepackage{makecell}
\usepackage{subfigure}

\newtheorem{assumption}{\bf Assumption}
\newtheorem{remark}{\bf Remark}
\newtheorem{definition}{\bf Definition}
\newtheorem{lemma}{\bf Lemma}
\newtheorem{corollary}{\bf Corollary}
\newtheorem{theorem}{\bf Theorem}

\begin{document}

\title{Fully Actuated Manifold Constraint Based Output Feedback Control for Input-Constrained Uncertain Nonlinear Systems}

\author{
	Dianrui Mu, Changchun Hua, Yafeng Li, Jiannan Chen, and Rao Wei
	\thanks{Authors Dianrui Mu, Changchun Hua, Yafeng Li, Jiannan Chen, and Rao Wei are with the Institute of Electrical Engineering, Yanshan University, Qinhuangdao, 066004, China (e-mail: mdr@stumail.ysu.edu.cn; cch@ysu.edu.cn; y.f.li@foxmail.com; cjn@ysu.edu.cn).}
}

\maketitle

\begin{abstract}
	This paper presents a low-complexity, model-free, output-feedback controller for a class of unknown time-varying nonlinear systems with unknown input constraints. The controller achieves the preset control accuracy when the actuator is not saturated and maintains flexible control accuracy after actuator saturation. This result extends existing constraint control methods for linear manifolds to a more general form, including the construction of nonlinear manifolds and various types of constraints, thereby achieving preset control accuracy within finite or fixed time. Additionally, flexible control under unknown saturation is achieved through the construction of an error-driven flexible constraint. Finally, second-order and higher-order control examples and simulations are provided.
\end{abstract}

\noindent\textbf{Keywords:} Fully actuated manifold constraint, output feedback, prescribed performance, uncertain nonlinear systems.

\section{Introduction}
\label{sec:introduction}

\subsection{Motivation}
\noindent\textbf{B}ased on observations of different control methods below, control modeling and state acquisition, handling uncertainties, feedback form and performance guarantee are almost the most basic three issues of system control.

\subsubsection{Control modeling and state acquisition}
Control system modeling is a prerequisite for analyzing and designing. There are three modeling methods according to the state selection and the differential equation form. $\mathcal{M}odel$1: First-order state space equations \cite{swaroop2000dynamic} is a common modeling approach according to the physical system mechanisms. Nonlinear and uncertain terms are dispersed into first-order differential equations one by one, so they need to be addressed one by one. The acquisition of these physical states depends on sensors or state observers. Based on universal approximator, the output feedback problem of uncertain nonlinear systems can be solved in \cite{zhang2017fuzzy} and \cite{zhou2021event}, however, with a complex structure. The output feedback design without approximator usually has various limitations on the system: unit gain coefficient \cite{liu2022reduced}, known control gain \cite{li2023adaptive}, known nonlinear growth form \cite{zhang2019k}, known nonlinear boundary form \cite{wang2020global}, etc. $\mathcal{M}odel$2: Under certain conditions \cite{kanellakopoulos1991systematic}, strict feedback systems can be transformed into high-order fully actuated systems concerning output through differential homeomorphism in high-order fully actuated system approaches (HFASA) \cite{duan2021high2}. The system is transformed into a standard form, and all gain coefficients and unmatched dynamics are combined into total gain and total disturbance, allowing for unified processing. Combining differentiators, output feedback can be achieved \cite{cai2023active}, but it needs the control gain bounds to be known. $\mathcal{M}odel$3: Furthermore, considering the tracking control problem, establishing a fully actuated error system model can further simplify the acquisition of the states,
the derivative informations of the reference signal are no longer required in \cite{dimanidis2020output}.

\subsubsection{Handling uncertainties}
Dealing with internal and external uncertainties in the system to ensure its stability and controllability is a basic requirement of the controller. The common methods to handle uncertainty are as follows. $\mathcal{H}andle$1: Robust control including Bang-bang control, Parameter robust, Sliding mode control (SMC) \cite{utkin1977variable}, etc. It forces the error to converge by multiplying the upper bound of the uncertainty by the sign of the variable, but this also leads to controller chatting. $\mathcal{H}andle$2: Adaptive control including Parameter adaptive \cite{slotine1987adaptive}, Neural adaptive \cite{zhou2021event}, Fuzzy adaptive \cite{zhang2017fuzzy}, etc. The adaptive law can only estimate constant values. Although the adaptive method combining neural networks and fuzzy systems has universal approximation ability, its structure will become very complex. $\mathcal{H}andle$3: Disturbance estimation including Extended state observer (ESO) \cite{han2009pid}, Disturbance observer (DO), Time delay control (TDC), etc. Although the ESO can estimate the total disturbance, it needs to use the total gain value (at least a rough estimate) \cite{han2009pid}. $\mathcal{H}andle$4: Dynamic gain including Low-complexity prescribed performance control (PPC) \cite{bechlioulis2014low}, etc. It can offset the impact of uncertainty by adjusting gains dynamically. Its structure is simple, but there is incompatibility between multiple constraints when dealing with high-order systems \cite{cao2023prescribed}.

\subsubsection{Feedback form and performance guarantee}
The feedback form can directly determine the results of the Lyapunov stability analysis, thereby determining the performance of the system. Overall, the feedback form is a function of the states (or estimated states) and time. According to whether feedback variables are aggregated during use, feedback forms can be divided into two categories. $\mathcal{F}orm$1: Gradual feedback form is used in Backstepping, Cascade PID, etc. $\mathcal{F}orm$2: Aggregated feedback form is used in SMC, HFASA, etc.
To improve performance, two fast feedback forms have been developed. $\mathcal{F}ast$1: The power function feedback form is the function of the states, including finite-time control \cite{haddad2022asymptotic,hua2021event,min2017practically}, fixed-time control \cite{mi2022fixed,hua2016finite,zhou2024practical}, predefined time control \cite{munoz2019predefined}, etc. $\mathcal{F}ast$2: The time constrained feedback form is the function that dependents on the states and time, including funnel control \cite{ilchmann2002tracking}, PPC \cite{bechlioulis2008robust}, low-complexity PPC \cite{bechlioulis2014low}, prescribed time control \cite{shakouri2021prescribed,hua2021adaptive,cao2022practical}, etc.

The combination of gradual feedback, aggregated feedback, and fast feedback can form various feedback forms, such as fixed time sliding mode \cite{ni2017fixed}, prescribed time sliding mode \cite{chen2022prescribed}, prescribed performance HFASA \cite{zhang2022practical}, etc.
In recent years, a control method combining aggregated feedback and low-complexity PPC has been studied in \cite{bechlioulis2013output,wei2018robust,dimanidis2020output,lv2023distributed,song2016adaptive,cao2020adaptive}. Due to the manifolds used in them are all linear manifolds satisfying Hurwitz, we refer to it as linear manifold constraint control (LMCC). Among them, output feedback control is achieved in \cite{bechlioulis2013output} and \cite{dimanidis2020output}, while adaptive and integral control are combined in fault-tolerant PID (AFPID) control in \cite{song2016adaptive} and \cite{cao2020adaptive}. The performance of these methods requires complex formulas to be expressed so that the steady-state accuracy cannot be preset directly.

\subsection{Contributions}
By integrating $\mathcal{M}odel$3, $\mathcal{H}andle$4, $\mathcal{F}orm$2, and fast feedback form, a novel control method, fully actuated manifold constraint control (FAMCC), is proposed.
The major contributions are summarized below:
\begin{enumerate}[1)]
	\item The linear manifold constraint control method \cite{bechlioulis2013output,wei2018robust,dimanidis2020output,lv2023distributed,song2016adaptive,cao2020adaptive} is extended to nonlinear manifold constraint. It has the following characteristics:
	\begin{itemize}
		\item \textbf{Model-free:} The designed controller does not use any model information, particularly, the final output feedback controller relies solely on the system's output error.
		
		\item \textbf{Prescribed performance:} The steady-state accuracy can be directly preset in the controller, and the convergence time can be given, which can not be achieved in LMCC \cite{bechlioulis2013output,wei2018robust,dimanidis2020output,lv2023distributed,song2016adaptive,cao2020adaptive}.
		
		\item \textbf{Flexibility and robustness:} An error-driven flexible constraint is proposed to achieve flexible control under input constraints. Compared to saturation-driven flexible constraints \cite{trakas2024adaptive,berger2024input}, the proposed method does not require any knowledge on the input constraints.
		
		\item \textbf{Low complexity:} The controller design employs a low-complexity control method, which can ultimately be simplified into a very simple form.
	\end{itemize}
	
	\item For second-order systems, the ability of the proposed FAMCC under different combination schemes to converge to preset steady-state accuracy within finite time and fixed time is presented. Moreover, the discontinuity problem \cite{su2023comments} in the variable exponent coefficient fixed-time control method \cite{su2023comments,moulay2021robust,moulay2022fixed} is solved.
	
	\item For high-order systems, the fixed time convergence ability of the FAMCC based on the nonlinear manifold is proved through recursive analysis. It is shown that LMCC \cite{bechlioulis2013output,wei2018robust,dimanidis2020output,lv2023distributed} are special cases of the proposed FAMCC with linear manifold.
\end{enumerate}

\section{Problem Formulation}

\subsection{Notation}
For notation convenience, $\mathcal{R}$, $\mathcal{R}^+$, $\mathcal{R}^{n}$, and $\mathcal{R}^{m\times n}$ denote the real space, the nonnegative real space, the real $n$-dimensional space, and the real $m\times n$-dimensional space, respectively.
$\Vert\bullet\Vert$ is the Euclidean vector norm.
$\lceil z\rfloor^p=|z|^psign(z)$.
The arguments of the functions will be omitted or simplified whenever no confusion can arise from the context, e.g., $z(t)$ can be denoted by $z$.

\subsection{Smooth transition function}
It is often necessary to construct smooth functions when designing controllers to facilitate differential analysis and ensure that the controller is smooth or continuous. According to diverse needs, we introduce the following definitions:
\begin{definition}
	\emph{\textbf{Single-ended smooth transition function $\mathcal{T}(z)$}}
	has the following properties:
	\begin{enumerate}
		\item[a] $\mathcal{T}(z)=0, \forall z\leq0$, and $\mathcal{T}(1)=1$;
		\item[b] The $n$-th order derivative of $\mathcal{T}(z)$ is continuous on $(-\infty,1]$ and $\dot{\mathcal{T}}(z)\geq0$.
	\end{enumerate}
	\label{DefinitionSDSTF}
\end{definition}

\begin{definition}
	\emph{\textbf{Double-ended smooth transition function $\mathcal{S}(z)$}}
	has the following properties:
	\begin{enumerate}
		\item[a] $\mathcal{S}(z)=0, \forall z\leq0$, and $\mathcal{S}(z)=1, \forall z\geq1$;
		\item[b] The $n$-th order derivative of $\mathcal{S}(z)$ is continuous on $\mathcal{R}$ and $\dot{\mathcal{S}}(z)\geq0$.
	\end{enumerate}
	\label{DefinitionDESTF}
\end{definition}

\begin{definition}
	\emph{\textbf{Interval smooth transition function $\mathcal{U}(z)$}}
	has the following properties:
	\begin{enumerate}
		\item[a] $\mathcal{U}(0)=0$, and $\mathcal{U}(z)=1, \forall |z|\geq1$;
		\item[b] The $n$-th order derivative of $\mathcal{U}(z)$ is continuous on $\mathcal{R}$;
		\item[c] $\dot{\mathcal{U}}(z)\leq0, \forall z\leq0$ and $\dot{\mathcal{U}}(z)\geq0, \forall z\geq0$.
	\end{enumerate}
	\label{DefinitionISTF}
\end{definition}

\begin{remark}
	According to the definition, there are many selection forms for the smooth transition function mentioned above. Below is a set of examples:
	\begin{equation}
	\mathcal{T}(z)=
	\left\{
	\begin{array}{ll}
	0,&
	z\leq 0\\
	e^{\frac{z-1}{z}},&
	0<z<1
	\end{array}
	\right.
	\label{eq_trnsT}
	\end{equation}
	
	\begin{equation}
	\mathcal{S}(z)=
	\left\{
	\begin{array}{ll}
	0,&
	z\leq 0\\
	\frac{1}{e^{\frac{1-2z}{z(1-z)}}+1},&
	0<z<1\\
	1,&
	z\geq 1\\
	\end{array}
	\right.
	\label{eq_trnsS}
	\end{equation}
	
	\begin{equation}
	\mathcal{U}(z)=\mathcal{S}(|z|).
	\label{eq_trnsU}
	\end{equation}
	Any order derivatives of these functions are continuous.
\end{remark}

Based on single-ended smooth transition function, a prescribed performance function can be defined as
\begin{equation}
\rho(\rho_0,\epsilon,T_\rho)=(\rho_0-\epsilon)\mathcal{T}\left(\frac{T_\rho-t}{T_\rho}\right)+\epsilon
\label{eq_rho}
\end{equation}
where $\rho_0>\epsilon>0$ and $T_\rho>0$ are all positive constants. $\rho_0$, $\epsilon$, and $T_\rho$ are the constraint on the initial state, prescribed accuracy, and settling time respectively.

\subsection{System}
The following strict-feedback uncertain nonlinear system is investigated in this article:
\begin{equation}
	\left\{
	\begin{array}{l}
		\dot{x}_{oi}=g_i(\bar{x}_{oi},t)x_{o(i+1)}+f_i(\bar{x}_{oi},t)\\
		\dot{x}_{on}=g_n(\bar{x}_{on},t)u(v)+f_n(\bar{x}_{on},t)\\
		y_o=x_{o1}
	\end{array}
	\right.
	\label{eq_SFS}
\end{equation}
where $i=1,2,\cdots,n-1$ and $\bar{x}_{oi}=[x_{o1},\dots,x_{oi}]^T\in\mathcal{R}^i$. $\bar{x}_{on}=[x_{o1},\dots,x_{on}]^T\in\mathcal{R}^n$, $y_o\in\mathcal{R}$, $v\in\mathcal{R}$, and $u\in\mathcal{R}$ are the state vector, the system output, the to-be-designed controller, and actual control input provided by the actuator, respectively. The gain coefficients $g_i(\bar{x}_{oi},t)\in\mathcal{R}^i\times \mathcal{R}^+\rightarrow\mathcal{R}$ and unmatched dynamics $f_i(\bar{x}_{oi},t)\in\mathcal{R}^i\times \mathcal{R}^+\rightarrow\mathcal{R}$ are unknown time-varying nonlinear functions that locally Lipschitz in their arguments.

\begin{assumption}
	$u(v)=v$ for all time.
	\label{Assumption u=v}
\end{assumption}

\begin{assumption}
	The $i$th-order ($1\leq i\leq n-1$) derivatives of the reference signal $y_d$ can be obtained.
	\label{Assumption dyd}
\end{assumption}

\begin{assumption}
	\cite{bechlioulis2013output,wei2018robust,dimanidis2020output} The $i$th-order ($1\leq i\leq n-1$) derivatives of the system output $y_o$ are available.
	\label{Assumption dy}
\end{assumption}

\begin{assumption}
	The reference signal $y_d$ is a continuously differentiable and bounded function of time with bounded derivatives up to order $n$.
	\label{Assumption yd}
\end{assumption}

\begin{assumption}
	The initial states of system $x_{oi}(0)$ with $1\leq i \leq n$ are bounded.
	\label{Assumption x0}
\end{assumption}

\begin{assumption}
	For $1\leq i \leq n$, $g_i$, $f_i$ and those $(n-i)$th-order derivatives are continuous and bounded for $\|\bar{x}_{oi}\|\in L_\infty$.
	There exists an unknown positive constant $\underline{G}$ such that $\prod \limits_{i=1}^n g_i>\underline{G}$.
	\label{Assumption gf}
\end{assumption}

\begin{assumption}
	There are unknown time-varying input constraints $\underline{u}(t)<0<\bar{u}(t)$ on $u(v)$, defined as:
	\begin{equation}
	u(v)=\left\{
	\begin{aligned}
	&\underline{u},&v<\underline{u}\\
	&v,&\underline{u}\leq v\leq\bar{u}\\
	&\bar{u},&v>\bar{u}.
	\end{aligned}
	\right.,
	\label{eq_sat}
	\end{equation}
	Under these constraints, the system remains input-to-state stable (ISS) \cite{trakas2024adaptive} or input-to-state practically stable (ISpS) \cite{fotiadis2023input}.
	\label{Assumption sat}
\end{assumption}

\begin{remark}
	Considering the various possibilities of the inherent characteristics of the system, the reference signal, and the initial states, not all input constraints can ensure the system remains stable. This is particularly true for self-excited divergent systems, such as \( \dot{x} = x + u(v) \) with \( \bar{u}=-\underline{u} = 1 \), where if the reference signal or initial state leads to \( |x| > 1 \), no controller can drive the system back to stability. Therefore, in general, it is assumed that the system under input constraints remains ISS or ISpS.
\end{remark}

\emph{\textbf{Objective:}} For the system \eqref{eq_SFS}, under different assumptions as Table \ref{table_Objective}, the control objective is to design controllers such that the tracking error $z_1=y_o-y_d$ converges to prescribed accuracy $\epsilon_z$ within finite/fixed time whenever actuation limitations allow, or flexible prescribed accuracy when the input constraints are reached.

\begin{table*}[h!]
	\begin{center}
		\caption{Control Objective.}
		\begin{tabular}{c|l|l|l}
			\hline\hline
			\textbf{Type} & \textbf{Feedback} & \makecell[l]{\textbf{Assumptions}} & \makecell[l]{\textbf{Objective}} \\
			\hline
			\textbf{O1} & state feedback & Assumptions \ref{Assumption u=v}-\ref{Assumption gf} & \makecell[l]{prescribed accuracy} \\
			\hline
			\textbf{O2} & state feedback & Assumptions \ref{Assumption dyd}-\ref{Assumption sat} & \makecell[l]{flexible prescribed\\accuracy} \\
			\hline
			\textbf{O3} & output feedback & Assumptions \ref{Assumption yd}-\ref{Assumption sat} & \makecell[l]{flexible prescribed\\accuracy} \\
			\hline
			\bottomrule
		\end{tabular}
		\label{table_Objective}
	\end{center}
\end{table*}

\section{Algorithm Architecture}
\label{Algorithm_Architecture}
The design framework of this method includes four parts/steps:
\begin{enumerate}[Step A.]
	\item Fully actuated error system transformation.
	\item Fully actuated manifold design.
	\item Manifold constraint control.
	\item Manifold constraint control under input constraints.
	\item Differentiator-based manifold constraint control.
\end{enumerate}

\subsection{Fully Actuated Error System Transformation}
\label{subs_SF_A}
\noindent{\emph{\textbf{Step A1: Fully actuated system transformation}}}

After the differential homeomorphism transformation in the HFASA \cite{duan2021high2}, the system \eqref{eq_SFS} is transformed as follows:
\begin{equation}
	\left\{
	\begin{aligned}
		&\dot{x}_i=x_{i+1},\quad i=1,2,\cdots,n-1\\
		&\dot{x}_n=G_a(\bar{x}_{on},t)u+F_a(\bar{x}_{on},t)\\
		&y=x_1
	\end{aligned}
	\right.
	\label{eq_FAS}
\end{equation}
where total gain is $G_a=G_n$ and total disturbance is $F_a=F_n$ with $G_1=g_1$, $F_1=f_1$, $G_i=\prod \limits_{k=1}^i g_k$, and $F_i=\dot{F}_{i-1}+\dot{G}_{i-1}x_{oi}+G_{i-1}f_i$ when $i=2,\dots,n$.

\noindent{\emph{\textbf{Step A2: Fully actuated error system transformation}}}

Define system errors as:
\begin{equation}
	z_i=x_{i}-y_d^{(i-1)},\quad i=1,2,\cdots,n
	\label{eq_Se}
\end{equation}

The fully actuated errors system of \eqref{eq_FAS} is
\begin{equation}
	\left\{
	\begin{aligned}
		&\dot{z}_i=z_{i+1},\quad i=1,2,\cdots,n-1\\
		&\dot{z}_n=Gu+F
	\end{aligned}
	\right.
	\label{eq_FASe}
\end{equation}
where $G=G_a$ and $F=F_a-y_d^n$.

Define system the fully actuated state errors vector as:
\begin{equation}
\textbf{Z}(t)=\left[z_1(t),\dots,z_n(t)\right]^T\in\mathcal{R}^n.
\label{eq_Z}
\end{equation}

\subsection{Fully Actuated Manifold Design}
\label{subs_SF_B}
A manifold constructed by using all fully actuated errors is named fully actuated manifold.

\noindent{\emph{\textbf{Step B1: Iterative manifold design}}}

To construct a linear/nonlinear manifold with a general form for $n$th-order systems, negative feedback functions $h_{mi}(\bullet)$, where $i=1,\dots,n-1$, with the following properties, are introduced:
\begin{enumerate}
	\item $h_{mi}(0)=0$ and $\frac{\partial{h_{mi}(\bullet)}}{\partial{\bullet}}<0$, $\forall 1\leq i\leq n-1$.
	\item $h_{m(n-1)}(\bullet)$ is continuous and differentiable, and its derivative $\frac{\partial{h_{m(n-1)}(\bullet)}}{\partial{\bullet}}\in L_{\infty}$ for $\bullet\in L_{\infty}\bigcap \Omega_0^c$ with $\Omega_0^c:=\{\bullet||\bullet|> 0\}$.
	\item When $n>2$ and $i=1,\cdots,n-2$, $h_{mi}(\bullet)$ and its $(n-i)$th-order derivatives are continuous and bounded for bounded $\bullet$.
\end{enumerate}
$h_v$ is the inverse function of $h_{m(n-1)}$, which will be used later.

Referring to existing fast manifold design methods \cite{wu1998terminal,yang2020lyapunov,zhang2022nonsingular}, the fully actuated manifold of an $n$-order system can be constructed using iterative methods as follows
\begin{equation}
\left\{
	\begin{aligned}
	&s_1=z_1\\
	&s_i=\dot{s}_{i-1}-h_{m(i-1)}(s_{i-1}),\quad i=2,\cdots,n\\
	&s=s_n
	\end{aligned}
\right.
\label{eq_IS}
\end{equation}

It is equivalent to
\begin{equation}
s=z_n-\sum_{i=1}^{n-1}h_{mi}^{(n-1-i)}(s_i).
\label{eq_ISFM}
\end{equation}

\begin{remark}
	A general representation of the fully actuated manifold is a mapping $h_g\in \mathcal{R}^n\rightarrow\mathcal{R}$, formed as $s=h_g(\textbf{Z})$, which ensures that $\textbf{Z}$ tends towards $\textbf{0}$ when $s=0$.
	Referring to common manifold/filtered variable designs \cite{bechlioulis2013output,wei2018robust,dimanidis2020output,lv2023distributed,song2016adaptive,cao2020adaptive}, a linear fully actuated manifold is designed as $s=\sum_{i=1}^n a_iz_i$, where $a_i$ are chosen such that the polynomial $\sum_{i=1}^n a_i\partial^{i-1}$ is Hurwitz with $\partial$ as Laplacian operator.
\end{remark}

\noindent{\emph{\textbf{Step B2: Skewed manifold design}}}

First of all, introduce a positive exponent coefficients feedback curve as follows:
\begin{equation}
S_p(s_i):=\{(s_i,\dot{s}_i)|\dot{s_i}-h_{pi}(s_i)=0\}
\label{eq_Sp}
\end{equation}
where $h_{pi}(\bullet)$ is a function composed of one or more positive exponent coefficients feedback functions.

It is worth noting that simply constraining the fully actuated errors $\textbf{Z}$ near the linear manifold as \cite{bechlioulis2013output,wei2018robust,dimanidis2020output,lv2023distributed,song2016adaptive,cao2020adaptive} cannot directly preset steady-state accuracy by a control parameter. It also can not obtain the finite/fixed time convergence to the preset accuracy results by just replacing the linear manifold with a corresponding finite/fixed time manifold \cite{yang2020lyapunov,zhang2022nonsingular}.

Therefore, the following provides two skewed manifold designs to construct $h_{mi}$ based on positive exponent coefficients feedback $h_{pi}$.

\subsubsection{Smooth Skewed Manifold Design (SSMD)}

\begin{equation}
h_{mi}(s_i)=h_{pi}(s_i)+T_{1i}\epsilon_{si}
\label{eq_SSMD}
\end{equation}
\subsubsection{Nonsingular Skewed Manifold Design (NSMD)}

\begin{equation}
h_{mi}(s_i)=T_{2i}\left(h_{pi}(s_i)+T_{1i}\epsilon_{si}\right)-\left(1-T_{2i}\right)k_{pi}s_i
\label{eq_NSMD}
\end{equation}
where $\epsilon_{si}$ and $\epsilon_{zi}$ are positive constants which will be determined in specific use.
$T_{1i}=1-2\mathcal{S}\left(\frac{s_i+\epsilon_{zi}}{2\epsilon_{zi}}\right)$ is function constructed from double-ended smooth transition function defined in \eqref{eq_trnsS}.
$T_{2i}=\mathcal{U}\left(\frac{s_i}{\epsilon_{zi}}\right)$ is function constructed from interval smooth transition function defined in \eqref{eq_trnsU}.
$k_{pi}=k_{ppi}\frac{\epsilon_{si}-h_{pi}(\epsilon_{zi})}{\epsilon_{zi}}$ with positive constant $k_{ppi}$ that can adjust the slope of the $h_{mi}$ at $s_i=0$.

\begin{remark}
	Notably, as shown in Fig. \ref{fig_SMD}, if $\frac{\partial h_{pi}}{\partial s_i}$ is singular at $s_i=0$ with $\frac{\partial h_{pi}}{\partial s_i}(0)=-\infty$, SSMD can not change singularity, while NSMD can eliminate this singularity since $T_{2i}$ and its arbitrary order derivatives are $0$ at $s_i=0$. This guarantees that, when employing a nonlinear negative feedback function in conjunction with the iterative method \eqref{eq_IS} to construct nonlinear manifolds of order higher than two, singularities will not occur for all derivatives of $\mathcal{U}(z)$ at $z=0$ vanish.
\end{remark}

\begin{figure}[]
	\centering
	\begin{overpic}[width=0.8\columnwidth]{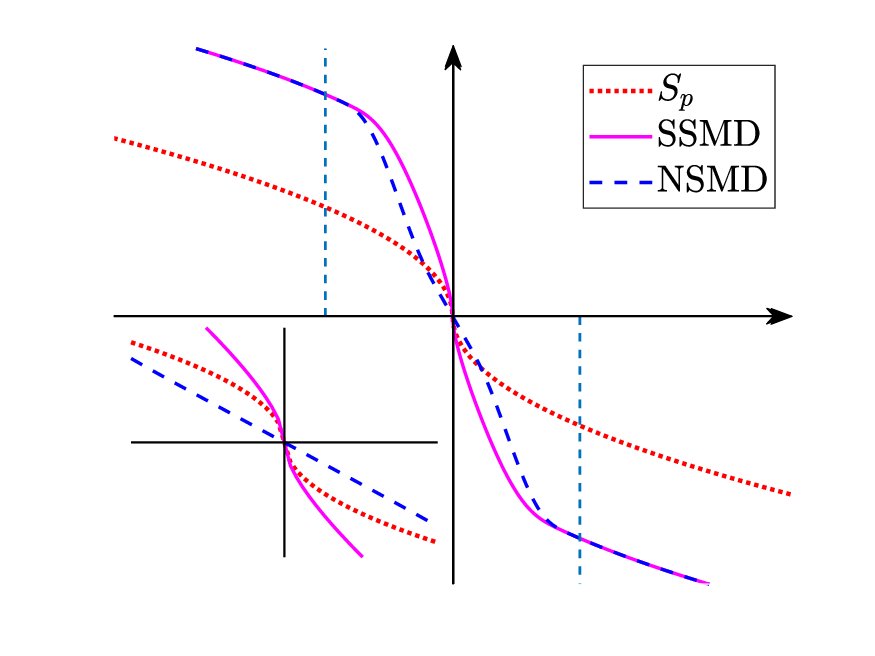}
		\put(88,35){$s_i$}
		\put(46,68){$\dot{s}_i$}
		\put(36,35){$-\epsilon_{zi}$}
		\put(62,35){$\epsilon_{zi}$}
	\end{overpic}
	\caption{Skewed manifold design diagram.}
	\label{fig_SMD}
\end{figure}

\subsection{Manifold Constraint Control}
\label{section_MCC}

The state with a manifold variable of $0$ is defined as zero manifold state 
\begin{equation}
\textbf{S}_0:=\{\textbf{Z}|s=0\}.
\label{eq_S0}
\end{equation}

\noindent{\emph{\textbf{Step C1: Define the boundaries of manifold constraint regions:}}}

Define the manifold translation function $s_m(x_m,y_m)$ as:
\begin{equation}
s_m(x_m,y_m)=\dot{s}_{n-1}-h_{m(n-1)}(s_{n-1}-x_m)-y_m
\label{eq_Sm}
\end{equation}

The upper boundary $\textbf{S}_U$ and lower boundary $\textbf{S}_L$ of the manifold constraint region are obtained by translating $\textbf{S}_0$ as:
\begin{equation}
\textbf{S}_U:=\{\textbf{Z}|s_m(x_U(t),y_U(t))=0\}
\label{eq_SU}
\end{equation}
\begin{equation}
\textbf{S}_L:=\{\textbf{Z}|s_m(x_L(t),y_L(t))=0\}
\label{eq_SL}
\end{equation}
in which $x_L(t)$, $x_U(t)$, $y_L(t)$, and $y_U(t)$ are time-varying boundary offsets that satisfy $x_L\leq 0\leq x_U$, $y_L\leq 0\leq y_U$, and $x_Ly_U=x_Uy_L$.

The manifold constraint region is:
\begin{equation}
\Psi_s:=\{\textbf{Z}|s_m(x_L,y_L)>0,s_m(x_U,y_U)<0\}.
\label{eq_Psis}
\end{equation}

To ensure that the initial state of the system is within the manifold constraint region $\Psi_s$ and establish constraints on performance, the boundary offsets should make the manifold constraint region large enough at the initial time, and the region $\Psi_s$ continues to shrink over time, and eventually reach its minimum at preset time $T_s$ and stop shrinking. Therefore, the prescribed performance function \eqref{eq_rho} is the ideal form of $-x_L(t)$, $x_U(t)$, $-y_L(t)$, and $y_U(t)$, e.g., as follows:
\begin{equation}
x_U=-x_L=\rho(k_0\rho_{x0},\epsilon_x,T_s)
\label{eq_xUxL}
\end{equation}
\begin{equation}
y_U=-y_L=\rho(k_0\rho_{y0},\epsilon_y,T_s)
\label{eq_yUyL}
\end{equation}
in which $\rho_{x0}=\max\{|s_{x0}|,\epsilon_x\}$ with $s_{x0}$ being the initial value of $|h_v(\dot{s}_{n-1})-s_{n-1}|$, and $\rho_{y0}=\max\{|s_{y0}|,\epsilon_y\}$ with $s_{y0}$ being the initial value of $|h_{m(n-1)}(s_{n-1})-\dot{s}_{n-1}|$. $\epsilon_x$ and $\epsilon_y$ are the final values of the lateral and vertical offset of the boundary, which can determine the size of the final constraint region $\Psi_s$. $k_0>1$ is a control coefficient that can adjust the initial control input since it can adjust the distance between the boundary $\textbf{S}_L,\textbf{S}_U$ and the initial state $(s_{n-1}(0),\dot{s}_{n-1}(0))$.

Manifold constraint can be divided into three types according to the direction of the boundary movement:

\subsubsection{Oblique Manifold Constraint (OMC)}
The boundary formed by $\textbf{S}_0$ oblique displacement: $x_Ly_U=x_Uy_L\neq0$
as in equation \eqref{eq_xUxL} and \eqref{eq_yUyL}.

\subsubsection{Longitudinal Manifold Constraint (LoMC)}
The boundary formed by $\textbf{S}_0$ vertical displacement: $x_L=x_U=0,y_Ly_U\neq0$ as in equation \eqref{eq_yUyL}.

\subsubsection{Lateral Manifold Constraint (LaMC)}
The boundary formed by $\textbf{S}_0$ horizontal displacement: $y_L=y_U=0,x_Lx_U\neq0$ as in equation \eqref{eq_xUxL}.

\begin{remark}
	For linear manifolds, the constraint boundaries obtained by using different translation methods are equivalent. Unlike linear manifolds, the constraint regions obtained by different translation methods are different for nonlinear manifolds.
	Although no examples of adapting to OMC are found in this paper, this case is presented as the most general form of manifold constraints for future reference.
\end{remark}
\begin{remark}
	Unlike common prescribed performance functions that choose a fixed value or $+\infty$ as the initial value $\rho_0$, this article uses multiples of the initial state values to construct the initial value of the performance function. This ensures that the initial state satisfies the constraints, while the initial value of the controller can be adjusted according to the coefficients $k_0$.
\end{remark}

\noindent{\emph{\textbf{Step C2: Calculate manifold constraint variables:}}}

In this step, a manifold constraint variable $\xi$ will be defined which can be used to represent the deviation of the state from the zero manifold state $\textbf{S}_0$ within the manifold constraint region $\Psi_s$.

The projection of zero manifold state $\textbf{S}_0$, upper bound of the manifold constraint $\textbf{S}_U$, and lower bound of the manifold constraint $\textbf{S}_L$ on the two-dimensional phase plane $(s_{n-1},\dot{s}_{n-1})$ are as follows:
\begin{equation}
S_0:=\{(x,y)|y=h_{m(n-1)}(x)\}
\label{eq_S0P}
\end{equation}
\begin{equation}
S_U:=\{(x,y)|y=h_{m(n-1)}(x-x_U)+y_U\}
\label{eq_SUP}
\end{equation}
\begin{equation}
S_L:=\{(x,y)|y=h_{m(n-1)}(x-x_L)+y_L\}
\label{eq_SLP}
\end{equation}

Define gradient lines as
\begin{equation}
S_G:=\left\{(x,y)|
\left\{
\begin{aligned}
&x_Uy=y_U(x-s_{n-1})+x_U\dot{s}_{n-1},&\text{OMC}\\
&x=s_{n-1},&\text{LoMC}\\
&y=\dot{s}_{n-1},&\text{LaMC}
\end{aligned}
\right.
\right\}
\label{eq_gradient}
\end{equation}

The intersection point between the gradient line $S_G$ and the manifold curve $S_0$ is $(x_c,y_c)$, which can be obtained by solving the following equation
\begin{equation}
x_U\left(h_{m(n-1)}(x_c)-\dot{s}_{n-1}\right)=y_U(x_c-s_{n-1})
\label{eq_xyc}
\end{equation}
in OMC.

Define the manifold constraint variable $\xi(\textbf{Z})$ as
\begin{equation}
\xi(\textbf{Z})=
	\left\{
	\begin{aligned}
	&2\frac{s_{n-1}-x_c-x_L}{x_U-x_L}-1,&\text{OMC}\\
	&2\frac{s-y_L}{y_U-y_L}-1,&\text{LoMC}\\
	&2\frac{s_{n-1}-h_v(\dot{s}_{n-1})-x_L}{x_U-x_L}-1,&\text{LaMC}
	\end{aligned}
	\right.
\label{eq_xi}
\end{equation}
When the state approaches the lower boundary $S_L$, $\xi$ tends towards $-1$, and when the state approaches the upper boundary $S_U$, $\xi$ tends towards $1$. Therefore, as long as the manifold constraint variable $\xi$ is constrained within $(-1,1)$, the $s$ can be constrained within the constraint region $\Psi_s$.

\noindent{\emph{\textbf{Step C3: Design manifold constraint controller:}}}

A manifold constraint controller is designed as follows:
\begin{equation}
v(\textbf{Z})=-k_u\Gamma(\xi(\textbf{Z}))
\label{eq_uT}
\end{equation}
where $k_u$ is positive constant. $\Gamma(\bullet)$ is a monotonic odd function mapped from interval $(-1,1)$ to interval $(-\infty,\infty)$, e.g., $\frac{\xi}{(1+\xi)(1-\xi)}$ and $\ln(\frac{1+\xi}{1-\xi})$.

\subsection{Manifold Constraint Control Under Input Constraints}

Taking into account the unknown input constraints, the error-driven flexible boundaries of the manifold constraints are designed as
\begin{equation}
\begin{aligned}
&\widetilde{x}_U=-\widetilde{x}_L=(1-\mathcal{T}_s)x_U+\mathcal{T}_s\frac{d_{sc}+\rho_e}{d_U}x_U\\
&\widetilde{y}_U=-\widetilde{y}_L=(1-\mathcal{T}_s)y_U+\mathcal{T}_s\frac{d_{sc}+\rho_e}{d_U}y_U
\end{aligned}
\label{eq_WxyUxyL}
\end{equation}
where $\mathcal{T}_s=\mathcal{S}\left(\frac{d_{sc}-(d_U-\rho_e)}{\rho_e}\right)$, $d_U=\sqrt{x_U^2+y_U^2}$, and $d_{sc}=\sqrt{(s_{n-1}-x_c)^2+(\dot{s}_{n-1}-y_c)^2}$ with $\rho_e\ll\sqrt{\epsilon_x^2+\epsilon_y^2}$ being a small positive constant.

The associated flexible constraint variable is defined as
\begin{equation}
\widetilde{\xi}(\textbf{Z})=
\left\{
\begin{aligned}
&2\frac{s_{n-1}-x_c-\widetilde{x}_L}{\widetilde{x}_U-\widetilde{x}_L}-1,&\text{OMC}\\
&2\frac{s-\widetilde{y}_L}{\widetilde{y}_U-\widetilde{y}_L}-1,&\text{LoMC}\\
&2\frac{s_{n-1}-h_v(\dot{s}_{n-1})-\widetilde{x}_L}{\widetilde{x}_U-\widetilde{x}_L}-1,&\text{LaMC}
\end{aligned}
\right.,
\label{eq_Wxi}
\end{equation}
and the corresponding flexible manifold constraint controller is designed as
\begin{equation}
v(\textbf{Z})=-k_u\Gamma\left(\widetilde{\xi}(\textbf{Z})\right).
\label{eq_WuT}
\end{equation}

\subsection{Differentiator-based Manifold Constraint Control}

The system \eqref{eq_FASe} can be rewritten as
\begin{equation}
\left\{
\begin{aligned}
&\dot{\textbf{Z}}=\mathcal{A}\textbf{Z}+\mathcal{B}(Gu+F)\\
&z_1=\mathcal{C}\textbf{Z}
\end{aligned}
\right.
\label{eq_FASM}
\end{equation}
where $\mathcal{A}=\begin{bmatrix}\textbf{0}_{n-1}&\textbf{I}_{n-1}\\0&\textbf{0}_{n-1}^T\end{bmatrix}, \mathcal{B}=\begin{bmatrix}\textbf{0}_{n-1}^T&1\end{bmatrix}^T, \mathcal{C}=\begin{bmatrix}1&\textbf{0}_{n-1}^T\end{bmatrix}$ in which $\textbf{I}_{n-1}\in\mathcal{R}^{(n-1)\times(n-1)}$ and $\textbf{0}_{n-1}\in\mathcal{R}^{n-1}$ are identity matrix and zero column vector, respectively.

A high gain differentiator used to obtain an estimate $\hat{\textbf{Z}}$ of the actual fully actuated errors $\textbf{Z}$ is designed as
\begin{equation}
\left\{
\begin{aligned}
&\dot{\hat{\textbf{Z}}}=\mathcal{A}\hat{\textbf{Z}}+H_\mu(z_1-\mathcal{C}\hat{\textbf{Z}})\\
&\hat{z}_1(0)=z_1(0)
\end{aligned}
\right.
\label{eq_HGD}
\end{equation}
with $H_\mu=diag(\frac{a_1}{\mu},\frac{a_2}{\mu^2},\dots,\frac{a_n}{\mu^n})$ where $\mu$ is a sufficiently small positive constant and $a_i(i=1,\dots,n)$ are chosen such that the polynomial $\partial^n+\sum_{i=1}^n a_i\partial^{n-i}$ is Hurwitz.

Replacing all the actual fully actuated errors $\textbf{Z}$ in the controller \eqref{eq_WuT}
with the estimate $\hat{\textbf{Z}}$, the output feedback controller is designed as
\begin{equation}
v(\hat{\textbf{Z}})=-k_u\Gamma\left(\widetilde{\xi}(\hat{\textbf{Z}})\right)
\label{eq_uOF}
\end{equation}

\section{Fully Actuated Manifold Constraint Theorem}

Similar to SMC, FAMCC has two stages: approaching the zero manifold $S_0$ before $T_s$ and moving along the vicinity of the zero manifold $S_0$ after $T_s$.

In this section, it is only proven that the controller has the ability to drive and constrain the system full drive error $\textbf{Z}$ to the vicinity of the zero manifold $S_0$, that is, the stage of approaching the manifold. The final ability to converge the output error $z_1$ to the preset accuracy $\epsilon_z$ in a finite/fixed time, i.e., the stage of moving along the vicinity of the manifold, will be demonstrated in subsequent corollaries.

\subsection{Lemma}

\begin{lemma}
	Considering the manifold design in \eqref{eq_IS} and the manifold constraint variable $\xi(\textbf{Z})$ in \eqref{eq_xi} for the system \eqref{eq_SFS} under Assumptions \ref{Assumption yd}-\ref{Assumption gf}, if $|\xi(\textbf{Z})|<1, \forall t\geq0$, then it holds that all $s_i(t)$ and $z_i(t), i=1,\dots,n$ remain bounded.
	\label{lemma1}
\end{lemma}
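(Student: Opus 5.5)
The plan is to turn the hypothesis $|\xi(\textbf{Z})|<1$ into a bound on the last manifold variable $s=s_n$. From there we descend the iterative chain \eqref{eq_IS}, using that each $h_{mi}$ is a strictly decreasing negative-feedback function with $h_{mi}(0)=0$. Finally we recover the $z_i$ from the $s_i$ through \eqref{eq_ISFM}.

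\emph{Step 1: boundedness of $s_n$ or of the offset from $S_0$.} In all three constraint types, $|\xi|<1$ means that $(s_{n-1},\dot s_{n-1})$ lies between the translated curves $S_L$ and $S_U$. The offsets $x_L,x_U,y_L,y_U$ are prescribed performance functions \eqref{eq_rho}, hence bounded by $k_0\rho_{x0}$ and $k_0\rho_{y0}$. These are finite by Assumptions \ref{Assumption yd}--\ref{Assumption x0}, since the initial states are bounded and so is the transformed initial error.
\begin{itemize}
\item For LoMC this gives $y_L<s<y_U$ directly, i.e.\ $\dot s_{n-1}=h_{m(n-1)}(s_{n-1})+\delta$ with $|\delta|\le \bar y$.
\item For LaMC we get $s_{n-1}-h_v(\dot s_{n-1})\in(x_L,x_U)$.
\item For OMC we get $s_{n-1}-x_c\in(x_L,x_U)$, with $(x_c,y_c)$ on $S_0$ along the gradient line.
\end{itemize}
In each case we can write $\dot s_{n-1}=h_{m(n-1)}(s_{n-1}-\delta_x)+\delta_y$ with $|\delta_x|\le\bar x$ and $|\delta_y|\le \bar y$ for known constants.

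\emph{Step 2: boundedness of $s_{n-1}$ by a scalar comparison argument.} This is the key step. Consider $V=\tfrac12 s_{n-1}^2$. Whenever $s_{n-1}>\bar x+\eta$, strict monotonicity gives $h_{m(n-1)}(s_{n-1}-\delta_x)\le h_{m(n-1)}(\eta)<0$. To make $\dot s_{n-1}<0$ we then need $|h_{m(n-1)}(\eta)|>\bar y$ for large $\eta$, and the symmetric condition for negative arguments. This holds for the skewed designs \eqref{eq_SSMD} and \eqref{eq_NSMD}, whose $h_{pi}$ contain positive-exponent feedback terms that grow unboundedly. The general case is the main obstacle: if $h_{m(n-1)}$ were merely decreasing and bounded, the offset $\bar y$ could dominate. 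I would therefore first try to use that the offsets shrink to $\epsilon_y$, and that $\epsilon_{s}$ and $\epsilon_{z}$ are chosen so that $|h_{m(n-1)}(\pm\epsilon_{z})|$ exceeds the final vertical offset. Failing that, I would add the implicit requirement that $h_{m(n-1)}$ is radially unbounded, as it is for the proposed designs. Either way the set $\{|s_{n-1}|\le \max(|s_{n-1}(0)|,\bar x+\eta)\}$ is invariant, so $s_{n-1}\in L_\infty$. Then $\dot s_{n-1}$ is bounded because $h_{m(n-1)}$ is continuous on bounded sets, and hence $s_n$ is bounded as well (in LaMC and OMC this follows from the relation in Step 1).

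\emph{Step 3: backward recursion and recovery of the $z_i$.} For $i=n-2,\dots,1$, the relation $\dot s_i=h_{mi}(s_i)+s_{i+1}$ with $s_{i+1}$ bounded is a scalar system of the same type. Rerunning Step 2 with $\delta_x=0$ and $\delta_y=s_{i+1}$ yields $s_i\in L_\infty$, and then $\dot s_i\in L_\infty$. Differentiating repeatedly shows that the derivatives $s_i^{(k)}$ for $k\le n-i$ are bounded. Here we use property 3 of $h_{mi}$, namely that its derivatives up to order $n-i$ are bounded on bounded arguments, together with an induction from the top of the chain: $s_i^{(k)}=\tfrac{d^{k-1}}{dt^{k-1}}\big(h_{mi}(s_i)+s_{i+1}\big)$ involves only lower-order derivatives of $s_i$ and derivatives of $s_{i+1}$ of order at most $n-i-1$, which are already bounded. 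Finally, since $s_1=z_1$, we have $z_k=s_1^{(k-1)}$, and the bounded derivatives of $s_1$ up to order $n-1$ give $z_1,\dots,z_n\in L_\infty$. By Assumption \ref{Assumption yd}, boundedness of $x_i=z_i+y_d^{(i-1)}$ follows as a byproduct.
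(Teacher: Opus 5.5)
Your proof is correct once the growth condition you flag is adopted, and it takes a genuinely different route from the paper.

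\textbf{The paper's route.} It rewrites \eqref{eq_IS} as the cascade $s=\prod_{i=1}^{n-1}\bigl(\frac{d}{dt}-\frac{h_{mi}}{s_i}\bigr)s_1$ and asserts a uniform pole bound $h_{mi}(s_i)/s_i<-r_i$. It reads $|s|<\bar s$ directly off $|\xi|<1$. It then invokes the linear-filter estimate of Proposition 2 in \cite{bechlioulis2013output} to get $|s_i|\le\bar s_i+\bar s/\prod_{j=i}^{n-1}r_j$ and $|z_i|\le\bar z_i+2^{n-1-i}\bar s/\prod_{j=i}^{n-1}r_j$. This buys quantitative, filter-type bounds, but it has two weak points. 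First, the global sector condition does not follow from the listed properties of $h_{mi}$. It actually fails for the sublinear feedback \eqref{eq_fn}, where $h_{m1}(s_1)/s_1\to0$ as $|s_1|\to\infty$. Second, in LaMC and OMC the constraint only gives $s=h_{m(n-1)}(s_{n-1}-\delta_x)-h_{m(n-1)}(s_{n-1})+\delta_y$. For superlinear $h_{m(n-1)}$ (e.g.\ $-x-x^3$) this is not bounded a priori unless $s_{n-1}$ is already known to be bounded.

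\textbf{What your route buys.} Your scalar invariance argument avoids both points. The unified form $\dot s_{n-1}=h_{m(n-1)}(s_{n-1}-\delta_x)+\delta_y$ lets you bound $s_{n-1}$ before $s_n$. It needs only that $|h_{mi}|$ eventually exceeds the relevant disturbance bound; radial unboundedness suffices. In LaMC, $\delta_y=0$, so nothing is needed at the top level. The argument therefore covers the finite-time and fixed-time designs alike. Your derivative induction recovering $z_k=s_1^{(k-1)}$ uses property 3 to exactly the required order.

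\textbf{The growth hypothesis is necessary.} The lemma fails under the listed properties alone. Take $n=2$, $h_{m1}=-\tanh$, LoMC with $\epsilon_y>1$, and $s\equiv c\in(1,\epsilon_y)$. Then $|\xi|<1$, but $\dot z_1=c-\tanh(z_1)\ge c-1>0$, so $z_1$ is unbounded.

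\textbf{Two small tightenings.}
\begin{enumerate}
\item Impose radial unboundedness on every $h_{mi}$, not only $h_{m(n-1)}$. In Step 3 the bound on $s_{i+1}$ depends on initial conditions rather than on design constants, so the ``final offset'' alternative does not transfer to lower levels.
\item If you use the final-offset alternative at level $n-1$, invariance holds only for $t\ge T_s$. Cover $[0,T_s]$ by noting that $\dot s_{n-1}<\bar y$ whenever $s_{n-1}>\bar x$, and symmetrically. This gives at most linear growth and excludes escape on a finite interval.
\end{enumerate}
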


\textbf{Proof:} See Appendix \ref{Proof_Lemma}.$\hfill\blacksquare$

\subsection{State Feedback}

\begin{theorem}
	For the system \eqref{eq_SFS}, under Assumptions \ref{Assumption u=v}-\ref{Assumption gf}, if the controller \eqref{eq_uT} is applied, then it holds that:
	\begin{enumerate}[(\romannumeral1)]
		\item all closed-loop signals are bounded for all $t\geq0$;
		\item $\textbf{Z}(t)\in\Psi_s$ in \eqref{eq_Psis}, and $|\xi(\textbf{Z})|<1$ for all $t\geq0$;
		\item there exist positive constants $\widetilde{u}$ such that $|u|\leq\widetilde{u},\forall t\geq0$.
	\end{enumerate}
	\label{theoremSF}
\end{theorem}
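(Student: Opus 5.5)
The plan is the standard low-complexity prescribed-performance argument of \cite{bechlioulis2014low,bechlioulis2013output}. It has three parts: local existence of a solution with $|\xi|<1$, a uniform bound on $\xi$ away from $\pm1$, and a continuation argument. For concreteness I treat LoMC, where $\xi=2\frac{s-y_L}{y_U-y_L}-1=s/y_U$; OMC and LaMC differ only in how the derivative of $\xi$ is computed, through $x_c$ or $h_v$.

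\textbf{Step 1: local solution.} By the choice $\rho_{y0}=\max\{|s_{y0}|,\epsilon_y\}$ and $k_0>1$, the initial state satisfies $|\xi(\textbf{Z}(0))|<1$. The set $\Omega_\xi=\{(\textbf{Z},t):|\xi|<1\}$ is open and nonempty. On $\Omega_\xi$ the closed loop \eqref{eq_FASe} with $v=-k_u\Gamma(\xi)$ is locally Lipschitz in $\textbf{Z}$ and continuous in $t$. This uses the local Lipschitz property of $g_i,f_i$, item 2 of the $h_{m(n-1)}$ properties, and the smoothness of $\rho$. So a unique maximal solution exists on $[0,\tau_{\max})$ with $|\xi|<1$ there.

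\textbf{Step 2: a uniform bound on $\xi$.} On $[0,\tau_{\max})$, Lemma \ref{lemma1} gives that all $s_i$ and $z_i$ are bounded. By Assumptions \ref{Assumption yd}, \ref{Assumption yd} and \ref{Assumption gf} and the diffeomorphism of Step A1, $\bar{x}_{on}$ is bounded. Hence $F$ is bounded, and $G\ge\underline{G}>0$ is bounded as well.

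Differentiating \eqref{eq_ISFM} gives
\[
\dot s=\dot z_n-\frac{d}{dt}\sum_{i=1}^{n-1}h_{mi}^{(n-1-i)}(s_i)=Gv+\Phi,
\]
where $\Phi$ collects $F$ and the derivatives of the manifold terms. These derivatives involve only $z_1,\dots,z_n$ and are bounded by properties 2--3 of $h_{mi}$, so $|\Phi|\le\bar\Phi$ on $[0,\tau_{\max})$. Then
\[
\dot\xi=\frac{1}{y_U}\left(-Gk_u\Gamma(\xi)+\Phi-\xi\dot y_U\right),
\]
with $y_U\ge\epsilon_y$ and $|\dot y_U|$ bounded.

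Now take the Lyapunov function $V=\frac12\xi^2$. Whenever $|\xi|$ is close enough to $1$ that $\underline{G}k_u|\Gamma(\xi)|>\bar\Phi+\sup|\dot y_U|$, we get $\dot V<0$. Since $\Gamma$ is odd, monotone and unbounded, such a threshold $\bar\xi<1$ exists. Therefore $|\xi(t)|\le\max\{|\xi(0)|,\bar\xi\}<1$ on $[0,\tau_{\max})$.

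\textbf{Step 3: continuation.} The bound from Step 2 keeps the solution in a compact subset of $\Omega_\xi$. A standard theorem on maximal solutions then gives $\tau_{\max}=\infty$, which proves (ii).

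Boundedness of $v=-k_u\Gamma(\xi)$ follows from $|\xi|\le\bar\xi$, giving (iii) with $\widetilde u=k_u\Gamma(\max\{|\xi(0)|,\bar\xi\})$. Together with Lemma \ref{lemma1}, this also gives (i).

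\textbf{Main obstacle.} The delicate step is bounding $\Phi$ in Step 2. The difficulty is the term $\frac{d}{dt}h_{m(n-1)}(s_{n-1})=h'_{m(n-1)}\dot s_{n-1}$, whose derivative is only guaranteed bounded away from $s_{n-1}=0$ (item 2 of the $h_{m(n-1)}$ properties). I would try to handle this in one of two ways:
\begin{itemize}
\item restrict to NSMD, where $T_{2i}$ removes the singularity; or
\item observe that the bound only needs to hold where $|\xi|$ is near $1$, and show that there $s_{n-1}$ stays away from the singular point or the product remains bounded.
\end{itemize}
For OMC one must additionally show that $x_c$ is well defined and differentiable. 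This follows from the implicit function theorem, since the monotonicity of $h_{m(n-1)}$ guarantees a unique intersection of $S_G$ and $S_0$.
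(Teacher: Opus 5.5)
\paragraph{Where the proposal matches the paper.} When $\partial h_{m(n-1)}/\partial s_{n-1}$ is bounded (the paper's ``situation A''), your argument is correct and is the paper's in different packaging. The paper argues by contradiction at the first violation time; you use a maximal solution and continuation. Both rest on $\dot\xi=-AGk_u\Gamma(\xi)+AF+D$ with $A\ge\underline{A}>0$ and $|D|\le\bar D$.

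\paragraph{The gap: the singular case.} The case $\partial h_{m(n-1)}/\partial s_{n-1}\to-\infty$ at $0$ is explicitly allowed by property~2 of $h_{m(n-1)}$. The theorem must cover it, because Corollary~\ref{corollary_finite} ($h_{m1}=-k_c\lceil s_1\rfloor^p$) and Corollary~\ref{corollary_fixedve} (SSMD, which does not remove the singularity) depend on it. You leave this case open, and neither of your fixes works.
\begin{itemize}
\item Restricting to NSMD only proves a weaker theorem.
\item Your second suggestion is false for LoMC. The point $(s_{n-1},\dot s_{n-1})=(0,y_U)$ lies on $S_U$. So arbitrarily close to $|\xi|=1$ you can have $s_{n-1}\approx 0$ with $\dot s_{n-1}\approx y_U\neq 0$, and there $h_{m(n-1)}'(s_{n-1})\dot s_{n-1}$ is unbounded.
\end{itemize}
There is a second failure mode you miss, hidden by your claim that OMC and LaMC differ ``only in how the derivative of $\xi$ is computed.'' For those two types the singularity enters not $\Phi$ but the coefficient $A$ of $Gv$ in $\dot\xi$. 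That coefficient is $A\propto -h_v'(\dot s_{n-1})$ for LaMC and $A\propto -1/(h_{m(n-1)}'(x_c)-y_U/x_U)$ for OMC, and it vanishes at the singular point. For example, $h_v(y)=-\lceil y/k_c\rfloor^{1/p}$ gives $h_v'(0)=0$ at the LaMC boundary point $(x_U,0)$. Your Step~2 domination $\underline{G}k_u|\Gamma(\xi)|>\cdots$ needs $A$ bounded below, so it is unavailable exactly where it is needed.

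\paragraph{The missing idea.} This is the paper's ``situation B'', which it only sketches. Near the singularity, stop working with $\dot\xi$ and work instead with
\[
\ddot s_{n-1}=\dot z_n-\sum_{i=1}^{n-2}h_{mi}^{(n-i)}(s_i)=-Gk_u\Gamma(\xi)+F-A_M .
\]
Here the control enters with the full gain $G\ge\underline{G}$, and $A_M$ is bounded and does not involve $h_{m(n-1)}$. For LoMC this is visibly the right variable: $\dot s_{n-1}=s+h_{m(n-1)}(s_{n-1})$, so $\frac{d}{dt}\bigl(s+h_{m(n-1)}(s_{n-1})\bigr)=\ddot s_{n-1}$, and the offending term is an exact derivative.

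The paper's argument runs as follows.
\begin{itemize}
\item It isolates a strip $|s_{n-1}|<\delta$, with $h_{m(n-1)}(\delta-x_U)=-y_U$, and reuses the situation-A argument outside the strip.
\item Inside the strip $|\dot s_{n-1}|$ is bounded, so the time $t_{mx}$ to cover the horizontal distance to the boundary is bounded below.
\item Once $|\xi|$ is close enough to $1$, the time $t_{my}$ to drive $\dot s_{n-1}$ to zero is shorter than $t_{mx}$.
\item So the trajectory reverses before reaching $S_U$ or $S_L$.
\end{itemize}
You need to supply some argument of this kind.

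\paragraph{A smaller point.} In the singular case the closed loop is not locally Lipschitz at $s_{n-1}=0$, since $\lceil s_{n-1}\rfloor^p$ is only H\"older there. Step~1 should therefore use Peano's theorem for existence, not claim a unique solution.
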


\textbf{Proof:} See Appendix \ref{Proof_theoremSF}.$\hfill\blacksquare$

\subsection{State Feedback Under Input Constraints}

\begin{theorem}
	For the system \eqref{eq_SFS}, under Assumptions \ref{Assumption dyd}-\ref{Assumption sat}, if the controller \eqref{eq_WuT} is applied with sufficiently small \(\rho_e\), then it holds that:
	\begin{enumerate}[(\romannumeral1)]
		\item When the actuator output capability is sufficient, the conclusion is same as the Theorem \ref{theoremSF} (\romannumeral1) and (\romannumeral2);
		\item When the actuator output capability is insufficient, it can be ensured that the fully actuated errors $\textbf{Z}(t)$ remains within a flexible constraint set $\widetilde{\Psi}_s$
		\begin{equation}
		\widetilde{\Psi}_s:=\{\textbf{Z}|s_m(\widetilde{x}_L,\widetilde{y}_L)>0,s_m(\widetilde{x}_U,\widetilde{y}_U)<0\}.
		\label{eq_WPsis}
		\end{equation}
		The flexible constraint boundaries \eqref{eq_WxyUxyL} will be flexibly expanded after the controller \eqref{eq_WuT} reaches input constraints \eqref{eq_sat}, ensuring the controller \eqref{eq_WuT} is well-defined and the closed-loop system continues to operate effectively. Furthermore, the constraint boundaries are fully restored to their original values (i.e., \(\widetilde{x}_U=-\widetilde{x}_L=x_U=-x_L\) and \(\widetilde{y}_U=-\widetilde{y}_L=y_U=-y_L\)) when the controller exits saturation.
	\end{enumerate}
	\label{theoremSFIC}
\end{theorem}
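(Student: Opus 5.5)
The plan is to follow the template of the proof of Theorem \ref{theoremSF} and add the extra structure that comes from the flexible boundaries \eqref{eq_WxyUxyL}.

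First I would examine the switching term $\mathcal{T}_s=\mathcal{S}\left(\frac{d_{sc}-(d_U-\rho_e)}{\rho_e}\right)$. When $d_{sc}\leq d_U-\rho_e$, the state is at least $\rho_e$ inside the original region, so $\mathcal{T}_s=0$. In that case $\widetilde{x}_U=x_U$, $\widetilde{y}_U=y_U$ and $\widetilde{\xi}=\xi$, and the controller \eqref{eq_WuT} coincides with \eqref{eq_uT}.

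For part (i), assume the actuator is not saturated along the trajectory, so $u=v$. I would rerun the argument of Theorem \ref{theoremSF} on the maximal solution interval $[0,\tau_{\max})$, using the open set $\Omega_{\widetilde{\xi}}=\{(\textbf{Z},t):|\widetilde{\xi}|<1\}$. The initial condition lies in this set because $k_0>1$.
\begin{itemize}
\item Differentiating $\widetilde{\xi}$ gives $\dot{\widetilde{\xi}}=\frac{2}{\widetilde{x}_U-\widetilde{x}_L}\big(\cdots+G(-k_u\Gamma(\widetilde{\xi}))+F\big)$ in LaMC and OMC, with an analogous expression through $s$ in LoMC.
\item Lemma \ref{lemma1} applied with $\widetilde{\xi}$ yields boundedness of $s_i$, $z_i$, and hence of $F$ and of the remaining terms.
\item Assumption \ref{Assumption gf} supplies the gain bound $G>\underline{G}$.
\item The Lyapunov function $V=\frac12\Gamma^2(\widetilde{\xi})$ then gives $\dot V<0$ whenever $|\Gamma(\widetilde{\xi})|$ exceeds a computable constant. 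Therefore $\Gamma(\widetilde{\xi})$ is bounded, $|\widetilde{\xi}|\leq\bar{\xi}<1$, and $\tau_{\max}=\infty$.
\end{itemize}
To recover the original constraint, I would show that in the unsaturated case the bound can be taken below the level at which $\mathcal{T}_s$ activates. A small enough $\rho_e$ makes the activation layer thinner than the margin $1-\bar{\xi}$ rescaled by $d_U$. Then $\mathcal{T}_s\equiv0$, and conclusions (i) and (ii) of Theorem \ref{theoremSF} follow verbatim.

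For part (ii), when saturation occurs I would first invoke Assumption \ref{Assumption sat}. ISS/ISpS of the saturated plant with bounded $y_d$ gives boundedness of $\textbf{Z}$ independently of the controller, hence boundedness of $d_{sc}$. The key geometric observation concerns the region where $\mathcal{T}_s\to1$. There the boundaries scale like $\frac{d_{sc}+\rho_e}{d_U}(x_U,y_U)$, so the distance from the state to the boundary is about $\rho_e$ times a positive factor. This keeps the state strictly inside $\widetilde{\Psi}_s$, i.e. $|\widetilde{\xi}|<1$, by a direct computation of $\widetilde{\xi}$ as a function of $d_{sc}$. I would carry out this computation separately for LaMC and LoMC, where $x_c$ or $y_c$ is explicit, and then for OMC using the gradient line \eqref{eq_gradient}. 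The consequence is that $v$ stays finite and well-defined. Restoration then follows directly: after the controller exits saturation, part (i) drives $\widetilde{\xi}$ into the interior, so $d_{sc}$ decreases below $d_U-\rho_e$, $\mathcal{T}_s$ returns to $0$, and the boundaries equal $x_U,y_U$.

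The main obstacle is proving $|\widetilde{\xi}|<1$ uniformly in the transition layer $0<\mathcal{T}_s<1$. In this layer the boundaries are a convex combination of the original boundaries and the state-driven ones, and $\widetilde{x}_U$ depends on the state itself. I would first attempt a monotonicity argument showing that $|\widetilde{\xi}|$, written as a function of $d_{sc}$ along the gradient direction, is at most $\frac{d_{sc}}{(1-\mathcal{T}_s)d_U+\mathcal{T}_s(d_{sc}+\rho_e)}$. This quantity is strictly less than $1$ whenever $d_{sc}<d_U$, and also whenever $\mathcal{T}_s=1$. If this fails for nonlinear $h_{m(n-1)}$ in OMC, I would fall back on the requirement that $\rho_e$ be sufficiently small relative to $\epsilon_x,\epsilon_y$.
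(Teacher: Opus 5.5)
Your treatment of the regime $\mathcal{T}_s=0$ and your formula for $|\widetilde{\xi}|$ are fine. In fact that formula is an identity for all three constraint types: in OMC, along the gradient line, $d_{sc}=|s_{n-1}-x_c|\,d_U/x_U$, and LaMC and LoMC are immediate. Since $\mathcal{T}_s<1$ iff $d_{sc}<d_U$, it gives $|\widetilde{\xi}|<1$ everywhere, so no OMC fallback is needed.

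The genuine gap is that you never use the smallness of $\rho_e$ in the way the statement requires. The paper's key step, encoded in the choice of $\bar{\rho}_e$ in \eqref{eq_brhoe}, is that the controller is already saturated before any expansion can occur. Concretely, on the activation layer $\{\mathcal{T}_s>0\}\subseteq\{d_{sc}>d_U-\rho_e\}$ one has $|\widetilde{\xi}|\geq\frac{d_U-\rho_e}{d_U+\rho_e}$. This tends to $1$ as $\rho_e\to0$, uniformly in $t$, because $d_U$ is bounded below by its final value. So for small $\rho_e$ (with $\bar{u},\underline{u}$ bounded) the command $|v|=k_u|\Gamma(\widetilde{\xi})|$ exceeds $\max\{\bar{u},-\underline{u}\}$ throughout the layer. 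This single inclusion gives three things:
\begin{itemize}
\item the ordering asserted in (ii): saturation first, expansion afterwards;
\item (i) immediately, since an unsaturated trajectory never has $\mathcal{T}_s>0$, so its closed loop is literally that of Theorem \ref{theoremSF};
\item restoration at the exit instant itself, since $u=v$ forces $\mathcal{T}_s=0$.
\end{itemize}
Your restoration step does not deliver this. The claim that ``part (i) drives $\widetilde{\xi}$ into the interior'' presupposes a trajectory that starts in $\Psi_s$ and stays unsaturated, and neither is known after exit. Even if it worked, it would give only eventual restoration, not restoration when the controller exits saturation.

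Your part (i) as written also contains a flawed step. Invoking Lemma \ref{lemma1} ``with $\widetilde{\xi}$'' is vacuous, because $|\widetilde{\xi}|<1$ holds on the entire state space. It therefore gives no bound on $\textbf{Z}$, and the constants in your Lyapunov estimate are not available. Moreover, $V=\frac12\Gamma^2(\widetilde{\xi})$ need not decrease inside the layer, because $|\widetilde{\xi}|$ is not monotone in $d_{sc}$ there. For example, with $\mathcal{S}$ from \eqref{eq_trnsS}, $\partial|\widetilde{\xi}|/\partial d_{sc}\approx-\frac{1}{2d_U}$ at $\mathcal{T}_s=\frac12$ for small $\rho_e$. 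So the control term enters $\dot{\widetilde{\xi}}$ with the wrong sign, even though it still pushes the state inward.

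The repair is to argue on the first-exit interval from $\{\mathcal{T}_s=0\}$, where $\widetilde{\xi}=\xi$ and $|\xi|=d_{sc}/d_U$. There you can borrow the $\rho_e$-independent margin $\bar{\xi}$ of Theorem \ref{theoremSF} and take $\rho_e<(1-\bar{\xi})\inf_t d_U$ to rule out exit. That is a legitimate alternative route to (i). It still leaves the ordering and restoration claims of (ii) unproved without the layer-inside-saturation observation.
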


\textbf{Proof:} See Appendix \ref{Proof_theoremSFIC}.$\hfill\blacksquare$

\begin{remark}
	It is worth noting that the designed flexible controller has the following new features:
	\begin{enumerate}[a.]
		\item The flexible constraint designed does not rely on input constraint information as \cite{trakas2024adaptive,berger2024input}, enabling a more thorough model-free control approach;  
		\item Unlike existing adaptive flexible boundaries \cite{trakas2024adaptive,berger2024input}, which recover slowly and cannot return to the original constraints within finite time, the proposed flexible controller can immediately restore the original constraints once saturation ends.
		\item The required parameter \(\rho_e\) is sufficiently small and can be easily tuned for successful implementation. Even if the selected \(\rho_e\) is not sufficiently small, the controller will continue to increase until the input constraints are reached, thereby ensuring the full utilization of the actuator's output capacity.
	\end{enumerate}
\end{remark}

\subsection{Output Feedback Under Input Constraints}
\label{subsection_OutputFeedback}

\begin{theorem}
	For the system \eqref{eq_SFS}, under Assumptions \ref{Assumption yd}-\ref{Assumption sat}, if the differentiator \eqref{eq_HGD} and the controller \eqref{eq_uOF} are applied, then there exits a constant $\bar{\mu}\in(0,1)$ such that the conclusion is same as Theorem \ref{theoremSFIC} when $0<\mu<\bar{\mu}$.
	\label{theoremOF}
\end{theorem}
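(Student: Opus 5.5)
The plan is the standard high-gain observer separation argument (Esfandiari–Khalil / Atassi–Khalil style), as used in \cite{bechlioulis2013output,dimanidis2020output}. It combines three ingredients: the state-feedback result of Theorem \ref{theoremSFIC}, the boundedness property of Lemma \ref{lemma1}, and the fact that the saturated input \eqref{eq_sat} is globally bounded.

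\textbf{Scaled error coordinates.} Introduce the scaled estimation error $\eta_i=(z_i-\hat z_i)/\mu^{n-i}$, $i=1,\dots,n$. From \eqref{eq_FASM} and \eqref{eq_HGD} this gives
\[
\mu\dot\eta=A_0\eta+\mu\mathcal{B}(Gu+F),
\]
where $A_0$ is the companion matrix of $\partial^n+\sum a_i\partial^{n-i}$ and is therefore Hurwitz. Let $P_0$ solve $A_0^TP_0+P_0A_0=-I$ and set $W=\eta^TP_0\eta$. Because $\hat z_1(0)=z_1(0)$, we have $\eta_1(0)=0$, while the remaining components satisfy $\|\eta(0)\|=O(1/\mu^{n-1})$.

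\textbf{Compact set and peaking.} Use Assumption \ref{Assumption x0}, the choice $\rho_{x0},\rho_{y0}$ scaled by $k_0>1$, and the boundedness of the flexible boundaries in \eqref{eq_WxyUxyL}. Together with Lemma \ref{lemma1}, these show that the set $\Omega_c=\{\textbf{Z}:|\widetilde\xi(\textbf{Z})|\le c<1\}$ (slightly enlarged) is contained in a compact set on which $\textbf{Z}$, $G$ and $F$ are bounded. By Assumption \ref{Assumption gf} and the bounded reference, these are bounded by a constant $k_F$ independent of $\mu$. The essential point is that $u=u(v)$ is saturated, so $|u|\le\max\{\bar u,-\underline u\}$ regardless of $\hat{\textbf{Z}}$. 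The peaking of $\hat{\textbf{Z}}$ therefore cannot propagate into the plant, and no artificial saturation of the controller is needed.

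Since $\|\textbf{Z}(0)\|$ and $u$ are bounded, $\textbf{Z}$ stays inside $\Omega_c$ on an interval $[0,T_0]$ whose length $T_0$ is independent of $\mu$. Here $c$ is chosen via Theorem \ref{theoremSFIC} so that the state-feedback trajectory satisfies $|\widetilde\xi|<c$ with a margin. On $[0,T_0]$ we have
\[
\dot W\le-\frac{1}{\mu}\|\eta\|^2+2\|P_0\|k\|\eta\|,
\]
hence $W$ decays exponentially with rate $1/\mu$ from a level $O(\mu^{-2(n-1)})$. It follows that for $\mu$ small, $\|\eta(t)\|\le k_1\mu$ for all $t\in[T(\mu),T_0]$, where $T(\mu)=O(\mu\ln(1/\mu))\to0$.

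\textbf{Closeness and invariance.} After $T(\mu)$ we have $\|\textbf{Z}-\hat{\textbf{Z}}\|\le k_1\mu$. The map $\textbf{Z}\mapsto\widetilde\xi(\textbf{Z})$ is locally Lipschitz on compacts away from the singularities of $h_{m(n-1)}$; NSMD removes these singularities, and $h_v$ is locally Lipschitz there by property 2 of $h_{m(n-1)}$. This Lipschitz property is what I expect to be the main obstacle, because $x_c$ is defined implicitly in OMC and $h_v$ appears in LaMC. I would handle it through the implicit function theorem, using $h'_{m(n-1)}<0$, which guarantees that the intersection is transversal.

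Granting the Lipschitz property, we obtain $|\widetilde\xi(\hat{\textbf{Z}})-\widetilde\xi(\textbf{Z})|\le L\mu$. The output-feedback controller is then a perturbation of \eqref{eq_WuT} of size $O(\mu)$, because $\Gamma$ is smooth on $[-c',c']$ with $c<c'<1$. Repeat the barrier argument of Theorem \ref{theoremSFIC}, i.e.\ show that $\widetilde\xi$ is repelled from $\pm c$ because the control term dominates $F$ near the boundary. The extra $O(\mu)$ term is absorbed by choosing $\mu<\bar\mu$. Hence $\textbf{Z}$ never reaches $\partial\Omega_c$, so $T_0$ can be taken to be $\infty$, and the estimate $\|\eta\|\le k_1\mu$ persists for all $t\ge T(\mu)$.

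Both regimes of Theorem \ref{theoremSFIC} then follow exactly as in the state-feedback proof, with $\mathcal{T}_s$ evaluated at $\hat{\textbf{Z}}$: the non-saturated case gives $\textbf{Z}\in\Psi_s$, and in the saturated case the boundaries expand flexibly and are restored once saturation ends. Boundedness of all closed-loop signals follows from Lemma \ref{lemma1} together with the bound on $\eta$.
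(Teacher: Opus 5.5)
Your route is the paper's own. You use the scaled errors $(z_i-\hat z_i)/\mu^{n-i}$ and a quadratic Lyapunov bound decaying at rate of order $1/\mu$ while $\textbf{Z}$ stays in a compact set for a $\mu$-independent time. You then use Lipschitz continuity of $v$ and treat the output-feedback controller as a small perturbation of Theorem \ref{theoremSFIC}. You are more explicit than the paper about why peaking is harmless, since the physical constraint bounds $u$ whatever $\hat{\textbf{Z}}$ does.

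\textbf{Main gap: extending $T_0$ to $\infty$.} Your invariance argument for $\Omega_c$ needs the control term to dominate $F$ near $|\widetilde\xi|=c$. That is exactly what is lost in the insufficient-capability regime (ii). With $u$ pinned at $\bar u$ or $\underline u$, $d_{sc}$ can keep growing, and once $\mathcal{T}_s=1$ one has $|\widetilde\xi|=d_{sc}/(d_{sc}+\rho_e)$, which exceeds any preassigned $c<1$. So $\Omega_c$ is not invariant in that regime, and neither the all-time bound $\|\eta\|\le k_1\mu$ nor boundedness of the closed loop follows. Lemma \ref{lemma1} cannot fill this. It presupposes $|\xi|<1$ with the original bounded offsets, whereas the flexible offsets are state-dependent and not bounded a priori. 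The missing ingredient is the ISS/ISpS clause of Assumption \ref{Assumption sat}. Together with $|u|\le\max\{\bar u,-\underline u\}$, it must be invoked to give a $\mu$-independent bound on $\textbf{Z}$, and hence on $Gu+F$, throughout the saturated episode. Then the observer estimate persists, $\hat{\textbf{Z}}$ is $O(\mu)$-close to $\textbf{Z}$ when saturation ends, and the restoration of the boundaries carries over. The paper's one-line appeal to Theorem \ref{theoremSFIC} tacitly relies on the same fact, but your write-up replaces it with a barrier argument that is false in this regime.

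\textbf{Smaller point: the Lipschitz estimate.} The bound $|\widetilde\xi(\hat{\textbf{Z}})-\widetilde\xi(\textbf{Z})|\le L\mu$ is not available for every design the theorem covers. Take LoMC with SSMD and $h_{p1}'(0)=-\infty$, the setting of Corollary \ref{corollary_fixedve}. There $s=z_2-h_{m1}(z_1)$ is only H\"older of order $r_b$ at $z_1=0$, a point the steady-state trajectory keeps passing through. Appealing to NSMD does not cover this case. The repair is easy: continuity on the compact set gives a modulus $\omega$ with $|\widetilde\xi(\hat{\textbf{Z}})-\widetilde\xi(\textbf{Z})|\le\omega(k_1\mu)\to0$, which is all the perturbation argument needs. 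The paper's unqualified claim that the controller is Lipschitz has the same defect.
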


\textbf{Proof:} See Appendix \ref{Proof_theoremOF}.$\hfill\blacksquare$

\begin{remark}
	The proof of the \textbf{Theorem} \ref{theoremOF} can also refer to \cite{bechlioulis2013output} and \cite{dimanidis2020output}.
	Since $z_1=y_o-y_d$ is known, setting the initial value of the differentiator $\hat{z}_1(0)$ to be the same as $z_1(0)$ in \eqref{eq_HGD} can make the estimated value converge faster.
\end{remark}

\subsection{Corollaries}

According to the \textbf{Theorem} \ref{theoremSF}, \textbf{Theorem} \ref{theoremSFIC}, and \textbf{Theorem} \ref{theoremOF}, the controller \eqref{eq_uT}, \eqref{eq_WuT}, and \eqref{eq_uOF} only implements constraints on the manifold, i.e., $\textbf{Z}$ always stays within $\Psi_s$.
The performance of the system output is related to the constructed manifold, the selected manifold constraint type, and their parameters.
The controller design provided in the section \ref{Algorithm_Architecture}, including two skewed types (SSMD / NSMD) and three manifold constraint types (OMC / LoMC / LaMC), can form many controller combinations by combining different feedback functions.

For ease of understanding, we will first demonstrate the controller's ability to converge to preset accuracy $z_1<\epsilon_z$ in finite and fixed time for second-order systems in sections \ref{Second-order Systems}. Further, the fixed time convergence ability of the controller for high-order systems is demonstrated in sections \ref{subs_SS_RFC}. The corresponding corollaries will be provided and proven. The configuration used for the control of each section is shown in the TABLE \ref{table_parameters} below.

\begin{table*}[h!]
	\begin{center}
		\caption{Composition of controllers in section \ref{Second-order Systems}.}
		\begin{tabular}{c|l|l|l} 
			\hline\hline
			\textbf{Section} & \textbf{Feedback function} & \makecell[l]{\textbf{Constraint}\\\textbf{types}} & \makecell[l]{\textbf{Skewed}\\\textbf{types}} \\
			\hline
			\multicolumn{4}{c}{\ref{Second-order Systems} Second-order Systems}\\
			\hline
			\ref{subs_SS_Fn} & Finite-time & LaMC & none \\
			\hline
			\ref{subs_SS_Fxe} & \makecell[l]{Variable exponent \\ coefficients fixed-time} & LoMC & SSMD \\
			\hline
			
			\multicolumn{4}{c}{\ref{High-order Systems} High-order Systems}\\
			\hline
			\ref{subs_SS_LLFMC} & \makecell[l]{Linear fully actuated\\ manifold} & \makecell[l]{OMC/LoMC\\/LaMC} & none \\
			\hline
			\ref{subs_SS_RFC} & \makecell[l]{Iterative fixed-time\\ fully actuated manifold} & LoMC & NSMD \\
			\hline
			\bottomrule
		\end{tabular}
		\label{table_parameters}
	\end{center}
\end{table*}

\begin{remark}
	It is worth noting that a very direct idea
	is to directly impose a constraint on manifold $s$. This is equivalent to LoMC without skewed design, as explained in the subsequent simplification of \eqref{eq_us}. However, it is difficult to preset control accuracy through control parameters by only $s$-constraint. Existing research \cite{bechlioulis2013output} and \cite{song2016adaptive} has provided the relationship between accuracy and control parameters such that it must reverse deduce control parameters according to accuracy requirements when using only $s$-constraint. Furthermore, fast sliding mode (nonlinear manifold), such as \cite{ni2017fixed} and \cite{moulay2021robust}, is a mature method to achieve faster convergence speed. However, it is even more difficult to provide the relationship between performance and control parameters in only $s$-constraint, so skewed manifold design methods in \emph{\textbf{Step B2}} and manifold constraint methods in \emph{\textbf{Step C1}} are designed specially in this article.
	By combining the fast manifold and LaMC, it is easy to preset accuracy and provide the convergence time as shown in section \ref{subs_SS_Fn}. However, considering that solving the inverse function of complex functions can be very complex, this approach is only suitable for simple fast manifolds.
	To achieve faster convergence, more complex fast manifolds are often required, which is the starting point of designing skewed manifolds in this article. Many fast manifolds (or SMC), such as \cite{ni2017fixed} and \cite{moulay2021robust}, can preset accuracy and provide the setting time by combining with skewed design, that will be presented in the example given in section \ref{subs_SS_Fxe} and \ref{subs_SS_RFC}.
\end{remark}

\section{Second-order Systems Control}
\label{Second-order Systems}

To present our design ideas more clearly, the case of system \eqref{eq_SFS} with $n=2$ (a second-order system) is considered at the first.
The positive exponent coefficients feedback has many forms that conform to the definition of negative feedback function, such as function with low exponent coefficients in finite-time control\cite{hua2021event}, function composed of low and high exponent coefficients function in fixed-time feedback\cite{ni2017fixed} and variable exponent coefficients feedback\cite{moulay2021robust}.

\subsection{LaMC Based Finite-time PPC}
\label{subs_SS_Fn}

Finite-time feedback law is selected as negative feedback function $h_{m1}$:
\begin{equation}
h_{m1}(s_1)=-k_c\lceil s_1\rfloor^p,0<p<1
\label{eq_fn}
\end{equation}

Under LaMC, the controller \eqref{eq_uOF} can be simplified as follows:
\begin{equation}
\begin{aligned}
v&=-k_u\ln\left(\frac{1+\xi}{1-\xi}\right)\\
&=-k_u\ln\left(\frac{1+(2\frac{s_{n-1}-h_v(\dot{s}_{n-1})-\widetilde{x}_L}{\widetilde{x}_U-\widetilde{x}_L}-1)}{1-(2\frac{s_{n-1}-h_v(\dot{s}_{n-1})-\widetilde{x}_L}{\widetilde{x}_U-\widetilde{x}_L}-1)}\right)\\
&=-k_u\ln\left(\frac{s_{n-1}-h_v(\dot{s}_{n-1})-\widetilde{x}_L}{\widetilde{x}_U-s_{n-1}+h_v(\dot{s}_{n-1})}\right)
\end{aligned}
\label{eq_us1}
\end{equation}

It can be observed that $s_{n-1}-h_v(\dot{s}_{n-1})=s_{n-1}+\frac{1}{k_c}\lceil \dot{s}_{n-1}\rfloor^{\frac{1}{p}}$ is a common form of nonsingular finite-time sliding surface. Defined it as
\begin{equation}
s_{fn}=s_{n-1}+\frac{1}{k_c}\lceil \dot{s}_{n-1}\rfloor^{\frac{1}{p}},0<p<1
\label{eq_nfn}
\end{equation}
The controller \eqref{eq_us1} can be further simplified as
\begin{equation}
v=-k_u\ln\left(\frac{s_{fn}-\widetilde{x}_L}{\widetilde{x}_U-s_{fn}}\right),
\label{eq_ufn}
\end{equation}
while $\widetilde{x}_L$ and $\widetilde{x}_U$ can be simplified as
\begin{equation}
\begin{aligned}
&\widetilde{x}_U=-\widetilde{x}_L\\
=&(1-\mathcal{T}_x)x_U+\mathcal{T}_x(|h_v(\dot{s}_{n-1})-s_{n-1}|+\rho_e)\\
=&(1-\mathcal{T}_x)x_U+\mathcal{T}_x(|s_{fn}|+\rho_e)
\end{aligned}
\label{eq_WxUxL}
\end{equation}
with $\mathcal{T}_x=\mathcal{S}\left(\frac{|s_{fn}|-(x_U-\rho_e)}{\rho_e}\right)$.

This indicates that the LoMC on finite-time manifolds with \eqref{eq_fn} is equivalent to a constraint on nonsingular finite-time manifold value $s_{fn}$ in \eqref{eq_nfn}.

\begin{corollary}
	\label{corollary_finite}
	For the system \eqref{eq_SFS} with $n=2$ under Assumptions \ref{Assumption yd}-\ref{Assumption sat}, if the high gain differentiator \eqref{eq_HGD} and the controller \eqref{eq_ufn} (which is equivalent to \eqref{eq_uOF} composed of \eqref{eq_IS}, \eqref{eq_xUxL}, \eqref{eq_xi}, and \eqref{eq_fn}) is used with parameter selection $\epsilon_x=\epsilon_z$ in \eqref{eq_xUxL} and sufficiently small $\rho_e,\mu$, the close-loop system has the following properties:
	\begin{enumerate}[(\romannumeral1)]
		\item When the actuator output capability is sufficient, the setting time $T_{fn}$ for $z_1$ to converge to the accuracy $|z_1|<\epsilon_z$ satisfies the following inequality:
		\begin{equation}
			T_{fn}< T_s+T_{fn}'
			\label{eq_fnT}
		\end{equation}
		with $T_{fn}'=\frac{z_1^{1-p}(0)}{k_c(1-p)}$;
		\item When the actuator output capability is insufficient, the system achieves a flexible control accuracy $|z_1|<\widetilde{x}_U$
		within the same finite time $T_{fn}$, and after the controller exits saturation, it recovers to the preset accuracy $\epsilon_z$ within a finite time $\left(\max\{T_e,T_s\}+T_{fn}'\right)$ where $T_e$ is the time when the system exits saturation.
	\end{enumerate}
\end{corollary}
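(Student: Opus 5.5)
The plan is to reduce the corollary to Theorem \ref{theoremOF} (and, through it, to Theorem \ref{theoremSFIC}), and then study the one-dimensional reduced dynamics on the phase plane $(s_1,\dot s_1)=(z_1,z_2)$ when $n=2$. By Theorem \ref{theoremOF}, for $0<\mu<\bar\mu$ and sufficiently small $\rho_e$, the closed loop keeps $|\widetilde\xi(\textbf{Z})|<1$ for all $t\ge0$. For LaMC this means exactly $|s_{fn}(t)|<\widetilde x_U(t)$, with $s_{fn}=z_1+\frac{1}{k_c}\lceil z_2\rfloor^{1/p}$ as in \eqref{eq_nfn}. When the actuator is not saturated, $\widetilde x_U=x_U=\rho(k_0\rho_{x0},\epsilon_x,T_s)$. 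Since $\mathcal{T}((T_s-t)/T_s)=0$ for $t\ge T_s$, we get $|s_{fn}(t)|<\epsilon_x=\epsilon_z$ for all $t\ge T_s$. The remaining work is a scalar analysis of
\begin{equation*}
\dot z_1=z_2=\lceil k_c(s_{fn}-z_1)\rfloor^{p},
\end{equation*}
obtained by inverting \eqref{eq_nfn}. In this equation $s_{fn}(t)$ is treated as an exogenous signal confined to $(-\epsilon_z,\epsilon_z)$.

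For part (i), take $t\ge T_s$ and consider $V=|z_1|$ whenever $|z_1|\ge\epsilon_z$. If $z_1\ge\epsilon_z>s_{fn}$, then $s_{fn}-z_1<0$ and $\dot z_1<0$; the case $z_1\le-\epsilon_z$ is symmetric. So the set $\{|z_1|<\epsilon_z\}$ is positively invariant after $T_s$, and $|z_1|$ decreases strictly outside it. To get the time estimate, I would use the worst admissible value $s_{fn}=\pm\epsilon_z$ and compare with $\dot w=-k_c\lceil w\rfloor^{p}$, where $w=z_1\mp\epsilon_z$. This yields a reaching time of at most $\frac{(|z_1(T_s)|-\epsilon_z)^{1-p}}{k_c^{p}(1-p)}$. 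The comparison must also be done with the exact exponent of $k_c$ that comes from the inversion, and I will state that constant carefully. Bounding $|z_1(T_s)|$ by $|z_1(0)|$ then gives $T_{fn}<T_s+T'_{fn}$. To justify this last bound I would apply the same sign argument on $[0,T_s]$ using the shrinking boundary $x_U(t)$. This shows that $|z_1|$ cannot grow beyond $\max\{|z_1(0)|,x_U(t)\}$, and hence that it effectively decreases toward the tube.

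For part (ii), Theorem \ref{theoremSFIC} gives $|s_{fn}|<\widetilde x_U(t)$ throughout saturation. Since $\widetilde x_U\ge\epsilon_z$ and $\widetilde x_U$ is attached to $|s_{fn}|+\rho_e$, the same comparison argument with $\epsilon_z$ replaced by $\widetilde x_U$ yields $|z_1|<\widetilde x_U$ within the same time bound. After the exit time $T_e$, Theorem \ref{theoremSFIC}(ii) says the boundaries return to their original values. Restarting the argument of (i) at $\max\{T_e,T_s\}$ then gives recovery within $\max\{T_e,T_s\}+T'_{fn}$. For this I use the bound $|z_1(\max\{T_e,T_s\})|\le|z_1(0)|$, or else define $T'_{fn}$ relative to the restart state.

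The main obstacle is the comparison step. Here $s_{fn}(t)$ is time-varying and only bounded, and with the flexible constraint $\widetilde x_U$ also depends on the state. So I must show rigorously that the worst case $s_{fn}\equiv\pm\epsilon_z$ dominates. I would first try a Dini-derivative argument on $V=\max\{|z_1|-\epsilon_z,0\}$ that exploits the monotonicity of $w\mapsto\lceil w\rfloor^p$. This would give $D^+V\le-k_c^{p}V^{p}$ directly, after which the standard finite-time lemma finishes the proof.
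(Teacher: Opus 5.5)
Your overall route is the paper's. You invoke Theorem \ref{theoremOF} to get $|s_{fn}|<\widetilde x_U$, use $x_U\equiv\epsilon_x$ for $t\ge T_s$ to obtain the sector inequalities \eqref{eq_fndz}, and finish with a finite-time comparison. Your Dini-derivative bound on $V=\max\{|z_1|-\epsilon_z,0\}$ is a cleaner version of the paper's appeal to "the finite time control theorem". On the constant you were unsure about: with $h_v=h_{m1}^{-1}$, which is what the LaMC variable \eqref{eq_xi} uses, the inversion gives $z_2=h_{m1}(z_1-s_{fn})=-k_c\lceil z_1-s_{fn}\rfloor^p$. That is exactly \eqref{eq_fndz}, so the rate is $k_c$ and the reaching time after $T_s$ is at most $(|z_1(T_s)|-\epsilon_x)^{1-p}/(k_c(1-p))$. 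The $1/k_c$ in \eqref{eq_nfn} is only consistent when $k_c=1$.

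The genuine gap is the step meant to turn this into the stated $T'_{fn}$, namely $|z_1(T_s)|\le|z_1(0)|$. Your sign argument does not give $|z_1(t)|\le\max\{|z_1(0)|,x_U(t)\}$. It only shows $\dot z_1<0$ whenever $z_1\ge x_U(t)$. Because $x_U$ is decreasing while $|\dot z_1|=k_c|z_1-s_{fn}|^p$ can be small, $z_1$ can lag above a collapsing funnel. What the argument actually yields is the envelope $|z_1(t)|\le\max\{|z_1(0)|,x_U(0)\}$, where $x_U(0)=k_0\max\{|s_{fn}(0)|,\epsilon_x\}$ and $s_{fn}(0)$ depends on $z_2(0)$.

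This is not a technicality you can patch. Take $z_1(0)=0$ and $z_2(0)$ large, so that $T'_{fn}=0$. Then $z_1$ leaves $(-\epsilon_z,\epsilon_z)$ immediately, and the constraint $|s_{fn}|<x_U(t)$ restricts only $s_{fn}$, so nothing forces $|z_1(T_s)|<\epsilon_z$. If $z_1(T_s)>\epsilon_z$, then $T_{fn}>T_s=T_s+T'_{fn}$. Along your route, which is also the paper's, the provable bound is therefore $T_{fn}<T_s+\frac{(\max\{|z_1(0)|,k_0\rho_{x0}\}-\epsilon_x)^{1-p}}{k_c(1-p)}$, not \eqref{eq_fnT}. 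The paper's proof simply asserts \eqref{eq_fnT} and does not close this gap either.

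The same problem affects part (ii). During saturation, $|z_1(\max\{T_e,T_s\})|$ is controlled only through the ISS/ISpS assumption, not by $|z_1(0)|$. So the recovery time must be measured from the restart state, as in your fallback, and that changes the statement. Also, in the saturated phase $\widetilde x_U$ is state-dependent and not monotone, because it tracks $|s_{fn}|+\rho_e$. Your comparison should therefore use a fixed threshold such as $\sup_t\widetilde x_U(t)$, not simply "$\epsilon_z$ replaced by $\widetilde x_U$".
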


\begin{figure}[]
	\centering
	\begin{overpic}[width=0.8\columnwidth]{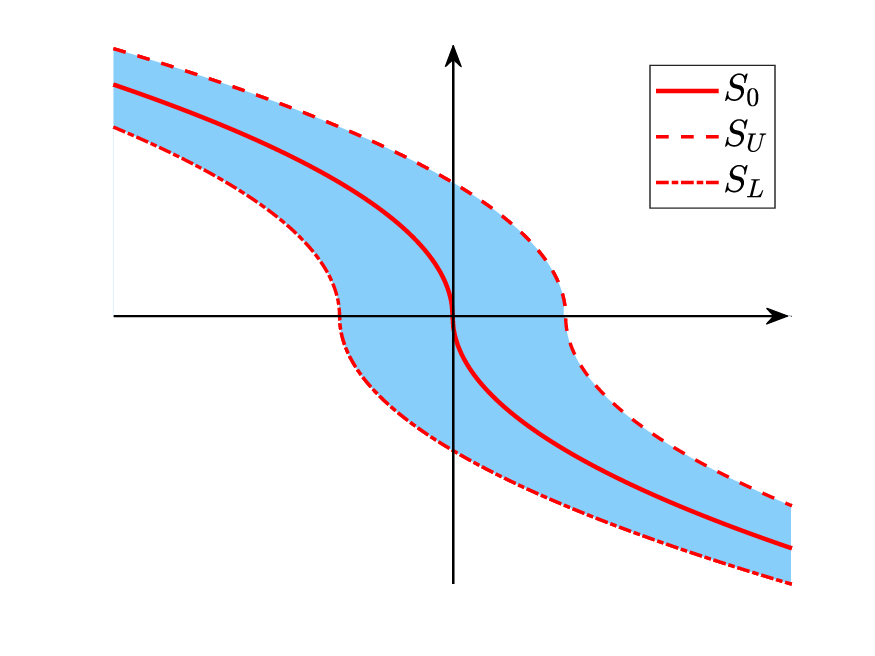}
		\put(88,35){$z_1$}
		\put(46,68){$z_2$}
		\put(36,35){$-\epsilon_x$}
		\put(62,35){$\epsilon_x$}
	\end{overpic}
	\caption{LaMC based finite-time manifold constraint diagram after $T_s$.}
	\label{fig_FiniteS}
\end{figure}

\textbf{Proof:}
According to \textbf{Theorem} \ref{theoremOF}, if the controller is used with $u=v,\forall t\geq0$, $\textbf{Z}$ always stays within $\Psi_s$, i.e., $\textbf{Z}$ will stays within the blue area in the Fig. \ref{fig_FiniteS} after $T_s$. Therefore, the following inequality holds:
\begin{equation}
\left\{
\begin{aligned}
&z_2>k_c|z_1+\epsilon_x|^p,&z_1<-\epsilon_x\\
&z_2<-k_c|z_1-\epsilon_x|^p,&z_1>\epsilon_x
\end{aligned}
\right.
\label{eq_fndz}
\end{equation}

Referring to the finite time control theorem \cite{hua2021event}, $z_1$ will converge to $z_1>-\epsilon_x$ in finite time for $z_1(0)<0$ and converge to $z_1<\epsilon_x$ in finite time for $z_1(0)>0$.
The convergence time $T_{fn}$ is expressed as \eqref{eq_fnT}, which is related to $T_s$, the initial state and the parameters of $h_{m1}$ in \eqref{eq_fn}. The final accuracy is determined by $\epsilon_x$ which can be set as $\epsilon_z$.

When the controller becomes saturated and the manifold constraint boundary is expanded, the system state will enter the flexible constraint set $\widetilde{\Psi}_s$ \eqref{eq_WPsis} as the design in \eqref{eq_WxyUxyL}. Similar to the above analysis, the flexible accuracy is $\widetilde{x}_U$. After exiting saturation, if the exit time $T_e$ is earlier than the preset time $T_s$, the time to converge to the preset accuracy $\epsilon_{z}$ remains $T_{fn}$. If the exit time is later than the preset time $T_e\geq T_s$, the time to restore the preset accuracy will not exceed \(\left(T_e+T_{fn}'\right)\).
$\hfill\blacksquare$

The following system is used in the simulation:
\begin{equation}
\left\{
\begin{aligned}
\dot{x}_1=&(1+0.1\cos(x_1)+0.1\sin(t))x_2-x_1+\sin(t)\\
\dot{x}_2=&(2+0.2\sin(x_1x_2)+0.1\cos(t))u(v)\\&-(x_1^2-1)x_2+\sin(t)+\cos(2t)\\
y=&x_1,\underline{u}=-100,\bar{u}=100,\\
x_1(&0)=-1,x_2(0)=0.5,y_d(t)=sin(t).\\
\end{aligned}
\right.
\label{eq_System_sim}
\end{equation}

\begin{figure}[]
	\centering
	\subfigure[]
	{
		\begin{overpic}[width=0.45\columnwidth]{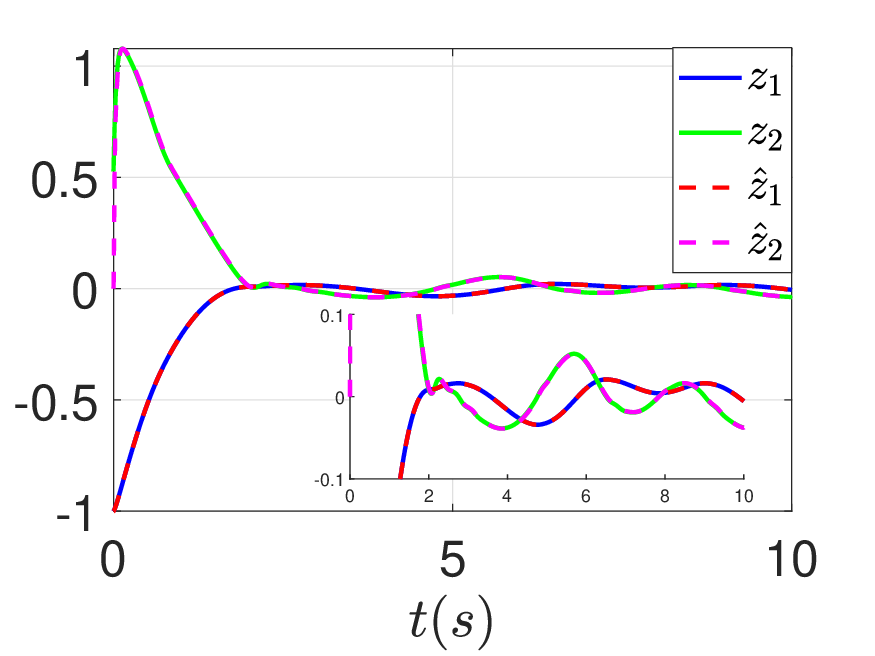}
		\end{overpic}
		\label{fig_Finitez}
	}\hfill
	\subfigure[]
	{
		\begin{overpic}[width=0.45\columnwidth]{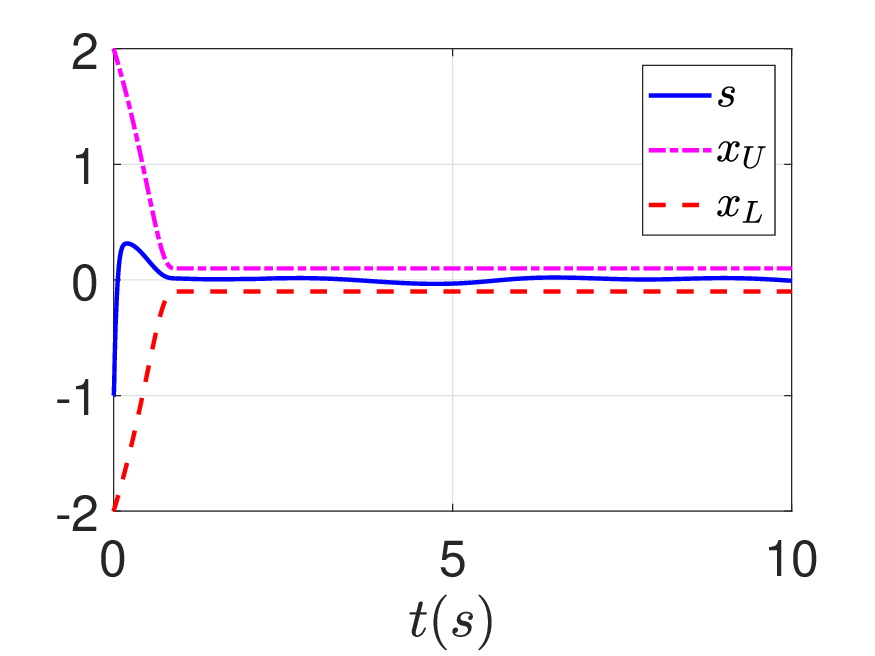}
		\end{overpic}
		\label{fig_Finites}
	}\\
	\subfigure[]
	{
		\begin{overpic}[width=0.45\columnwidth]{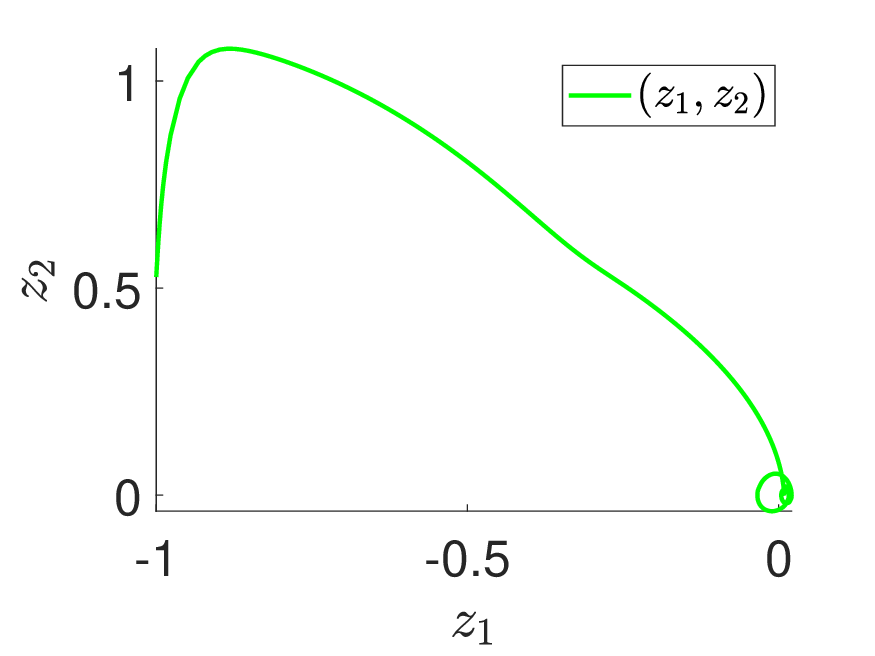}
		\end{overpic}
		\label{fig_Finitem}
	}\hfill
	\subfigure[]
	{
		\begin{overpic}[width=0.45\columnwidth]{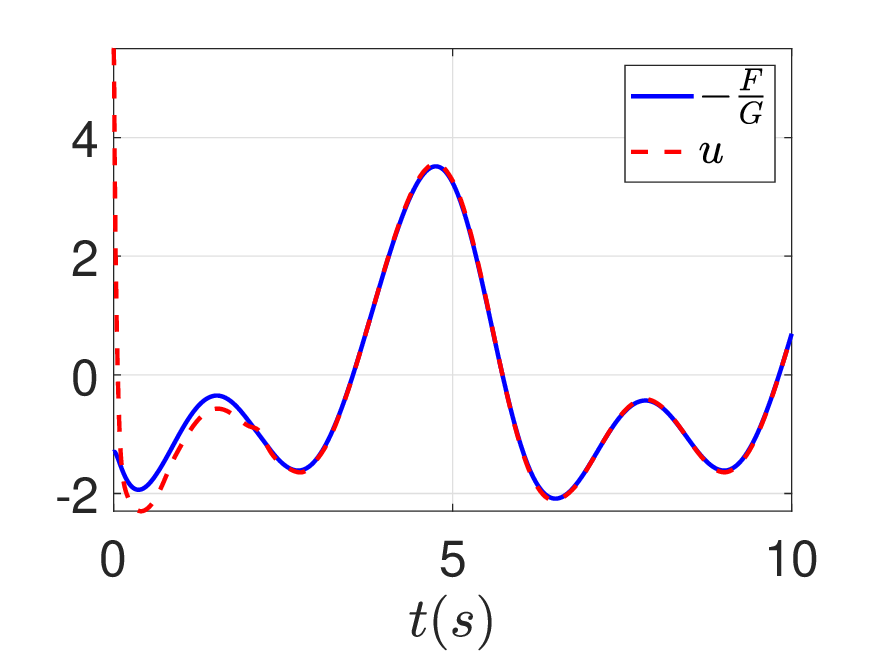}
		\end{overpic}
		\label{fig_Finiteu}
	}
	\caption{Simulation results of the finite-time PPC with LaMC. (a) The variables $z_1$, $z_2$, $\hat{z}_1$, and $\hat{z}_2$ versus time. (b) The variables $s_{fn}$, $x_U$, and $x_L$ versus time. (c) Phase trajectory $(z_1,z_2)$. (d) The evolution of $-\frac{F}{G}$ and $u$.}
	\label{fig_FiniteA}
\end{figure}

\begin{figure}[]
	\centering
	\subfigure[]
	{
		\begin{overpic}[width=0.45\columnwidth]{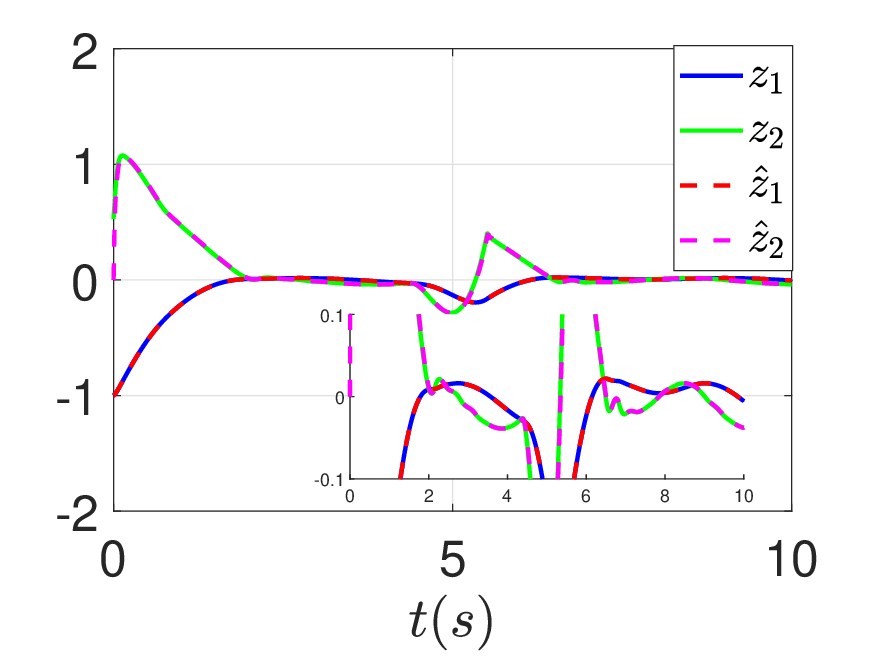}
		\end{overpic}
		\label{fig_FiniteSz}
	}\hfill
	\subfigure[]
	{
		\begin{overpic}[width=0.45\columnwidth]{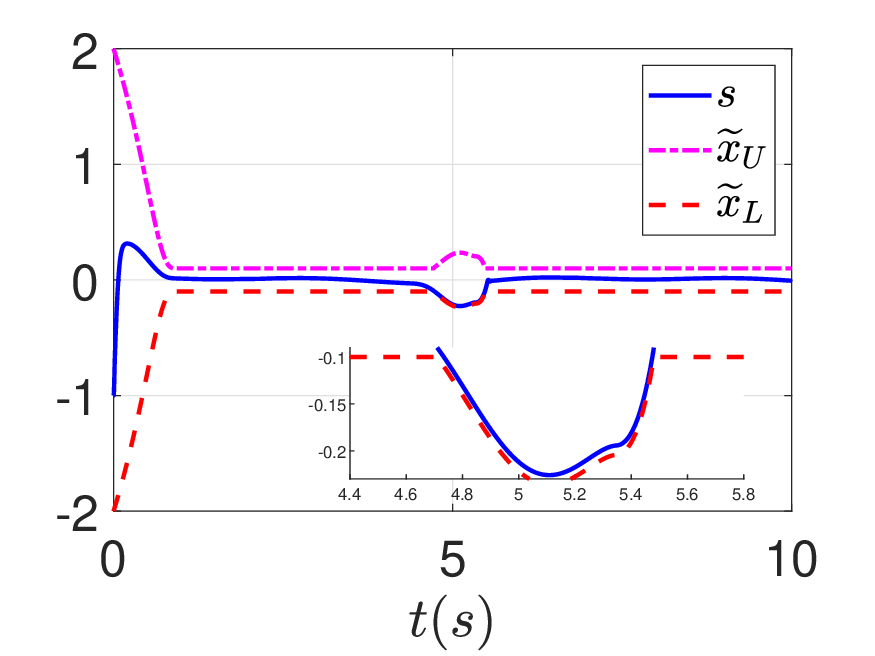}
		\end{overpic}
		\label{fig_FiniteSs}
	}\\
	\subfigure[]
	{
		\begin{overpic}[width=0.45\columnwidth]{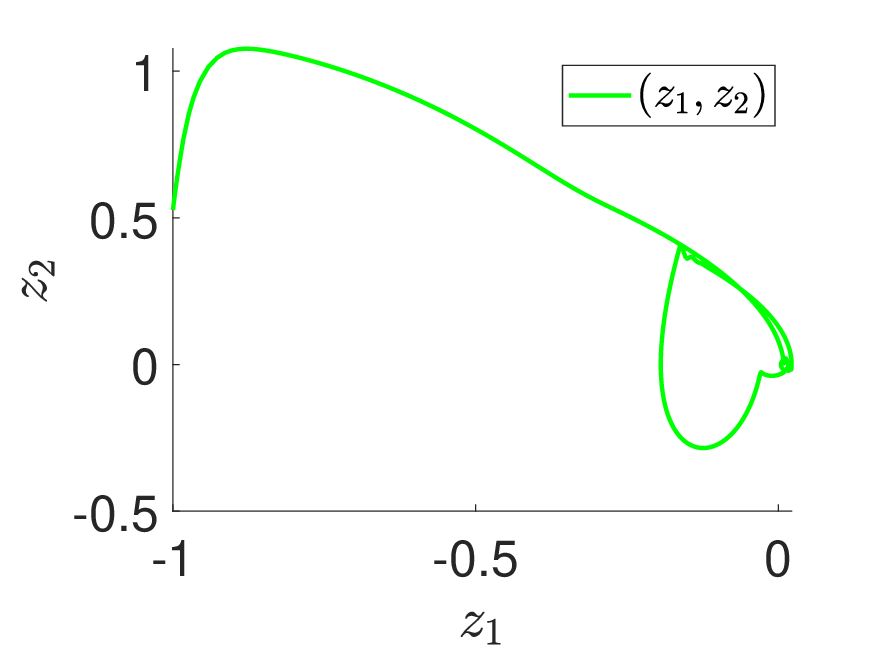}
		\end{overpic}
		\label{fig_FiniteSm}
	}\hfill
	\subfigure[]
	{
		\begin{overpic}[width=0.45\columnwidth]{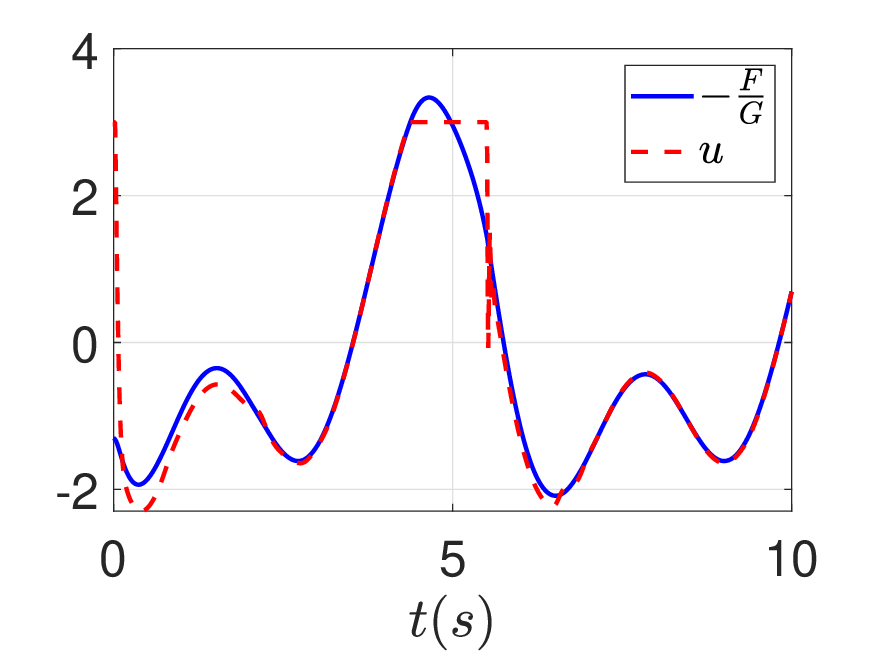}
		\end{overpic}
		\label{fig_FiniteSu}
	}
	\caption{Simulation results of the finite-time PPC with LaMC and saturated actuator. (a) The variables $z_1$, $z_2$, $\hat{z}_1$, and $\hat{z}_2$ versus time. (b) The variables $s_{fn}$, $\widetilde{x}_U$, and $\widetilde{x}_L$ versus time. (c) Phase trajectory $(z_1,z_2)$. (d) The evolution of $-\frac{F}{G}$ and $u$.}
	\label{fig_FiniteSatA}
\end{figure}

The controller parameters are selected as follows: $k_u=5,k_0=2,\epsilon_x=0.1,T_s=1,\rho_e=0.01$ in controller, $k_c=1,p=0.5$ in \eqref{eq_fn}, and $a_1=4,a_2=4,\mu=0.01$ in differentiator \eqref{eq_HGD}, therefore $T_{fn}<3$.

The simulation results are shown in 
Fig. \ref{fig_Finitez}-Fig. \ref{fig_Finiteu}.
It can be intuitively seen that $s_{fn}$ always evolves within the prescribed bounds $x_U$ and $x_L$, and $z_1$ achieves the prescribed accuracy of $0.1$ within the settling time of $1.9s<T_{fn}$.

With the control parameters set as described above, the input constraint in the system \eqref{eq_System_sim} is reduced to \( \underline{u}=-2.5,\bar{u}=3 \). The simulation results are shown in Fig \ref{fig_FiniteSatA}. The actuator output \( u \) remains within the input constraint at all times as Fig. \ref{fig_FiniteSu}. The flexible constraint $\widetilde{x}_U,\widetilde{x}_L$ expands as \( |s| \) increases, and it restores to its original preset value after exiting saturation as Fig. \ref{fig_FiniteSs}. The control accuracy is also able to recover to the preset accuracy within a finite time as Fig. \ref{fig_FiniteSz}.

\subsection{LoMC and SSMD Based Variable Exponent Coefficients Fixed-time PPC}
\label{subs_SS_Fxe}

A fixed time sliding surface with variable exponential coefficients is proposed in \cite{moulay2021robust} and \cite{moulay2022fixed}, but it is discontinuous \cite{su2023comments}. An improved form is proposed below:
\begin{subequations}
\begin{align}
&h_{p1}(s_1)=-k_c\lceil s_1\rfloor^p\\
&p=\frac{as_1^2}{1+bs_1^2}+r_b\\
&a=(r_t-r_b)b,b=\frac{r_1-r_b}{r_t-r_1}
\end{align}
\label{eq_fxe}
\end{subequations}
\begin{equation}
	z_2=-k_c\lceil z_1\rfloor^p+2\epsilon_y
\end{equation}
\begin{equation}
	z_2=-k_c\lceil z_1\rfloor^p
\end{equation}
\begin{equation}
	z_2=-k_c\lceil z_1\rfloor^p-2\epsilon_y
\end{equation}
\begin{equation}
\left\{
\begin{aligned}
	&p<1,&0\leq |z_1|<1\\
	&p=1,&|z_1|=1\\
	&p>1,&|z_1|>1\\
\end{aligned}
\right.
\label{eq_dz2}
\end{equation}
where $0<r_b<1<r_1<r_t$. Select the negative feedback function $h_{m1}$ based on SSMD \eqref{eq_SSMD} with \eqref{eq_fxe}.

Under LoMC,
the controller \eqref{eq_uOF} can be simplified as
\begin{equation}
\begin{aligned}
v&=-k_u\ln\left(\frac{1+\widetilde{\xi}}{1-\widetilde{\xi}}\right)\\
&=-k_u\ln\left(\frac{1+(2\frac{s-\widetilde{y}_L}{\widetilde{y}_U-\widetilde{y}_L}-1)}{1-(2\frac{s-\widetilde{y}_L}{\widetilde{y}_U-y_L}-1)}\right)\\
&=-k_u\ln\left(\frac{s-\widetilde{y}_L}{\widetilde{y}_U-s}\right)
\end{aligned}
\label{eq_us}
\end{equation}
while $\widetilde{y}_L$ and $\widetilde{y}_U$ can be simplified as
\begin{equation}
\begin{aligned}
&\widetilde{y}_U=-\widetilde{y}_L\\
=&(1-\mathcal{T}_y)y_U+\mathcal{T}_y(|h_{m(n-1)}(s_{n-1})-\dot{s}_{n-1}|+\rho_e)\\
=&(1-\mathcal{T}_y)y_U+\mathcal{T}_y(|s|+\rho_e)
\end{aligned}
\label{eq_WyUyL}
\end{equation}
where $\mathcal{T}_y=\mathcal{S}\left(\frac{|s|-(y_U-\rho_e)}{\rho_e}\right)$.

\begin{figure}[]
	\centering
	\begin{overpic}[width=0.8\columnwidth]{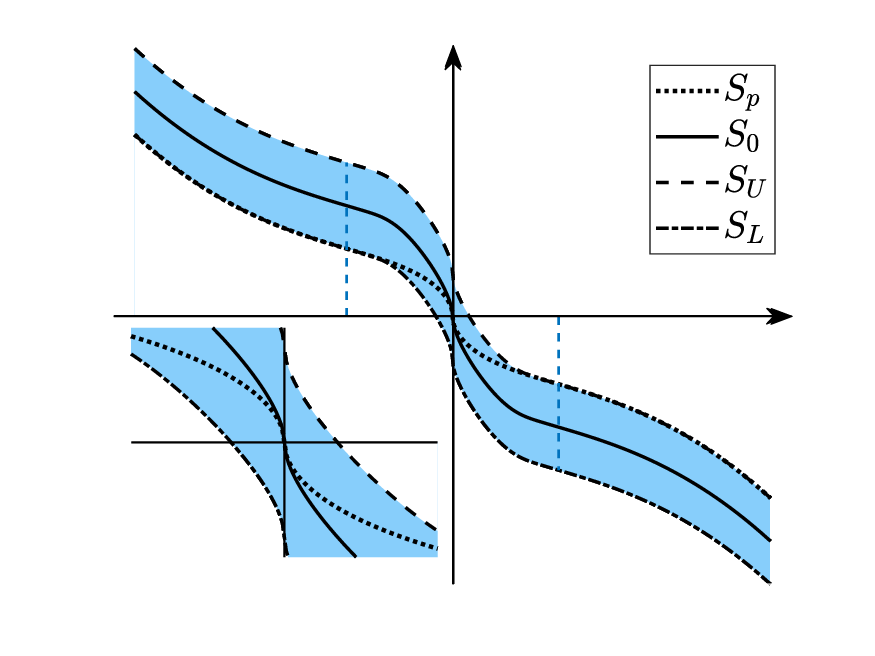}
		\put(88,35){$z_1$}
		\put(46,68){$z_2$}
		\put(36,35){$-\epsilon_{z1}$}
		\put(62,35){$\epsilon_{z1}$}
		\put(53,45){$\epsilon_y$}
		\put(44,29){$-\epsilon_y$}
	\end{overpic}
	\caption{LoMC and SSMD based variable exponent coefficients fixed-time manifold constraint diagram after $T_s$.}
	\label{fig_FixedeS}
\end{figure}

\begin{corollary}
	\label{corollary_fixedve}
	For the system \eqref{eq_SFS} with $n=2$ under Assumptions \ref{Assumption yd}-\ref{Assumption sat}, if the high gain differentiator \eqref{eq_HGD} and the controller \eqref{eq_us} (which is equivalent to \eqref{eq_uOF} composed of \eqref{eq_IS}, \eqref{eq_SSMD}, \eqref{eq_yUyL}, \eqref{eq_xi}, and \eqref{eq_fxe}) is used with parameters selection as
	\begin{equation}
	\begin{aligned}
	\epsilon_{s1}=\epsilon_y\\
	\epsilon_{z1}=\epsilon_z
	\end{aligned}
	\end{equation}
	in \eqref{eq_yUyL}, \eqref{eq_SSMD} and sufficiently small $\rho_e,\mu$, the close-loop system has the following properties:
	\begin{enumerate}[(\romannumeral1)]
		\item When the actuator output capability is sufficient, the setting time $T_{fxe}$ for $z_1$ to converge to the accuracy $|z_1|<\epsilon_z$ satisfies the following inequality:
		\begin{equation}
		T_{fxe}< T_s+T_{fxe}'.
		\label{eq_fxeT}
		\end{equation}
		with $T_{fxe}'=\frac{1}{k_c(r_1-1)}+\frac{1}{k_ce^{-\frac{a}{2e}}(1-r_b)}$;
		\item When the actuator output capability is insufficient, the system achieves a flexible control accuracy $|z_1|<\widetilde{x}_e$ within the same fixed time $T_{fxe}$, and after the controller exits saturation, it recovers to the preset accuracy $\epsilon_z$ within a finite time $\left(\max\{T_e,T_s\}+T_{fxe}'\right)$.
	\end{enumerate}
\end{corollary}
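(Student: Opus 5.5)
The plan mirrors the proof of Corollary \ref{corollary_finite}: invoke Theorem \ref{theoremOF} to confine the state to the manifold constraint region, then run a scalar comparison argument in the $(z_1,z_2)$ plane. By Theorem \ref{theoremOF}, for sufficiently small $\mu$ and $\rho_e$ with $u=v$, we have $\textbf{Z}\in\Psi_s$ for all $t\geq0$. Under LoMC with $n=2$ this reads $|s|<y_U$, and $y_U=\epsilon_y$ for $t\geq T_s$. Since $s=z_2-h_{m1}(z_1)$ and $h_{m1}=h_{p1}+T_{11}\epsilon_{s1}$ with $\epsilon_{s1}=\epsilon_y$, after $T_s$ we obtain
\begin{equation*}
h_{p1}(z_1)+T_{11}\epsilon_y-\epsilon_y<z_2<h_{p1}(z_1)+T_{11}\epsilon_y+\epsilon_y .
\end{equation*}
By its definition, $T_{11}=1-2\mathcal{S}\left(\frac{z_1+\epsilon_{z}}{2\epsilon_{z}}\right)$ equals $-1$ for $z_1\geq\epsilon_z$ and $+1$ for $z_1\leq-\epsilon_z$. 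Hence $z_2<-k_c\lceil z_1\rfloor^p$ whenever $z_1\geq\epsilon_z$, and $z_2>-k_c\lceil z_1\rfloor^p$ whenever $z_1\leq-\epsilon_z$. These are the three curves $z_2=-k_c\lceil z_1\rfloor^p$ and $z_2=-k_c\lceil z_1\rfloor^p\pm2\epsilon_y$ drawn in Fig. \ref{fig_FixedeS}. The skewing thus makes the upper boundary coincide with $h_{p1}$ outside the band, so $\epsilon_y$ plays no role in the convergence estimate.

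Outside $|z_1|<\epsilon_z$ we then have $\dot z_1=z_2$ with $\mathrm{sign}(z_1)\dot z_1<-k_c|z_1|^{p(z_1)}$. By symmetry take $z_1>\epsilon_z$ and set $V=|z_1|$, so $\dot V<-k_c V^{p(V)}$. The comparison lemma bounds the time to reach $\epsilon_z$ by
\begin{equation*}
\int_{\epsilon_z}^{V(T_s)}\frac{dV}{k_cV^{p(V)}} ,
\end{equation*}
and I split this integral at $V=1$. On $V\geq1$: $p$ is increasing in $V^2$ with $p(1)=\frac{a}{1+b}+r_b=r_1$, given that $a=(r_t-r_b)b$ and $b=\frac{r_1-r_b}{r_t-r_1}$. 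Hence $p\geq r_1>1$ and $\int_1^\infty V^{-r_1}dV/k_c=\frac{1}{k_c(r_1-1)}$, independent of the initial state. On $V\leq1$: the bound $V^{p}=V^{r_b}V^{aV^2/(1+bV^2)}\geq V^{r_b}e^{aV^2\ln V}$ holds, and $V^2\ln V\geq-\frac{1}{2e}$ on $(0,1]$. This gives $V^p\geq e^{-\frac{a}{2e}}V^{r_b}$, and so $\int_0^1\leq\frac{1}{k_ce^{-a/(2e)}(1-r_b)}$. Summing the two pieces bounds the post-$T_s$ settling time by $T_{fxe}'$, hence $T_{fxe}<T_s+T_{fxe}'$. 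Once $|z_1|<\epsilon_z$, the same inequalities prevent it from leaving: at $z_1=\epsilon_z$ we have $\dot z_1<0$. A small preliminary lemma should confirm that the modified $p$ is continuous, so that $h_{p1}$ is continuous and the comparison argument is legitimate; this continuity is the point of the "improved form" of $p$.

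For (ii), Theorem \ref{theoremOF} (via Theorem \ref{theoremSFIC}) keeps $\textbf{Z}\in\widetilde{\Psi}_s$, i.e. $|s|<\widetilde{y}_U$. The same computation then applies with $\epsilon_y$ replaced by $\widetilde{y}_U$, except that the band is no longer fully absorbed by the skew: the region where $\mathrm{sign}(z_1)\dot z_1<0$ is guaranteed becomes $|z_1|\geq\widetilde{x}_e$. Here $\widetilde{x}_e$ is defined as the point where $k_c|z_1|^{p}=\widetilde{y}_U-\epsilon_y$, or $\epsilon_z$ if that is larger. The fixed-time integral bound is unchanged since it does not depend on the endpoint. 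After the exit time $T_e$ the boundaries revert to $y_U$, and repeating (i) from time $\max\{T_e,T_s\}$ gives the recovery time. I expect the main obstacle to be (ii): the expanded region needs a clean definition of $\widetilde{x}_e$, and one must handle that $\widetilde{y}_U$ is state dependent, so I would freeze it at its supremum over the saturation interval.
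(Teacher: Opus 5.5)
\textbf{Part (i).} This is essentially the paper's argument. Theorem~\ref{theoremOF} gives $|s|<\epsilon_y$ after $T_s$, and the choice $\epsilon_{s1}=\epsilon_y$ makes the skew term $T_{11}\epsilon_y$ cancel the band outside $|z_1|<\epsilon_z$. The comparison with $\dot z_1=-k_c\lceil z_1\rfloor^{p}$ is split at $|z_1|=1$, using $p\ge r_1$ above $1$ and $z_1^{az_1^2}\ge e^{-a/(2e)}$ below $1$, and yields $T'_{fxe}$. Your forward-invariance remark at $z_1=\pm\epsilon_z$ is a small useful addition.

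\textbf{Part (ii) has a genuine gap.} From $|s|<\widetilde y_U$ you get, for $z_1\ge\epsilon_z$, only $\dot z_1<-k_c z_1^{p}+\Delta$ with $\Delta:=\widetilde y_U-\epsilon_y$. You then define $\widetilde x_e$ as the point where $k_c\widetilde x_e^{\,p}=\Delta$. At that point the guaranteed decay rate $k_cV^{p}-\Delta$ vanishes, so the comparison integral becomes $\int dV/(k_cV^{p}-\Delta)$, which diverges logarithmically at $V=\widetilde x_e$. Equivalently, the comparison system $\dot V=-(k_cV^{p}-\Delta)$ has an equilibrium exactly at $\widetilde x_e$ and approaches it only asymptotically. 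Hence your sentence ``the fixed-time integral bound is unchanged since it does not depend on the endpoint'' is false: the integrand has changed, not just the endpoint. Your argument yields at best $\limsup|z_1|\le\widetilde x_e$, not $|z_1|<\widetilde x_e$ within $T_{fxe}$.

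\textbf{The missing idea is the paper's lateral expansion.} Do not read off accuracy from the vertically enlarged band. Instead, choose $\widetilde x_e$ so that $\widetilde\Psi_s$ is contained in the horizontally shifted set $\widetilde\Psi_{se}=\{h_{p1}(z_1+\widetilde x_e)<z_2<h_{p1}(z_1-\widetilde x_e)\}$. Concretely, require $h_{p1}(z_1)+\Delta\le h_{p1}(z_1-\widetilde x_e)$ for $z_1>\widetilde x_e$, and symmetrically for negative $z_1$.
\begin{itemize}
\item Such a $\widetilde x_e$ exists because $p\ge r_1>1$ for $|z_1|\ge1$ makes $h_{p1}$ steepen, so a bounded vertical offset is dominated by a bounded horizontal shift.
\item Then $w=z_1-\widetilde x_e$ satisfies $\dot w<-k_c\lceil w\rfloor^{p(w)}$ for $w>0$.
\item Because your settling estimate is invariant under this translation, the same bound $T'_{fxe}$ applies verbatim.
\end{itemize}
This $\widetilde x_e$ is in general strictly larger than your crossing point; that loss of accuracy is the price of keeping the fixed settling time. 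Your device of freezing $\widetilde y_U$ at its supremum over the saturation interval (finite via the ISS/ISpS assumption) is still needed with this construction.
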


\begin{figure}[]
	\centering
	\begin{overpic}[width=0.8\columnwidth]{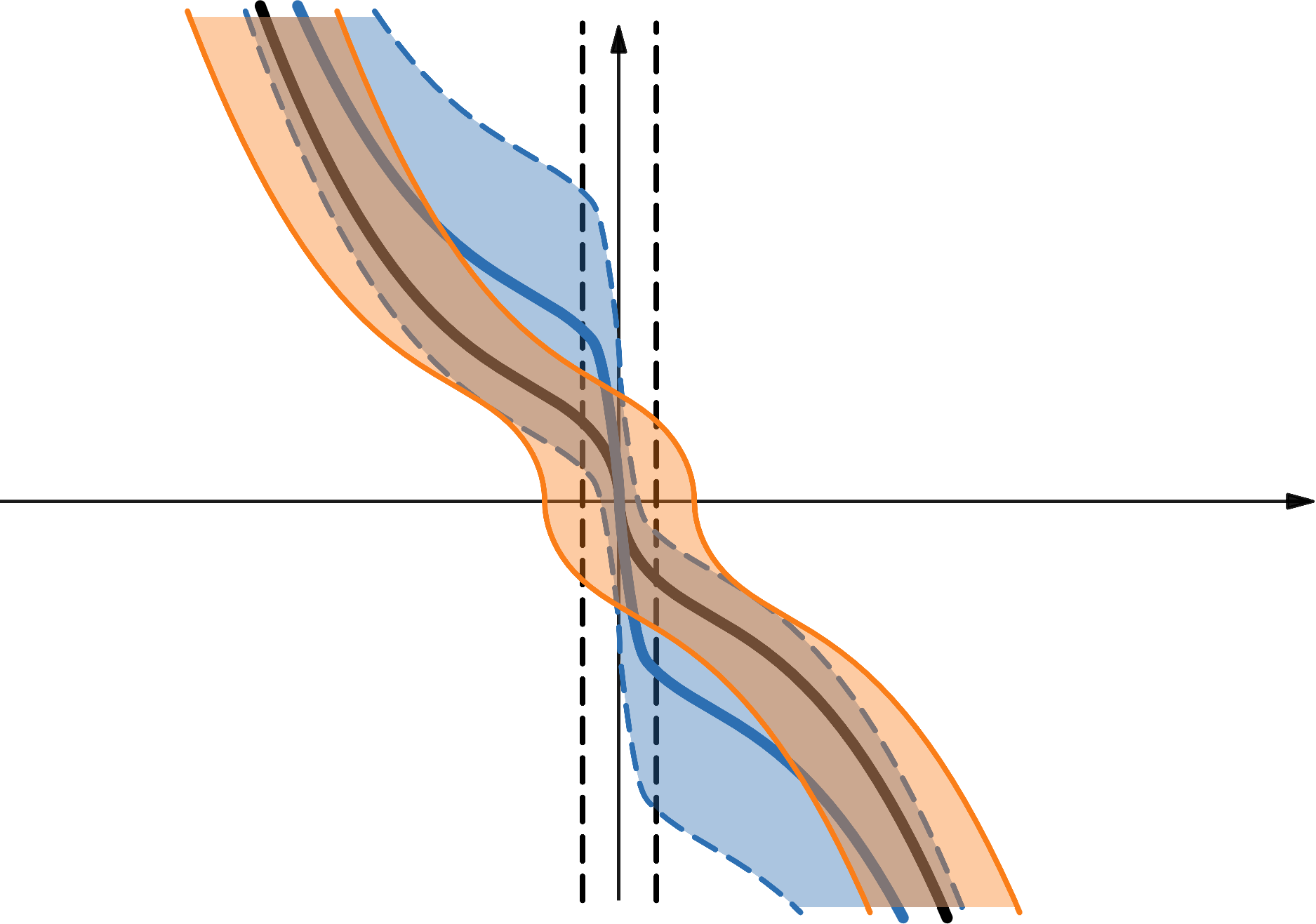}
		\put(95,35){$z_1$}
		\put(45,70){$z_2$}
		\put(53,33){$\widetilde{x}_e$}
	\end{overpic}
	\caption{A schematic diagram of equivalent longitudinal flexible expansion achieved through lateral flexible expansion. The thick black curve represents $z_2=h_{p1}(z_1)$, the black dashed line represents $z_1=\pm\epsilon_{z1}$, and the thick blue solid line represents $z_2=h_{m1}(z_1)$ with SSMD \eqref{eq_SSMD}. The orange area is formed by the lateral expansion of \(z_1=\pm\epsilon_{z1}\), while the blue area is the result of the longitudinal elastic expansion of the manifold \( z_2=h_{m1}(z_1) \) into \( \widetilde{\psi}_s \).}
	\label{fig_FxeProof}
\end{figure}

\textbf{Proof:}
According to \textbf{Theorem} \ref{theoremOF}, if the controller is used with non-activated saturation, $\textbf{Z}$ always stays within $\Psi_s$, i.e., the blue area in the Fig. \ref{fig_FixedeS}. The $\epsilon_{s1}$ in \eqref{eq_SSMD} is selected to be equal to the $\epsilon_y$ in \eqref{eq_yUyL}.
Then, $S_L$ will overlap with $S_p$ for $z_1<-\epsilon_{z1}$, and $S_U$ will overlap with $S_p$ for $z_1>\epsilon_{z1}$ after $T_s$ as shown in Fig. \ref{fig_FixedeS}. Therefore, the following inequality holds:
\begin{equation}
\left\{
\begin{aligned}
&z_2>k_c|z_1|^p,&z_1<-\epsilon_{z1}\\
&z_2<-k_c z_1^p,&z_1>\epsilon_{z1}
\end{aligned}
\right.
\label{eq_fxedz}
\end{equation}

The maximum time $T_{max}$ required for a system $z_2=\dot{z}_1=-k_c\lceil z\rfloor^p$ to converge to $0$ is derived as below.

For $z_1\geq0$, it holds that
\begin{equation}
\left\{
\begin{aligned}
&z_2\leq -k_c z_1^{r_1},&z_1\geq1\\
&z_2\leq -k_c z_1^{az_1^2+r_b}\leq-k_c e^{-\frac{a}{2e}}z_1^{r_b},&0\leq z_1<1
\end{aligned}
\right.
\label{eq_dz2}
\end{equation}
where $\min\{z_1^{z_1^2}\}=e^{-\frac{1}{2e}}$. It leads to
\begin{equation}
z_1(t)\leq
\left\{
\begin{aligned}
&\left(z_1^{1-r_1}(0)-k_c(1-r_1)t\right)^{\frac{1}{1-r_1}},&t<t_1\\
&\left(z_1^{1-r_b}(t_1)-k_c(1-r_b)t\right)^{\frac{1}{1-r_b}},&t_1\leq t< t_2\\
&0,&t\geq t_2
\end{aligned}
\right.
\label{eq_z1eq}
\end{equation}
where $t_1$ and $t_2$ are the first moment of $z_1(t)=1$ and $z_1(0)=0$, respectively. $t_1\leq\frac{z_1^{1-r_1}(0)-1}{k_c(1-r_1)}=\frac{1}{k_c(r_1-1)}$, $t_2\leq t_1+\frac{z_1^{1-r_b}(t1)}{k_ce^{-\frac{a}{2e}}(1-r_b)}=t_1+\frac{1}{k_ce^{-\frac{a}{2e}}(1-r_b)}$. Therefore $T_{max}=\frac{1}{k_c(r_1-1)}+\frac{1}{k_ce^{-\frac{a}{2e}}(1-r_b)}$.
The situation for $z_1\leq0$ is same as above.

By comparison, the system composed of inequality \eqref{eq_fxedz} has a faster convergence speed when $|z_1|>\epsilon_{z1}$, so the convergence time to $|z_1|\leq\epsilon_{z1}$ is less than $T_{max}$.
The final accuracy is determined by $\epsilon_{z1}$ which can be set as $\epsilon_z$.

When the controller becomes saturated and the system state will remain within the flexible constraint set $\widetilde{\Psi}_s$ \eqref{eq_WPsis}. Since $p\geq r_1>1$ for $|s_1|\geq1$ in \eqref{eq_fxe}, there must exist a lateral flexible expansion set for non skewed manifolds (as shown in the orange area of Fig. \ref{fig_FxeProof})
\begin{equation}
\widetilde{\Psi}_{se}:=\{\textbf{Z}|h_{p1}(z_1+\widetilde{x}_e)<z_2<h_{p1}(z_1-\widetilde{x}_e)\}
\label{eq_WPsisFxe}
\end{equation}
which achieves slightly lower control accuracy $\widetilde{x}_e$ than the flexible constraint set $\widetilde{\Psi}_s=\{\textbf{Z}|h_{m1}(z_1)+\widetilde{y}_L<z_2<h_{m1}(z_1)+\widetilde{y}_U\}$ formed for longitudinal flexible expansion of skewed manifolds (as shown in the blue area of Fig. \ref{fig_FxeProof}) at a fixed time. The manifold \( h_{m1}(z_1)+ \widetilde{y}_U\) and \( h_{p1}(z_1-\widetilde{x}_e) \) intersect at only one point in the fourth quadrant.
The curves \(z_2=h_{m1}(z_1)+ \widetilde{y}_U\) and \(z_2=h_{p1}(z_1-\widetilde{x}_e)\) intersect at only one point in the fourth quadrant. The shapes of the two curves, \( \frac{\partial h_{p1}(0)}{\partial z_1} = -\infty \) and \( -\infty<h_{m1}(z_1) < 0 \forall z_1\neq0\), ensure that there is always a solution that guaranteeing a unique intersection in the fourth quadrant as Fig. \ref{fig_FxeProof}.

Similar to the previous analysis in proof of corollary \ref{corollary_finite}, the flexible accuracy is $\widetilde{x}_e$. After exiting saturation, if the exit time $T_e$ is earlier than the preset time $T_s$, the time to converge to the preset accuracy $\epsilon_{z}$ remains $T_{fxe}$. If the exit time is later than the preset time $T_e\geq T_s$, the time to restore the preset accuracy will not exceed \(\left(T_e+T_{fxe}'\right)\).
$\hfill\blacksquare$

\begin{remark}
	The variable exponential coefficient function proposed in the \cite{moulay2021robust} and \cite{moulay2022fixed} is
	\begin{subequations}
		\begin{align}
		&h_p(z)=-\beta\lceil z\rfloor^p\\
		&p(z)=\frac{\lambda_1 z^2}{1+\mu_1 z^2},\theta_1=\frac{\lambda_1}{1+\mu_1}>1,\beta>0
		\end{align}
		\label{eq_fvec}
	\end{subequations}
	which has an exponential coefficient of $0$ when $z=0$, resulting in the value of $h_p(0)$ is $-1$ and the left limit is $\lim_{z\rightarrow0^-}h_p(z)=1$. Therefore, the sliding surface is discontinuous \cite{su2023comments}. There is a jumping feedback form at the equilibrium point, with similar behavior as a sign term. In fact, if the exponential coefficient $p(z)$ is set as $p(z)=0$, it is the sign function. Therefore, the controller's chattering can be seen in the simulation example of \cite{moulay2021robust} and \cite{moulay2022fixed}. The design in \eqref{eq_fxe} improves the variable exponential coefficient function by adjusting the exponential coefficients of the power feedback function at $z=0$, $|z|=1$, and $|z|=\infty$ using three parameters $r_b$, $r_1$, and $r_b$, respectively, while ensuring that the lowest exponential coefficients $r_b$ is greater than $0$, thus guaranteeing the continuity of the function.
\end{remark}

\begin{figure}[]
	\centering
	\subfigure[]
	{
		\begin{overpic}[width=0.45\columnwidth]{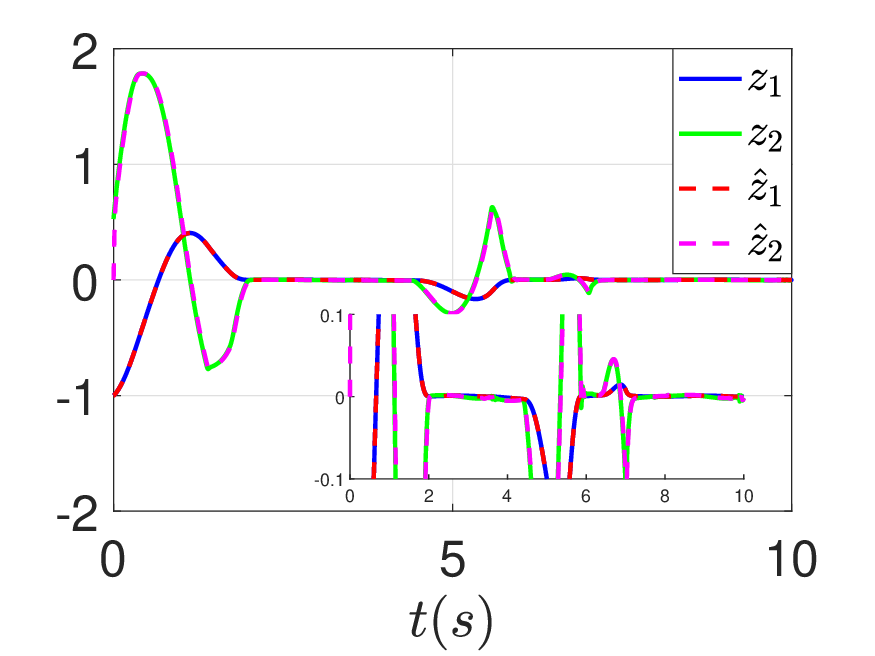}
		\end{overpic}
		\label{fig_FixedveSz}
	}\hfill
	\subfigure[]
	{
		\begin{overpic}[width=0.45\columnwidth]{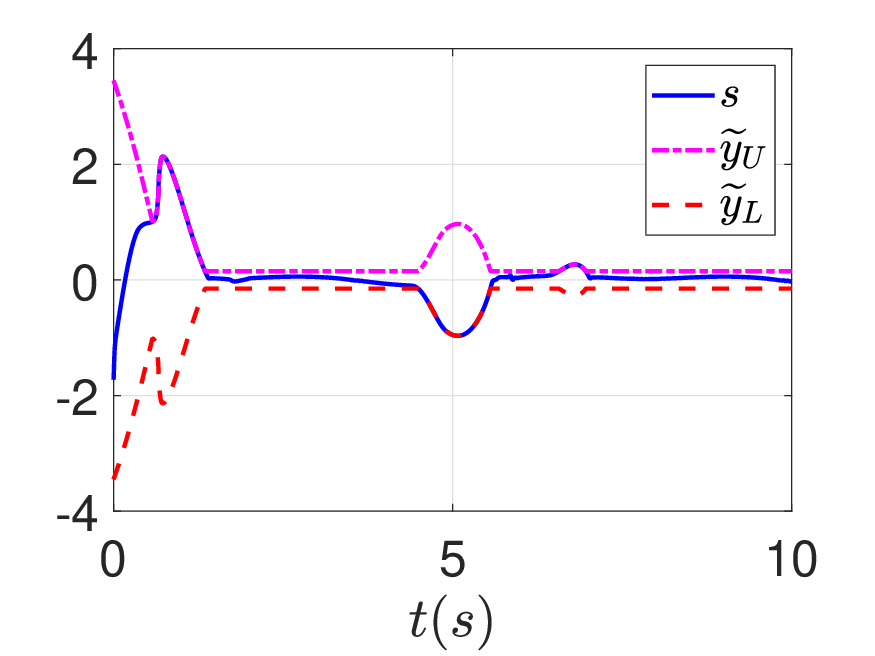}
		\end{overpic}
		\label{fig_FixedveSs}
	}\\
	\subfigure[]
	{
		\begin{overpic}[width=0.45\columnwidth]{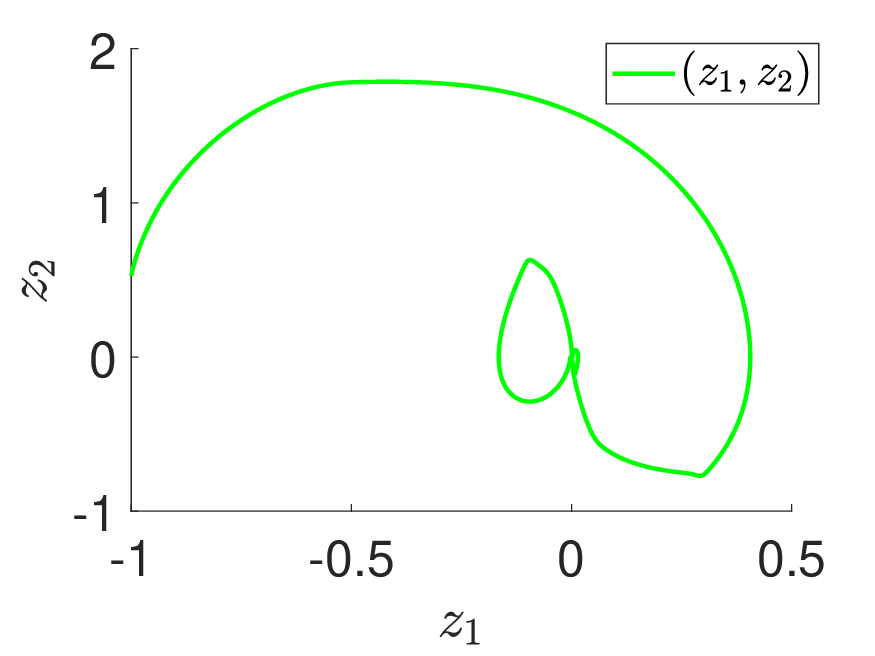}
		\end{overpic}
		\label{fig_FixedveSm}
	}\hfill
	\subfigure[]
	{
		\begin{overpic}[width=0.45\columnwidth]{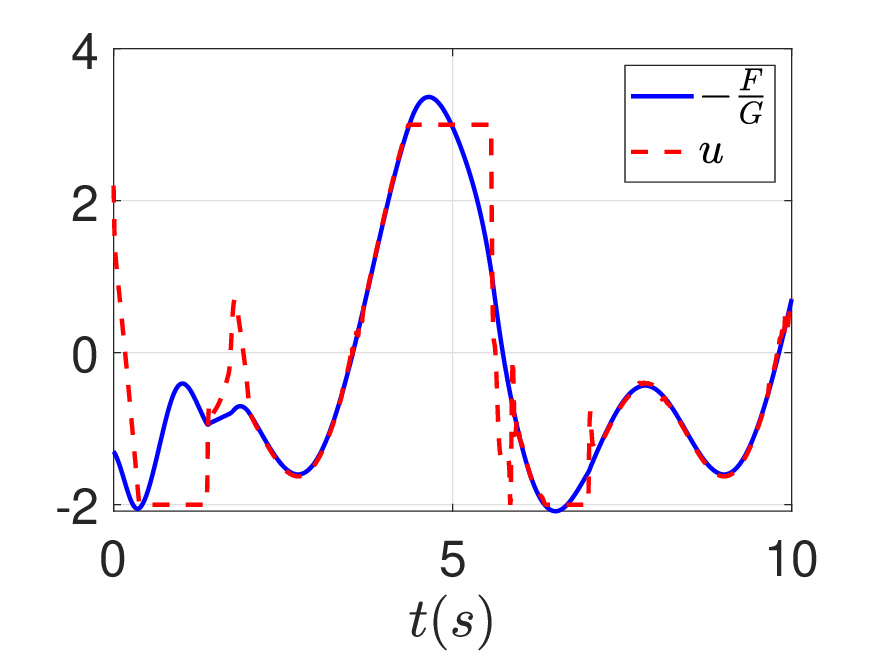}
		\end{overpic}
		\label{fig_FixedveSu}
	}
	\caption{Simulation results of the fixed-time PPC with \eqref{eq_fxe}, LoMC, SSMD, and saturated actuator. (a) The variables $z_1$, $z_2$, $\hat{z}_1$, and $\hat{z}_2$ versus time. (b) The variables $s$, $\widetilde{y}_U$, and $\widetilde{y}_L$ versus time. (c) Phase trajectory $(z_1,z_2)$. (d) The evolution of $-\frac{F}{G}$ and $u$.}
	\label{fig_FixedveSat}
\end{figure}

The simulation system is same as \eqref{eq_System_sim} with input constraint parameters \( \underline{u}=-2,\bar{u}=3 \). The controller parameters are selected as follows: $k_u=2,k_0=2,\epsilon_{s1}=\epsilon_y=0.15,T_s=1,\epsilon_{z1}=\epsilon_z=0.1,k_{pp1}=1,\rho_e=0.01$ in controller, $k_c=1.5,r_b=0.5,r_1=2,r_t=3$ in \eqref{eq_fxe}, and $a_1=4,a_2=4,\mu=0.01$ in differentiator \eqref{eq_HGD}, therefore $T_{fxe}=4.3$.

The simulation results are shown in Fig. \ref{fig_hoFixedSat}. The actuator output \( u \) remains within the input constraint as Fig. \ref{fig_FixedveSu}. The flexible constraint $\widetilde{y}_U,\widetilde{y}_L$ expands as \( s \) exceeds the original constraint $y_U,y_L$ as Fig. \ref{fig_FixedveSs}, after exiting saturation, it restores to its original preset value $y_U,y_L$. The control accuracy is also able to recover to the preset accuracy within a fixed time as Fig. \ref{fig_FixedveSz}.

\section{High-order Systems Control}
\label{High-order Systems}

\subsection{Linear Fully Actuated Manifold Constraint Control}
\label{subs_SS_LLFMC}

Linear feedback law is selected as negative feedback function $h_{mi}$:
\begin{equation}
	h_{mi}(s_i)=-b_is_i
	\label{eq_lfmi}
\end{equation}
The manifold constructed based on iterative methods \eqref{eq_IS} is:
\begin{equation}
	s=\prod_{i=1}^{n-1}\left(\frac{\partial}{\partial t}+b_i\right)s_1
	\label{eq_LFAM}
\end{equation}
The polynomial \eqref{eq_LFAM} has $n-1$ negative real poles $-b_i,i=1,2,\cdots,n-1$, which obviously satisfies the Hurwitz condition. It is consistent with the designs in many literature \cite{bechlioulis2013output,wei2018robust,dimanidis2020output,lv2023distributed,song2016adaptive,cao2020adaptive}. The linear manifold is a straight line on the $(s_{n-1},\dot{s}_n-1)$ phase plane, so the effects of OMC, LaMC, and LoMC are the same.
According to equation \eqref{eq_us}, the constraint on the linear manifold is ultimately a direct constraint on the value $s$.
Therefore, LMCC \cite{bechlioulis2013output,wei2018robust,dimanidis2020output,lv2023distributed} are special cases of the proposed FAMCC with linear manifold.

\subsection{FAMCC Based Recursive Fixed-time Control (RFC)}
\label{subs_SS_RFC}

An iterative fixed-time fully actuated manifold can be constructed by \eqref{eq_IS}, NSMD \eqref{eq_NSMD}, LoMC \eqref{eq_yUyL}, and fixed-time feedback law as follows:
\begin{equation}
	h_{pi}(s_i)=-\alpha_i\lceil s_i\rfloor^{p_i}-\beta_i\lceil s_i\rfloor^{q_i},i=1,\dots,n-1
	\label{eq_RFCs}
\end{equation}
where $p_i$ and $q_i$ are positive constants which satisfy $0<{p_i}<1<{q_i}$.


\begin{corollary}
	\label{corollary_fixedho}
	For the system \eqref{eq_SFS}, under Assumptions \ref{Assumption yd}-\ref{Assumption sat}, if the high gain differentiator \eqref{eq_HGD} and the recursive fixed-time controller composed of \eqref{eq_IS}, \eqref{eq_NSMD}, \eqref{eq_yUyL}, \eqref{eq_RFCs} and \eqref{eq_uOF} (which also can be simplified as \eqref{eq_us} under LoMC) is applied with parameters selection as:
	\begin{subequations}
		\begin{align}
		&\epsilon_{zj}=\epsilon_{s(j-1)},\quad j=2,\cdots,n-1
		\label{eq_NSMDie}\\
		&\epsilon_{z1}=\epsilon_z,\epsilon_{s(n-1)}=\epsilon_y
		\label{eq_NSMDif}
		\end{align}
		\label{eq_NSMDi}
	\end{subequations}
	in \eqref{eq_yUyL}, \eqref{eq_SSMD} and sufficiently small $\rho_e,\mu$, the close-loop system has the following properties:
	\begin{enumerate}[(\romannumeral1)]
		\item When the actuator output capability is sufficient, the setting time $T_{fr}$ for $z_1$ to converge to the accuracy $|z_1|<\epsilon_z$ satisfies the following inequality:
		\begin{equation}
		T_{fr}< T_s+T_{fr}'
		\label{eq_fxTho}
		\end{equation}
		with $T_{fr}'=\sum_{i=1}^{n-1}\left(\frac{1}{\alpha_i(1-p_i)}+\frac{1}{\beta_i(q_i-1)}\right)$;
		\item When the actuator output capability is insufficient, the system achieves a flexible control accuracy $|s_j|<\widetilde{x}_{ej}$ and $|z_1|<\widetilde{x}_{e1}$
		\begin{equation}
		\begin{aligned}
		&\widetilde{x}_{e(n-1)}=\max\{\epsilon_{z(n-1)},\widetilde{y}_U-y_U\},\\ &\widetilde{x}_{ej}=\max\{\epsilon_{zj},\widetilde{x}_{e(j+1)}\}.
		\end{aligned}
		\label{eq_Wxei}
		\end{equation}
		within the same fixed time $T_{fr}$, and after the controller exits saturation, it recovers to the preset accuracy $\epsilon_z$ within a finite time $\left(\max\{T_e,T_s\}+T_{fr}'\right)$ where $T_e$ is the time when the system exits saturation.
	\end{enumerate}
\end{corollary}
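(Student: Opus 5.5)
The plan is to reuse the structure of the proofs of Corollary \ref{corollary_finite} and Corollary \ref{corollary_fixedve}, applied recursively down the chain $s_{n-1},s_{n-2},\dots,s_1=z_1$ of the iterative manifold \eqref{eq_IS}.

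\textbf{Step 1: the top level of the chain.} By Theorem \ref{theoremOF}, when saturation is not active, $\textbf{Z}$ stays in $\Psi_s$ for all $t\geq0$. After $T_s$ this means $|s|<y_U=\epsilon_y$ under LoMC, i.e.
\begin{equation*}
h_{m(n-1)}(s_{n-1})-\epsilon_y<\dot{s}_{n-1}<h_{m(n-1)}(s_{n-1})+\epsilon_y .
\end{equation*}
We use the NSMD \eqref{eq_NSMD} with $\epsilon_{s(n-1)}=\epsilon_y$. For $|s_{n-1}|\geq\epsilon_{z(n-1)}$ we have $T_{2(n-1)}=1$ and $T_{1(n-1)}=\mp1$. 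The skew therefore cancels the band exactly, as in Fig. \ref{fig_FixedeS}:
\begin{equation*}
\begin{aligned}
&\dot{s}_{n-1}<h_{p(n-1)}(s_{n-1}), && s_{n-1}>\epsilon_{z(n-1)},\\
&\dot{s}_{n-1}>h_{p(n-1)}(s_{n-1}), && s_{n-1}<-\epsilon_{z(n-1)}.
\end{aligned}
\end{equation*}

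\textbf{Step 2: fixed-time bound on $s_{n-1}$.} With $h_{pi}$ from \eqref{eq_RFCs}, a comparison argument applies. Split into $|s_{n-1}|\geq1$, where $\dot V\le-\beta V^{q}$, and $|s_{n-1}|<1$, where $\dot V\le-\alpha V^{p}$, with $V=|s_{n-1}|$. This gives the standard fixed-time estimate: $|s_{n-1}|$ enters and stays in $\{|s_{n-1}|\leq\epsilon_{z(n-1)}\}$ within $\frac{1}{\alpha_{n-1}(1-p_{n-1})}+\frac{1}{\beta_{n-1}(q_{n-1}-1)}$ after $T_s$. Invariance follows because the sign conditions above point inward at $|s_{n-1}|=\epsilon_{z(n-1)}$.

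\textbf{Step 3: descending the chain.} By \eqref{eq_IS}, $s_{j+1}=\dot s_j-h_{mj}(s_j)$. Once $|s_{j+1}|\leq\epsilon_{z(j+1)}=\epsilon_{sj}$, by \eqref{eq_NSMDie}, the same pair of inequalities holds for $s_j$ with $\epsilon_{sj}$ in place of $\epsilon_y$. The skew $T_{1j}\epsilon_{sj}$ again exactly absorbs the residual $s_{j+1}$ outside $|s_j|<\epsilon_{zj}$. Iterating from $j=n-2$ down to $j=1$ yields $|z_1|<\epsilon_{z1}=\epsilon_z$. The total time is $T_s$ plus the sum of the per-level fixed-time bounds, which is $T_{fr}'$ and proves (i). Boundedness of all signals in the meantime comes from Lemma \ref{lemma1} and Theorem \ref{theoremOF}. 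Before each level converges, each $s_j$ is merely bounded, which is enough.

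\textbf{Step 4: saturated case.} Theorem \ref{theoremSFIC} and Theorem \ref{theoremOF} keep $\textbf{Z}$ in $\widetilde\Psi_s$, so $|s|<\widetilde y_U$. The excess $\widetilde y_U-y_U$ is no longer cancelled by the skew. I would argue as in the proof of Corollary \ref{corollary_fixedve}, via the lateral-expansion set \eqref{eq_WPsisFxe}. Outside $|s_{n-1}|\geq\max\{\epsilon_{z(n-1)},\widetilde y_U-y_U\}$, the dominance of $h_{p(n-1)}$ (nonsingular thanks to the NSMD near zero) still drives $s_{n-1}$ inward in fixed time. This gives $\widetilde x_{e(n-1)}$. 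The descent of Step 3 then propagates the bound to $\widetilde x_{ej}=\max\{\epsilon_{zj},\widetilde x_{e(j+1)}\}$, within the same time estimate. After the exit time $T_e$, the boundaries revert to the originals by Theorem \ref{theoremSFIC}. Restarting Steps 1–3 from $\max\{T_e,T_s\}$ gives recovery within $\max\{T_e,T_s\}+T_{fr}'$.

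\textbf{Main obstacle.} The hardest step is Step 4. The comparison $|h_{p(n-1)}(s)|\ge \widetilde y_U-y_U$ for $|s|\ge\widetilde x_{e(n-1)}$ is not literally true with the stated formula unless constants align. I would first try to show only that an enlarged lateral set as in \eqref{eq_WPsisFxe} exists, as in Corollary \ref{corollary_fixedve}, and treat the stated $\widetilde x_{e(n-1)}$ as the resulting level.
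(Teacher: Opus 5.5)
Your part (i) follows the paper's proof step for step. After $T_s$, the LoMC bound $|s|<\epsilon_y$, together with $\epsilon_{s(n-1)}=\epsilon_y$, $T_{1(n-1)}=\mp1$ and $T_{2(n-1)}=1$, gives \eqref{eq_fxdzho} for $i=n-1$. The per-level fixed-time comparison follows, and $\epsilon_{z(j+1)}=\epsilon_{sj}$ lets you descend to $s_1=z_1$ and sum the times to $T_s+T_{fr}'$.

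The gap is in part (ii), where you suspected it. Falling back on the intersection argument of Corollary \ref{corollary_fixedve} only shows that \emph{some} lateral offset exists. It does not give the explicit level $\widetilde{x}_{e(n-1)}=\max\{\epsilon_{z(n-1)},\widetilde{y}_U-y_U\}$, so taking the stated value as the resulting level proves nothing. Your descent in Step 4 also reuses Step 3, which relies on the skew $T_{1j}\epsilon_{sj}$ absorbing $s_{j+1}$. That fails in saturation, because $|s_{j+1}|$ can be as large as $\widetilde{x}_{e(j+1)}>\epsilon_{sj}$. The same vertical-to-lateral conversion is therefore needed at every level, not just at level $n-1$.

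The missing idea, which the paper uses, is slope dominance. Assume $\partial h_{pi}/\partial s_i<-1$ everywhere. Then also $\partial h_{mi}/\partial s_i<-1$ for $|s_i|\geq\epsilon_{zi}$, since there $h_{mi}$ differs from $h_{pi}$ only by the constant $T_{1i}\epsilon_{si}$. For any $d\geq0$ the mean value theorem gives $h_{pi}(s_i)+d\leq h_{pi}(s_i-d)$, and symmetrically $h_{pi}(s_i)-d\geq h_{pi}(s_i+d)$. So a longitudinal shift of the boundary by $d$ lies inside the lateral shift by $d$.

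At level $n-1$, take $d=\widetilde{y}_U-y_U$. For $s_{n-1}\geq\epsilon_{z(n-1)}$ this gives $\dot s_{n-1}<h_{p(n-1)}(s_{n-1})+d\leq h_{p(n-1)}(s_{n-1}-d)$. Hence $(s_{n-1},\dot s_{n-1})$ lies in $\widetilde{\Psi}_{s(n-1)}$ of \eqref{eq_WPsisFxi}, and the fixed-time comparison drives $|s_{n-1}|$ below $\widetilde{x}_{e(n-1)}$. At level $j$, the excess not absorbed by the skew is at most $\widetilde{x}_{e(j+1)}-\epsilon_{sj}\leq\widetilde{x}_{e(j+1)}$. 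The same step then yields $\widetilde{x}_{ej}=\max\{\epsilon_{zj},\widetilde{x}_{e(j+1)}\}$.

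Your caveat that the constants must align is justified. The condition $\partial h_{pi}/\partial s_i<-1$ does not follow from $0<p_i<1<q_i$ and $\alpha_i,\beta_i>0$ alone. It is the tacit requirement $\inf_{r>0}\bigl(\alpha_ip_ir^{p_i-1}+\beta_iq_ir^{q_i-1}\bigr)>1$, which the simulation parameters happen to satisfy. You should state it explicitly and use the mean-value inequality above to complete Step 4.
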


\textbf{Proof:}
Define manifold constrain region of $(s_i,\dot{s}_i)$ with $i=1,2,\cdots,n-1$ as follows:
\begin{equation}
\Psi_{si}:=\{(s_i,\dot{s}_i)|-\epsilon_{si}<\dot{s}_i-h_{mi}(s_i)<\epsilon_{si}\}
\label{eq_Psisi}
\end{equation}
where $\Psi_{s(n-1)}=\Psi_s$ in \eqref{eq_Psis} after $T_s$ since $\epsilon_{s(n-1)}=\epsilon_y$.

When the actuator output capability is sufficient, $u=v$ can be guaranteed during the operation. According to \textbf{Theorem} \ref{theoremOF}, if the controller
is used with parameter selection \eqref{eq_NSMDi}, $\textbf{Z}$ always stays within $\Psi_s$, i.e., $|s|=|s_n|<\epsilon_{n-1}=\epsilon_y$ and $(s_{n-1},\dot{s}_{n-1})$ always in $\Psi_{s(n-1)}$ after $T_s$.
It can be obtained that the following inequality holds:
\begin{equation}
\left\{
\begin{aligned}
&\dot{s}_i>\alpha_i|s_i|^{p_i}+\beta_i|s_i|^{q_i},&s_i<-\epsilon_{zi}\\
&\dot{s}_i<-\alpha_i s_i^{p_i}-\beta_i s_i^{q_i},&s_i>\epsilon_{zi}
\end{aligned}
\right.
\label{eq_fxdzho}
\end{equation}
for $i=n-1$.
Referring to the fixed time control theorem \cite{ni2017fixed}, since $s_{n-1}$ has faster converge speed than fixed-time system form \eqref{eq_fxdzho}, $s_{n-1}$ will converge to $|s_{n-1}|<\epsilon_{z(n-1)}=\epsilon_{s(n-2)}$ within a fixed time $T_s+\frac{1}{\alpha_{n-1}(1-p_{n-1})}+\frac{1}{\beta_{n-1}(q_{n-1}-1)}$.

For $i=n-2,\dots,1$, it can be recursively obtained that the inequality \eqref{eq_fxdzho} holds when $|s_{i+1}|\leq\epsilon_{si}$.
Combining the fixed time control theorem, $s_i$ will converge to $|s_i|<\epsilon_{zi}$ within fixed time $T_s+\sum_{k=i}^{n-1}\left(\frac{1}{\alpha_k(1-p_k)}+\frac{1}{\beta_k(q_k-1)}\right)$.
Recurring from $i=n-2$ to $i=1$, it can be obtained that the final output accuracy is $|z_1|=|s_1|<\epsilon_{z1}$, and the convergence time $T_{fr}$ can be expressed as \eqref{eq_fxTho}, which is related to $T_s$ and the parameters of $h_{pi}$ in \eqref{eq_RFCs}.
The final accuracy is determined by $\epsilon_{z1}$ which can be set as $\epsilon_z$.

When the actuator output capability is insufficient, the system state will remain within the flexible constraint set $\widetilde{\Psi}_s$ \eqref{eq_WPsis} as the design in \eqref{eq_WxyUxyL}. For $i=1,\dots,n-1$, since $\frac{\partial{h}_{pi}(s_i)}{\partial s_i}<-1,\forall s_i\in \mathcal{R}$ in \eqref{eq_RFCs}, $\frac{\partial{h}_{mi}(s_i)}{\partial s_i}<-1, \forall |s_i|\geq \epsilon_{zi}$, the performance change caused by a lateral movement of a certain distance of \( h_{pi} \) is greater than the performance change caused by the same distance of longitudinal movement.
Thus, the performance of the flexible constraint set $\widetilde{\Psi}_s$, formed by the longitudinal flexible expansion of \( h_{mi} \) with NSMD \eqref{eq_NSMD}, can be approximated by the constraint set $\widetilde{\Psi}_{si}$, formed by the lateral flexible expansion of \( h_{pi} \) as
\begin{equation}
\widetilde{\Psi}_{si}:=\{(s_i,\dot{s}_i)|h_{pi}(s_i+\widetilde{x}_{ei})<\dot{s}_i<h_{pi}(s_i-\widetilde{x}_{ei})\}
\label{eq_WPsisFxi}
\end{equation}
where $\widetilde{x}_{ei}$ is defined in \eqref{eq_Wxei}.
Similar to the previous analysis in proof of corollary \ref{corollary_finite}, the flexible accuracy is denoted as $|s_i|\leq\widetilde{x}_{ei}$.
Recursively, the final flexible accuracy is obtained as \(\widetilde{x}_{e1}\).

After exiting saturation, if the exit time $T_e$ is earlier than the preset time $T_s$, the time to converge to the preset accuracy $\epsilon_{z}$ remains $T_{fr}$. If the exit time is later than the preset time $T_e\geq T_s$, the time to restore the preset accuracy will not exceed \(\left(T_e+T_{fr}'\right)\).
$\hfill\blacksquare$


\begin{figure}[]
	\centering
	\subfigure[]
	{
		\begin{overpic}[width=0.45\columnwidth]{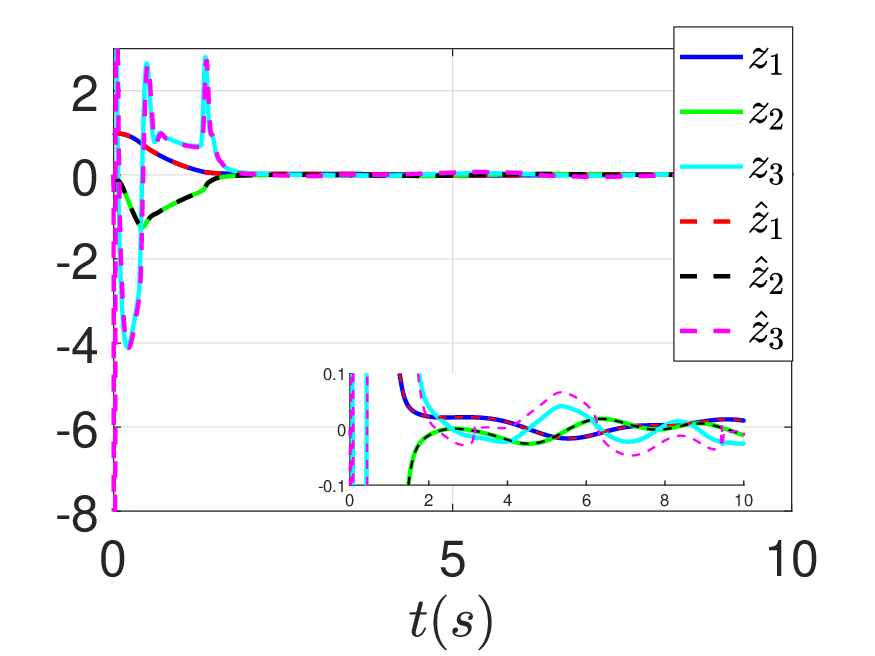}
		\end{overpic}
		\label{fig_hoFixedz}
	}\hfill
	\subfigure[]
	{
		\begin{overpic}[width=0.45\columnwidth]{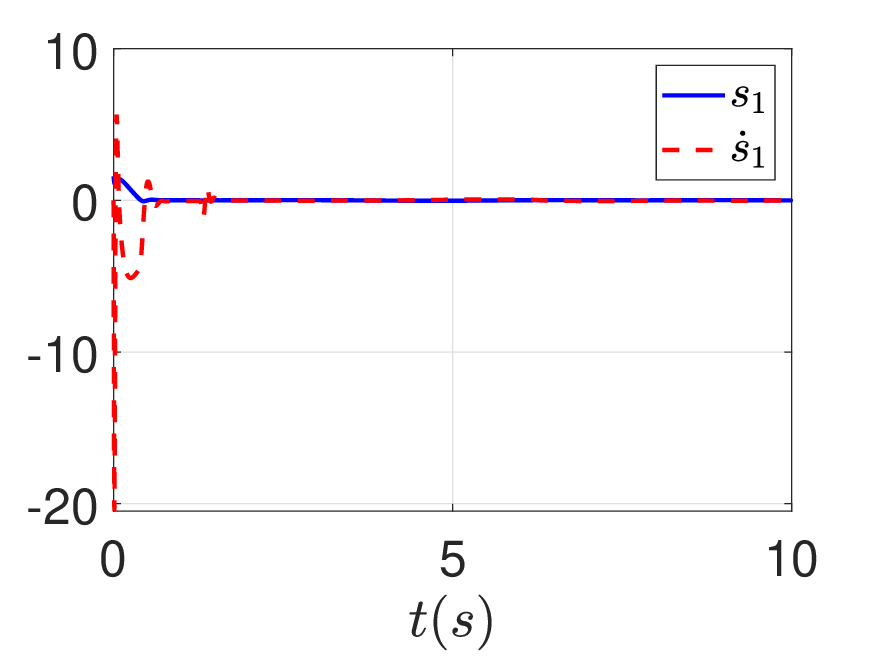}
		\end{overpic}
		\label{fig_hoFixeds}
	}\\
	\subfigure[]
	{
		\begin{overpic}[width=0.45\columnwidth]{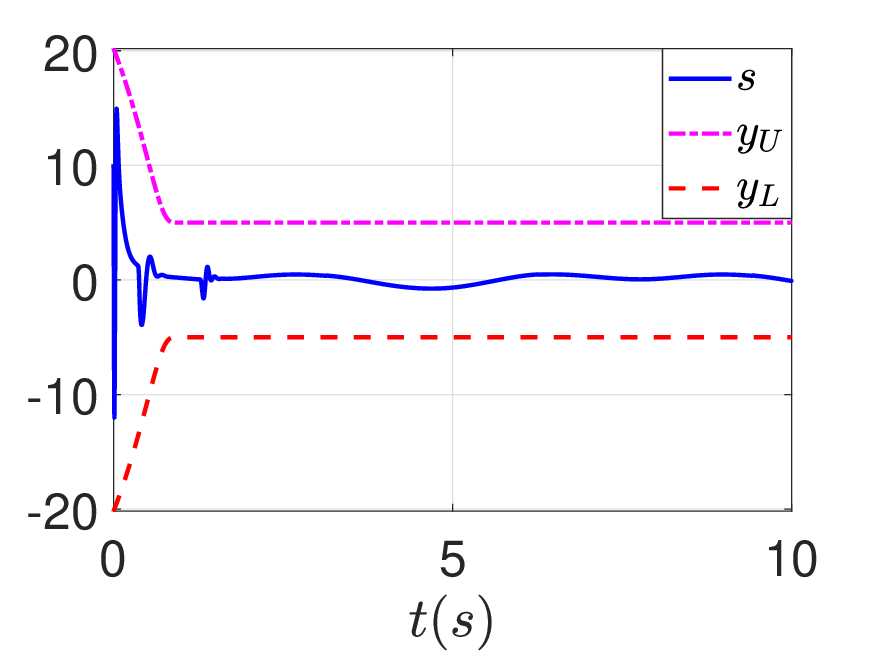}
		\end{overpic}
		\label{fig_hoFixedrhos}
	}\hfill
	\subfigure[]
	{
		\begin{overpic}[width=0.45\columnwidth]{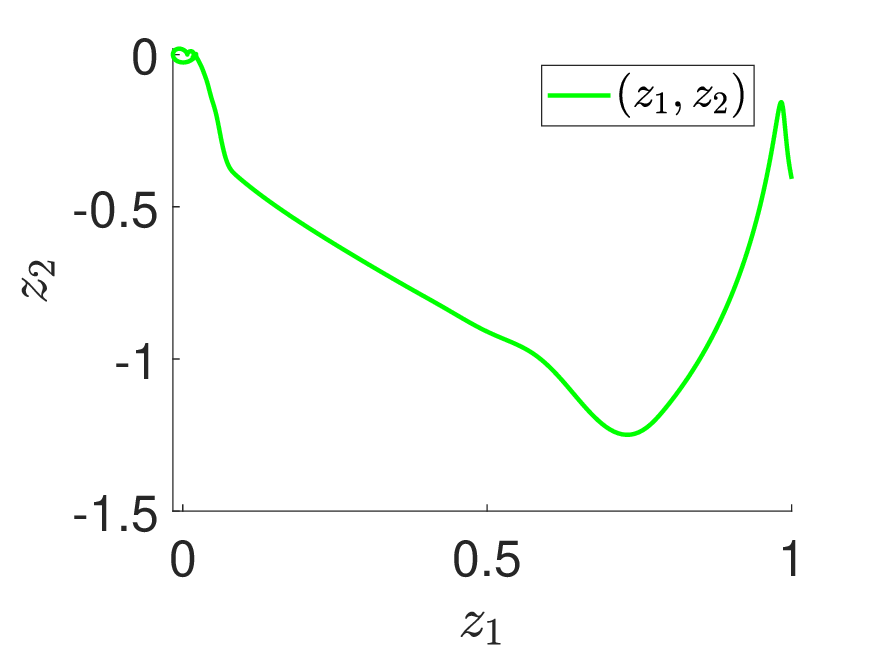}
		\end{overpic}
		\label{fig_hoFixedm}
	}\\
	\subfigure[]
	{
		\begin{overpic}[width=0.45\columnwidth]{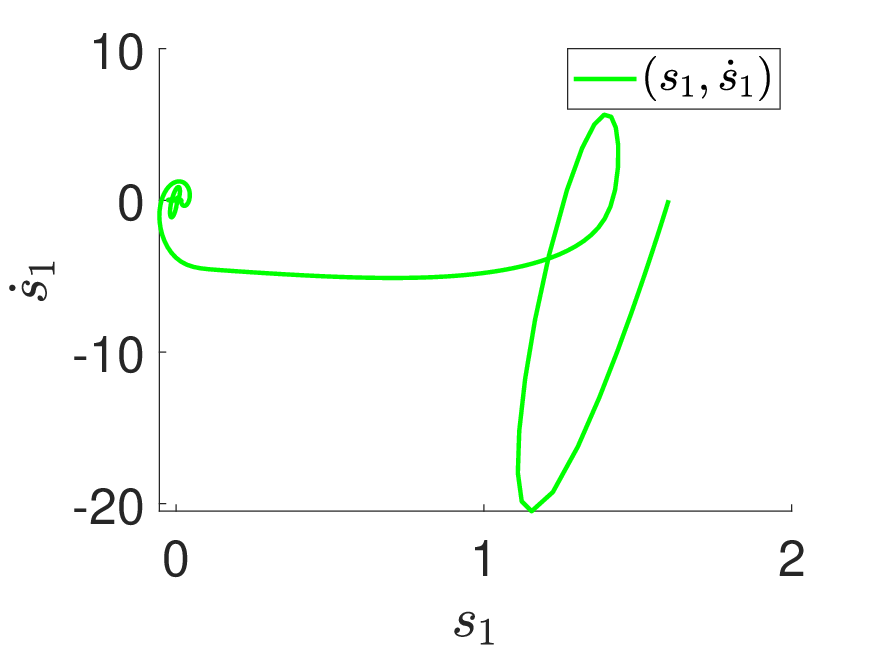}
		\end{overpic}
		\label{fig_hoFixedms}
	}\hfill
	\subfigure[]
	{
		\begin{overpic}[width=0.45\columnwidth]{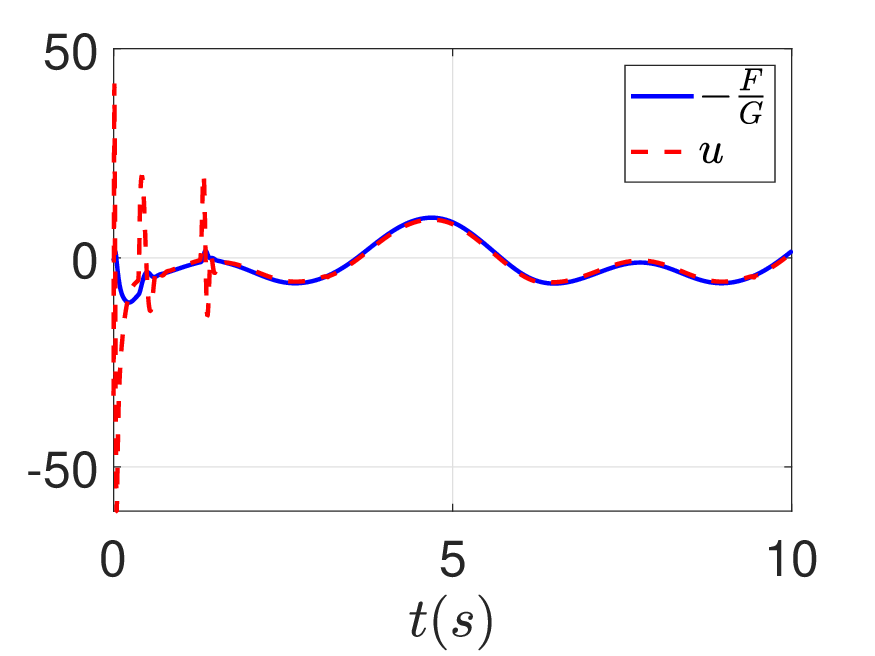}
		\end{overpic}
		\label{fig_hoFixedu}
	}
	\caption{Simulation results of the RFC. (a) The variables $z_1$, $z_2$, $z_3$, $\hat{z}_1$, $\hat{z}_2$ and $\hat{z}_3$ versus time. (b) The variables $s,\dot{s}_1$ versus time (c) The variables $s$, $y_U$, and $y_L$ versus time. (d) Phase trajectory $(z_1,z_2)$. (e) Phase trajectory $(s_1,\dot{s}_1)$. (f) The evolution of $-\frac{F}{G}$ and $u$.}
	\label{fig_hoFixed}
\end{figure}

\begin{figure}[]
	\centering
	\subfigure[]
	{
		\begin{overpic}[width=0.45\columnwidth]{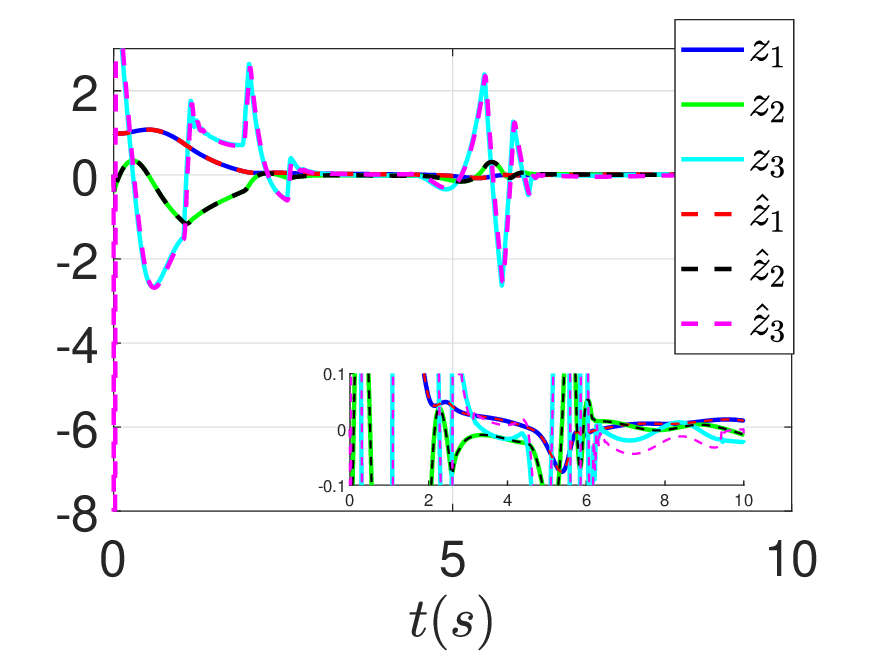}
		\end{overpic}
		\label{fig_hoFixedSz}
	}\hfill
	\subfigure[]
	{
		\begin{overpic}[width=0.45\columnwidth]{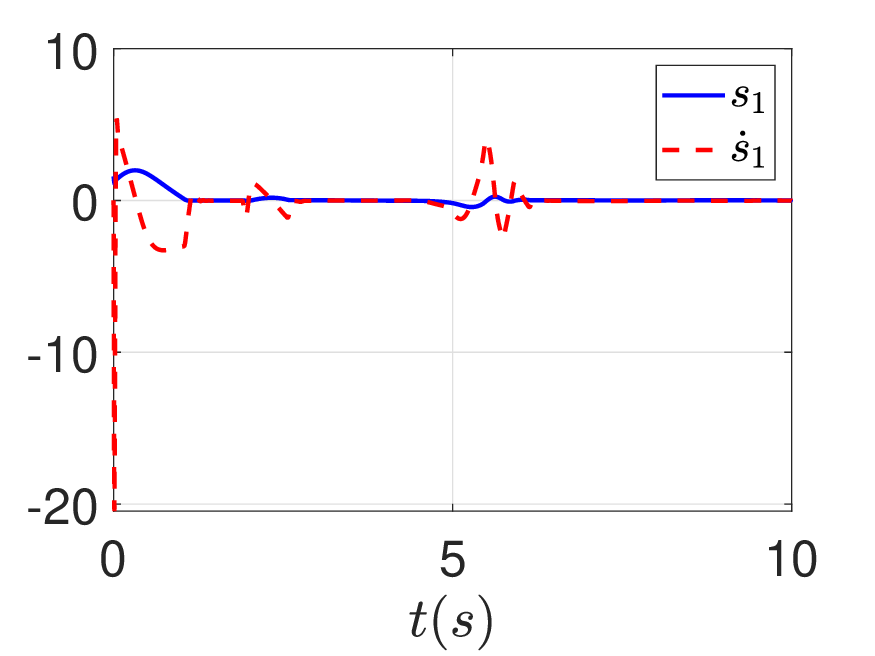}
		\end{overpic}
		\label{fig_hoFixedSs}
	}\\
	\subfigure[]
	{
		\begin{overpic}[width=0.45\columnwidth]{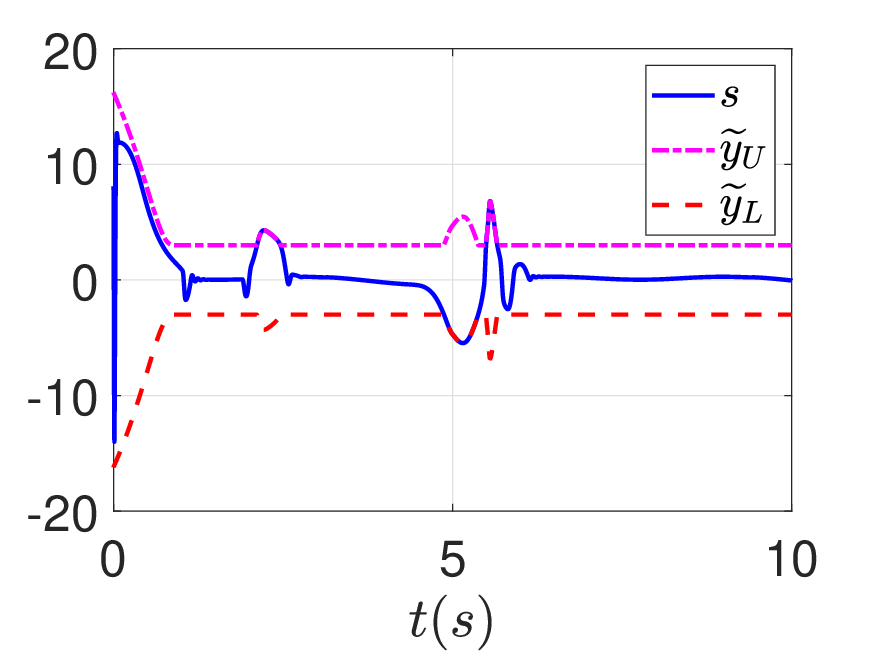}
		\end{overpic}
		\label{fig_hoFixedSrhos}
	}\hfill
	\subfigure[]
	{
		\begin{overpic}[width=0.45\columnwidth]{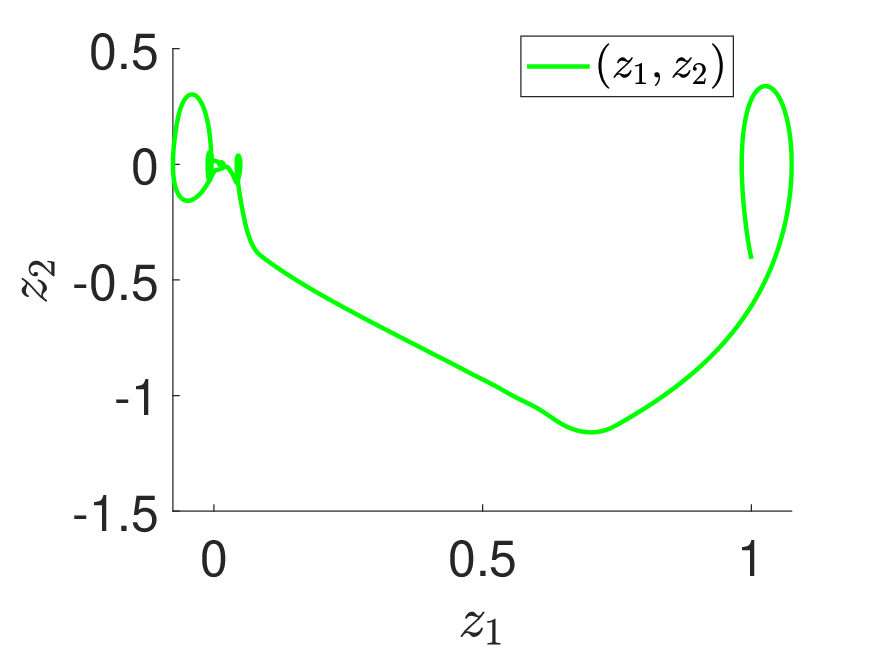}
		\end{overpic}
		\label{fig_hoFixedSm}
	}\\
	\subfigure[]
	{
		\begin{overpic}[width=0.45\columnwidth]{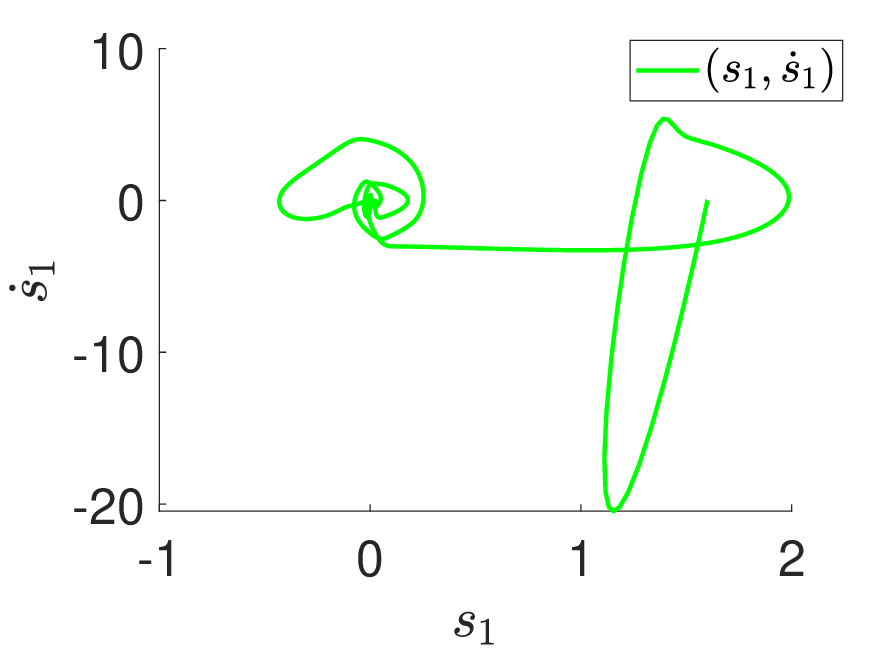}
		\end{overpic}
		\label{fig_hoFixedSms}
	}\hfill
	\subfigure[]
	{
		\begin{overpic}[width=0.45\columnwidth]{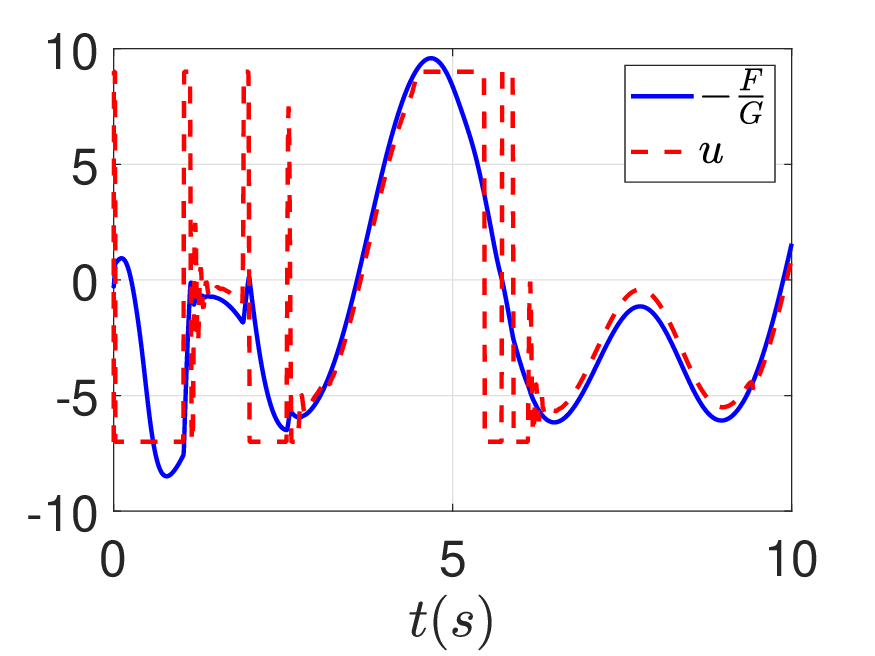}
		\end{overpic}
		\label{fig_hoFixedSu}
	}
	\caption{Simulation results of the RFC with saturated actuator. (a) The variables $z_1$, $z_2$, $z_3$, $\hat{z}_1$, $\hat{z}_2$ and $\hat{z}_3$ versus time. (b) The variables $s,\dot{s}_1$ versus time (c) The variables $s$, $\widetilde{y}_U$, and $\widetilde{y}_L$ versus time. (d) Phase trajectory $(z_1,z_2)$. (e) Phase trajectory $(s_1,\dot{s}_1)$. (f) The evolution of $-\frac{F}{G}$ and $u$.}
	\label{fig_hoFixedSat}
\end{figure}

To verify the effectiveness of the proposed FAMCC based on RFC, the following system is used in the simulation:
\begin{equation}
\left\{
\begin{aligned}
\dot{x}_1=&(1+0.1\sin(x_1)+0.1\cos(t))-x_1+cos(t),\\
\dot{x}_2=&(1+0.1\cos(x_1x_2)+0.1\sin(t))x_3\\&-(x_1^2-1)x_2+\sin(t),\\
\dot{x}_3=&(2+0.2\sin(x_1x_2x_3)+0.1\cos(t))u(v)\\&-(x_1^2+x_2)x_3+10\sin(t)+10\cos(2t)\\
y=&x_1,\underline{u}=-100,\bar{u}=100,\\
x_1(&0)=1,x_2(0)=-0.2,x_3(0)=0.4\\
y_d(&t)=sin(t).\\
\end{aligned}
\right.
\label{eq_System_sim3}
\end{equation}
The controller parameters are selected as follows: $k_u=30,k_0=2,\epsilon_{z1}=\epsilon_z=0.1,\epsilon_{s1}=\epsilon_{z2}=0.2,\epsilon_{s2}=\epsilon_y=5,T_s=1,k_{pp1}=0.1,k_{pp2}=1,\rho_e=0.01$ in controller, $p_1=p_2=0.5,q_1=q_2=2,\alpha_1=1,\beta_1=0.5,\alpha_2=2,\beta_2=1$ in \eqref{eq_RFCs}, and $a_1=4,a_2=6,a_3=4,\mu=0.01$ in differentiator \eqref{eq_HGD}, therefore $T_{fx}<7$.

The simulation results are shown in Fig. \ref{fig_hoFixedz}-Fig. \ref{fig_hoFixedu}. It can be seen that $s$ always evolves within the prescribed bounds $y_U$ and $y_L$, and $z_1$ achieves the prescribed accuracy of $0.1$ within the settling time of $1.2s<T_{fr}$.

If the input constraint in the system \eqref{eq_System_sim3} is reduced to \( \underline{u}=-7,\bar{u}=9 \). The simulation results are shown in Fig. \ref{fig_hoFixedSat}. The actuator output \( u \) remains within the input constraint as Fig. \ref{fig_hoFixedSu}. The flexible constraint $\widetilde{y}_U,\widetilde{y}_L$ expands as \( s \) increases in Fig. \ref{fig_hoFixedSrhos}, after exiting saturation, it restores to its original preset value $y_U,y_L$. The control accuracy is also able to recover to the preset accuracy within a fixed time as Fig. \ref{fig_hoFixedz}.

The MATLAB simulation codes can be obtained at \url{https://github.com/Mudianrui/FAMCC-SAT}.

\section{Conclusion}



\begin{figure}[]
	\centering
	\begin{overpic}[width=0.8\columnwidth]{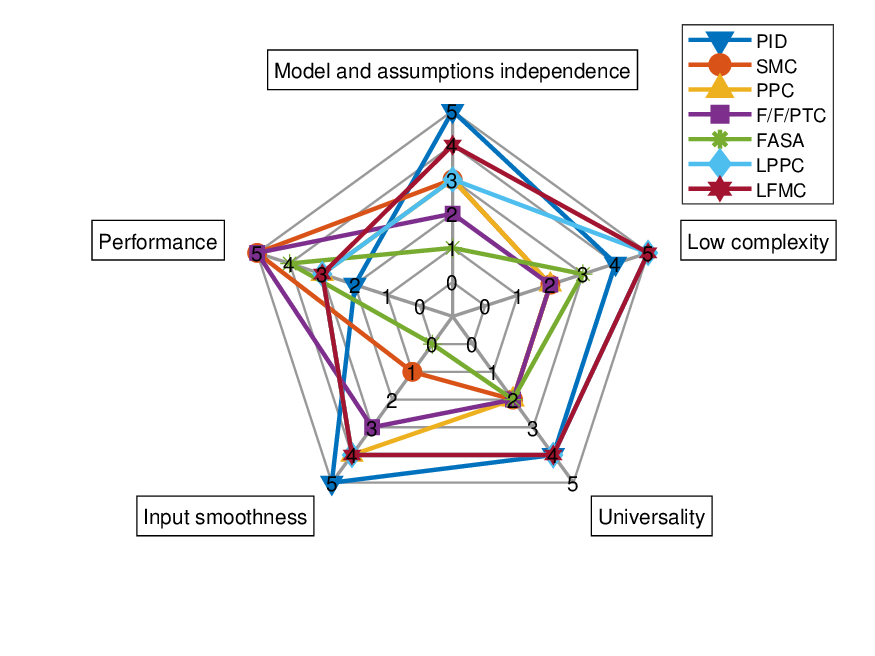}
	\end{overpic}
	\caption{Comparison of characteristics of different controllers.}
	\label{fig_controlSpider}
\end{figure}

\subsection{Characteristics of This Methods}

\begin{enumerate}[(\romannumeral1)]
	\item The linear manifold constraint control method \cite{bechlioulis2013output,wei2018robust,dimanidis2020output,lv2023distributed,song2016adaptive,cao2020adaptive} is extended to nonlinear manifold constraint so that the steady-state accuracy can be directly preset in the controller.
	\item The final approximation-free controller only uses the output and the order of the system without various limitations on the system as \cite{liu2022reduced,li2023adaptive,zhang2019k,wang2020global} and \cite{cai2023active}.
	\item In the simulation, the controller almost coincides with the normalized total disturbance $-\frac{F}{G}$, demonstrating the ability to approximate the optimal energy.
\end{enumerate}

\subsection{Extension on This Methods}
For the future researches, further improvements are required as follows:

\begin{enumerate}[(\romannumeral1)]
	\item Remove the assumption of model differentiability in Assumption \ref{Assumption gf}. This may require the use of nonsmooth dynamic theory based on Filippov.
	\item The proposed control method exhibits low complexity characteristics in the control of a second-order system, but as the order increases, the differential explosion may occur due to the iterative generation of the manifold. This may require finding a non-linear fast manifold construction method that does not rely on iteration.
\end{enumerate}

\appendix
\section{Proof of Lemma \ref{lemma1}}
\label{Proof_Lemma}

The fully actuated iterative manifold can be rewritten as:
\begin{equation}
s=\prod_{i=1}^{n-1}\left(\frac{\partial}{\partial t}-\frac{h_{mi}}{s_i}\right)s_1
\label{eq_Snon}
\end{equation}
According the definition of $h_{mi}$ in \eqref{eq_IS} and \textbf{Assumption} \ref{Assumption gf}, there must exit positive constants $r_i,i=1,\dots,n-1$ such that $s=0$ has $n-1$ negative real roots $\frac{h_{mi}}{s_i}<-r_i$,
and there is a positive constant $\bar{s}$ such that $|s|<\bar{s}$ since $\textbf{Z}$ can be maintained at the vicinity of the manifold $s=0$ when $|\xi(\textbf{Z})|<1$.

\begin{figure}[]
	\centering
	\begin{overpic}[width=1\columnwidth]{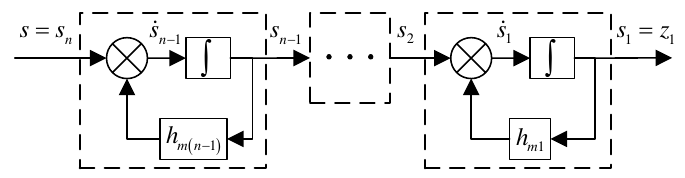}
	\end{overpic}
	\caption{Fully actuated iterative manifold diagram.}
	\label{fig_SF}
\end{figure}

The iterative manifold can be considered as a series of first order linear low pass filters as shown in Fig. \ref{fig_SF}.
Referring to the Proposition 2 in \cite{bechlioulis2013output}, it can be obtained that there exist positive constants $\bar{s}_{i}$ and $\bar{z}_i$ such that
\begin{equation}
|s_i|\leq\bar{s}_i+\frac{\bar{s}}{\prod_{j=i}^{n-1}r_j}
\label{eq_barSi}
\end{equation}
and
\begin{equation}
|z_i|\leq\bar{z}_i+\frac{2^{n-1-i}\bar{s}}{\prod_{j=i}^{n-1}r_j}
\label{eq_barzi}
\end{equation}
for all $t\geq0$, $i=1,\dots,n-1$.

\section{Proof of Theorem \ref{theoremSF}}
\label{Proof_theoremSF}

From Assumptions \ref{Assumption yd}-\ref{Assumption gf} and the design of the $x_U,x_L,y_U,y_L$ in \eqref{eq_xUxL} and \eqref{eq_yUyL}, it can be obtained that $\textbf{Z}(0)\in\Psi_s$, $\xi(0)\in(-1,1)$, and all signals are bounded at the beginning.

We can prove the theorem in two different situations as following:
situation A is $\frac{\partial{h_{m(n-1)}(\bullet)}}{\partial{\bullet}}\in L_{\infty}$,
and situation B is $\lim\limits_{\bullet\rightarrow0} \frac{\partial{h_{m(n-1)}(\bullet)}}{\partial{\bullet}} = -\infty$.

\noindent{\emph{\textbf{Proof for situation A ($\frac{\partial{h_{m(n-1)}(\bullet)}}{\partial{\bullet}}\in L_{\infty}$):}}}

By seeking a contradiction, it is to be proved that
\begin{equation}
|\xi|<1,\forall t\geq0
\label{eq_Toprove}
\end{equation}
so that $\textbf{Z}(t)\in\Psi_s$ according to the definition of $\xi$ in \eqref{eq_xi}.

Since the variables in the manifold constraint variable system \eqref{eq_dxi} are all continuous. According to the  intermediate value theorem for continuous functions, if the system violates the manifold constraint, i.e. exists $|\xi(t_v)|\geq1$, then there must exists a moment $t_c\in(0,t_v]$, which denotes the time instant when \eqref{eq_Toprove} is violated for the first time, that
\begin{equation}
\lim\limits_{t\rightarrow t_c^-}|\xi|=1
\label{eq_limtc}
\end{equation}
and
\begin{equation}
\lim\limits_{t\rightarrow t_c^-}\frac{\partial |\xi|}{\partial t}=\lim\limits_{|\xi|\rightarrow 1^-}\frac{\partial |\xi|}{\partial t}\geq0.
\label{eq_limtcdxi}
\end{equation}

It can guarantee that $\xi\in(-1,1)$ and all signals are bounded for $t\in[0,t_c)$ from \textbf{Lemma} \ref{lemma1}. The following discussions from \eqref{eq_dxi1} to \eqref{eq_limdxi1} are based on the time interval $[0,t_c)$.

According to \eqref{eq_xi}, the manifold constraint variable system can be expressed as:
\begin{equation}
\begin{aligned}
&\dot{\xi}=2\cdot\\
&\left\{
\begin{aligned}
&\frac{\dot{s}_{n-1}-\dot{x}_c-\dot{x}_L}{x_U-x_L}+\frac{(s_{n-1}-x_c-x_L)(\dot{x}_U-\dot{x}_L)}{(x_U-x_L)^2}\\
&\qquad ,\ \text{OMC}\\
&\frac{\dot{s}-\dot{y}_L}{y_U-y_L}+\frac{(s-y_L)(\dot{y}_U-\dot{y}_L)}{(y_U-y_L)^2}\\
&\qquad ,\ \text{LoMC}\\
&\frac{\dot{s}_{n-1}-h_v'\ddot{s}_{n-1}-\dot{x}_L}{x_U-x_L}+\frac{(s_{n-1}-h_v-x_L)(\dot{x}_U-\dot{x}_L)}{(x_U-x_L)^2}\\
&\qquad ,\ \text{LaMC}
\end{aligned}
\right.\\
&=2\cdot
\left\{
\begin{aligned}
&\frac{-\dot{x}_c}{x_U-x_L}+C_1,&\text{OMC}\\
&\frac{\dot{s}}{y_U-y_L}+C_2,&\text{LoMC}\\
&\frac{-h_v'\ddot{s}_{n-1}}{x_U-x_L}+C_3,&\text{LaMC}
\end{aligned}
\right.\\
\end{aligned}
\label{eq_dxi1}
\end{equation}
where $C_1=\frac{\dot{s}_{n-1}-\dot{x}_L}{x_U-x_L}+\frac{(s_{n-1}-x_c-x_L)(\dot{x}_U-\dot{x}_L)}{(x_U-x_L)^2}$, $C_2=\frac{-\dot{y}_L}{y_U-y_L}+\frac{(s-y_L)(\dot{y}_U-\dot{y}_L)}{(y_U-y_L)^2}$, and $C_3=\frac{\dot{s}_{n-1}-\dot{x}_L}{x_U-x_L}+\frac{(s_{n-1}-h_v-x_L)(\dot{x}_U-\dot{x}_L)}{(x_U-x_L)^2}$ are all bounded.

According to \eqref{eq_IS} and \eqref{eq_ISFM}, it can be obtained that
\begin{equation}
\dot{s}_{n-1}=s+h_{m(n-1)}(s_{n-1})=z_n-\sum_{i=1}^{n-2}h_{mi}^{(n-1-i)}(s_i)
\label{eq_dsn-1}
\end{equation}
and
\begin{equation}
\ddot{s}_{n-1}=\dot{z}_n-\sum_{i=1}^{n-2}h_{mi}^{(n-i)}(s_i)
\label{eq_ddsn-1}
\end{equation}

Simultaneously taking the derivative of both sides of equation \eqref{eq_xyc} yields
\begin{equation}
\begin{aligned}
\frac{\partial h_{m(n-1)}(x_c)}{\partial x_c}\dot{x}_c=&\frac{\dot{y}_Ux_U-y_U\dot{x}_U}{x_U^2}(x_c-s_{n-1})\\
&+\frac{y_U}{x_U}(\dot{x}_c-\dot{s}_{n-1})+\ddot{s}_{n-1}
\end{aligned}
\label{eq_dxc}
\end{equation}
Therefore, it can be obtained that
\begin{equation}
\begin{aligned}
\dot{x}_c&=\frac{\frac{\dot{y}_Ux_U-y_U\dot{x}_U}{x_U^2}(x_c-s_{n-1})-\frac{y_U}{x_U}\dot{s}_{n-1}+\ddot{s}_{n-1}}{\frac{\partial h_{m(n-1)}(x_c)}{\partial x_c}-\frac{y_U}{x_U}}\\
&=A_1\ddot{s}_{n-1}+B_1=A_1\dot{z}_n-A_1\sum_{i=1}^{n-2}h_{mi}^{(n-i)}(s_i)+B_1
\end{aligned}
\label{eq_dxceq}
\end{equation}
in which $A_1=\frac{1}{\frac{\partial h_{m(n-1)}(x_c)}{\partial x_c}-\frac{y_U}{x_U}}$ and $B_1=\frac{\frac{\dot{y}_Ux_U-y_U\dot{x}_U}{x_U^2}(x_c-s_{n-1})-\frac{y_U}{x_U}\dot{s}_{n-1}}{\frac{\partial h_{m(n-1)}(x_c)}{\partial x_c}-\frac{y_U}{x_U}}$ are both bounded and $-A_1$ has a positive lower bound since $\frac{\partial h_{m(n-1)}(x_c)}{\partial x_c}<0$.

By substituting \eqref{eq_ddsn-1}, \eqref{eq_dxceq}, and the controller \eqref{eq_uT}, the manifold constraint variable system \eqref{eq_dxi1} can be rewritten as:
\begin{equation}
\begin{aligned}
&\dot{\xi}=2\cdot\\
&\left\{
\begin{aligned}
&-\frac{A_1\dot{z}_n-A_1\sum_{i=1}^{n-2}h_{mi}^{(n-i)}(s_i)+B_1}{x_U-x_L}+C_1,&\text{OMC}\\
&\frac{\dot{z}_n-\sum_{i=1}^{n-1}h_{mi}^{(n-i)}(s_i)}{y_U-y_L}+C_2,&\text{LoMC}\\
&\frac{-h_v'\left(\dot{z}_n-\sum_{i=1}^{n-2}h_{mi}^{(n-i)}(s_i)\right)}{x_U-x_L}+C_3,&\text{LaMC}
\end{aligned}
\right.\\
&=A\dot{z}_n+D=-AGk_u\Gamma(\xi)+AF+D
\end{aligned}
\label{eq_dxi}
\end{equation}
where $h_v':=\frac{\partial h_v(\dot{s}_{n-1})}{\partial\dot{s}_{n-1}}$ and
\begin{equation}
\begin{aligned}
&A=2\cdot
\left\{
\begin{aligned}
&-\frac{A_1}{x_U-x_L},\ \text{OMC}\\
&\frac{1}{y_U-y_L},\ \text{LoMC}\\
&\frac{-h_v'}{x_U-x_L},\ \text{LaMC}
\end{aligned}
\right.
\end{aligned}
\label{eq_dxiA}
\end{equation}
\begin{equation}
\begin{aligned}
&D=2\cdot
\left\{
\begin{aligned}
&-\frac{-A_1\sum_{i=1}^{n-2}h_{mi}^{(n-i)}(s_i)+B_1}{x_U-x_L}+C_1,&\text{OMC}\\
&\frac{-\sum_{i=1}^{n-1}h_{mi}^{(n-i)}(s_i)}{y_U-y_L}+C_2,&\text{LoMC}\\
&\frac{h_v'\sum_{i=1}^{n-2}h_{mi}^{(n-i)}(s_i)}{x_U-x_L}+C_3,&\text{LaMC}
\end{aligned}
\right.
\end{aligned}
\label{eq_dxiD}
\end{equation}
Since $\frac{\partial{h_{m(n-1)}(\bullet)}}{\partial{\bullet}}<0$ and $\frac{\partial{h_{m(n-1)}(\bullet)}}{\partial{\bullet}}\in L_{\infty}$, the value $-h_v'$ is bounded with a positive lower bound. There exit positive constants $\underline{A}$, $\bar{A}$, $\bar{D}$, and $\bar{F}$ such that $0<\underline{A}\leq A\leq \bar{A}$, $|D|\leq\bar{D}$, and $|F|\leq\bar{F}$.

It holds that
\begin{equation}
\begin{aligned}
\frac{\partial |\xi|}{\partial t}\leq-\underline{A}\underline{G}k_u|\Gamma(\xi)|+\bar{A}\bar{F}+\bar{D}<0
\end{aligned}
\label{eq_dxileq}
\end{equation}
when $|\xi|>\Gamma^{-1}(\frac{\bar{A}\bar{F}+\bar{D}}{\underline{A}\underline{G}k_u})$,
and
\begin{equation}
\lim\limits_{|\xi|\rightarrow 1^-}\frac{\partial |\xi|}{\partial t}=\lim\limits_{|\Gamma(\xi)|\rightarrow \infty}\frac{\partial |\xi|}{\partial t}=-\infty.
\label{eq_limdxi1}
\end{equation}

This contradicts with \eqref{eq_limtcdxi}, therefore the assumption is not valid, $|\xi|<1$ for all time and the constraint is satisfied in situation A.

\noindent{\emph{\textbf{{Proof for situation B ($\lim\limits_{\bullet\rightarrow0} \frac{\partial{h_{m(n-1)}(\bullet)}}{\partial{\bullet}} = -\infty$):}}}
	
\begin{figure}[]
	\centering
	\begin{overpic}[width=0.8\columnwidth]{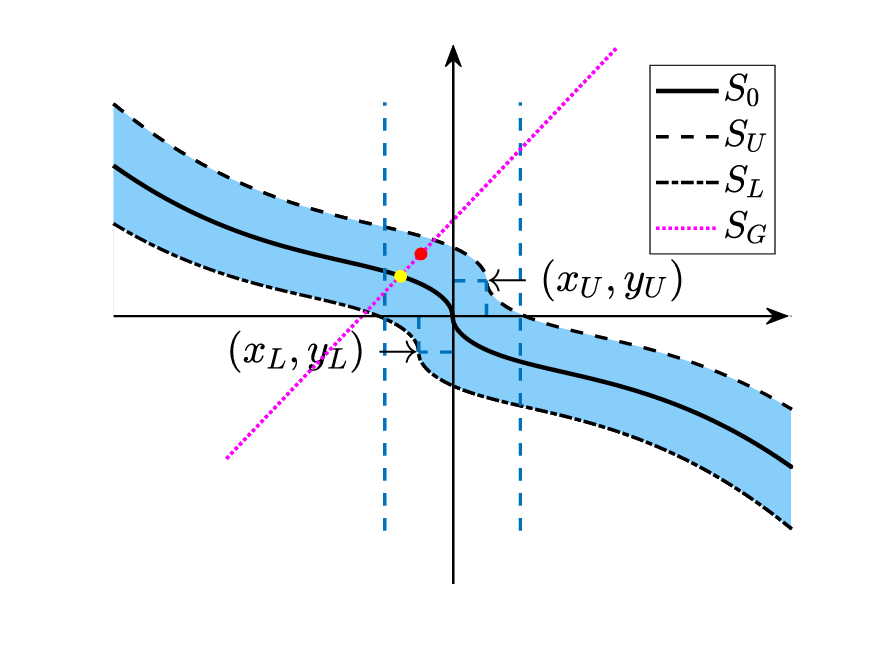}
		\put(88,35){$x$}
		\put(48,70){$y$}
		\put(40,39){$-\delta$}
		\put(58,39){$\delta$}
	\end{overpic}
	\caption{Manifold constraint diagram under situation B.}
	\label{fig_FiniteS2}
\end{figure}

For situation B, there exists a positive value $\delta$ that satisfies
\begin{equation}
h_{m(n-1)}(\delta-x_U)=-y_U.
\label{eq_Smdelta}
\end{equation}
$\frac{\partial{h_{m(n-1)}(s_{n-1})}}{\partial{s_{n-1}}}\in L_{\infty}$ for $s_{n-1} \notin \Omega_\delta\triangleq(-\delta,\delta)$. System fully actuated errors $\textbf{Z}$ maintains in $\Psi_s$ when $s_{n-1} \notin \Omega_\delta$ as the proof for situation A.

According to $\textbf{Z}(0)\in \Psi_s$ and the proof for situation A, there exists $t_\delta\geq0$ so that $\textbf{Z}\in \Psi_s$ and $s_{n-1} \in \Omega_\delta$. The red bullet in Fig. \ref{fig_FiniteS2} is the position $(s_{n-1}(t),\dot{s}_{n-1}(t))$ with $t\geq t_\delta$. The yellow bullet in Fig. \ref{fig_FiniteS2} is the position $(x_c(t),y_c(t))$.
Additionally, there exists a positive constant $\bar{\dot{s}}_{n-1}$ so that $ |\dot{s}_{n-1}|\leq \bar{\dot{s}}_{n-1}$ for $\textbf{Z}\in \Psi_s$ and $s_{n-1} \in \Omega_\delta$, and the horizontal distance $\delta_x$ between the red bullet and the boundary $S_U$ or $S_L$ under $s_{n-1} \in \Omega_\delta$ satisfies that $0<\delta_x(t_\delta)<2\delta$.
There exits a positive constant $\underline{t}_{mx}=2\delta(t_\delta)/\bar{\dot{s}}_{n-1}$ that the time for the red bullet to move to the boundary $t_{mx}$ satisfies that $t_{mx}>\underline{t}_{mx}$.
The time for the red bullet to move to the horizontal axis $\dot{s}_{n-1}=0$ is $t_{my}$.

According to \eqref{eq_ddsn-1},
\begin{equation}
\int_{\dot{s}_{n-1}(t_\delta)}^{0}\frac{d\dot{s}_{n-1}}{\ddot{s}_{n-1}}=\int_{t_\delta}^{t_\delta+t_{my}}dt
\label{eq_intdds}
\end{equation}
and
\begin{equation}
\begin{aligned}
t_{my}&=\int_{\dot{s}_{n-1}(t_\delta)}^{0}\frac{1}{\dot{z}_n-A_M}d\dot{s}_{n-1}\\
&=\int_{\dot{s}_{n-1}(t_\delta)}^{0}\frac{1}{-Gk_u\Gamma(\xi)+F-A_M}d\dot{s}_{n-1}
\end{aligned}
\label{eq_tmy}
\end{equation}
where $A_M=\sum_{i=1}^{n-2}h_{mi}^{(n-i)}(s_i)$ and it has an upper bound $\bar{A}_M$ according the properties of $h_{mi}$.
Therefore, in the worst-case scenario, the red bullet moves close to the boundary, $t_{my}<t_{mx}$ as long as $\Gamma^{-1}\left(\frac{\bar{\dot{s}}_{n-1}^2/(2\delta(t_\delta))+\bar{F}_d+\bar{A}_M}{\underline{G}k_u}\right)\leq|\xi|<1$, that is, the red bullet always can cross the horizontal axis $\dot{s}_{n-1}=0$ before touching the boundary. Then the red bullet will move in the opposite direction, making it impossible to cross the boundary.

Furthermore, since all signals are bounded for $t\geq0$, there exists a positive constant $\bar{\xi}$ such that $|\xi|\leq\bar{\xi}<1$, which implies that the controller has an upper bound $\widetilde{u}=k_u\Gamma(\bar{\xi})$.

\section{Proof of Theorem \ref{theoremSFIC}}
\label{Proof_theoremSFIC}

According to the designs in \eqref{eq_WxyUxyL}, when the distance between the current state \((s_{n-1}(t), ds_{n-1}(t)) \in \Psi_s\) and the boundaries $\textbf{S}_U,\textbf{S}_L$ greater than \(\rho_e\), the flexible constraint boundaries in \eqref{eq_WxyUxyL} equals the original constraint boundaries in \eqref{eq_xUxL} and \eqref{eq_yUyL}, and the flexible constraint variable $\widetilde{\xi}$ in \eqref{eq_Wxi} equals the original constraint variable $\xi$ in \eqref{eq_xi}, making the flexible controller in \eqref{eq_WuT} identical to the controller in \eqref{eq_uT}.

When the distance between the current state \((s_{n-1}(t), ds_{n-1}(t))\) and the manifold constraint boundaries $\textbf{S}_U$ and $\textbf{S}_L$ is less than \(\rho_e\), the flexible constraint boundaries $\widetilde{x}_U,\widetilde{x}_L,\widetilde{y}_U,\widetilde{y}_L$ in \eqref{eq_WxyUxyL} expand based on the original constraint boundary $x_U,x_L,y_U,y_L$ in \eqref{eq_xUxL} and \eqref{eq_yUyL}. After the current state exceeds the original manifold constraint boundaries $\textbf{S}_U$ and $\textbf{S}_L$, \(\mathcal{T}_s = 1\) in \eqref{eq_WxyUxyL} can be obtained, the flexible expanded constraint \eqref{eq_WxyUxyL} ensures that the distance between the current state and the flexible boundaries remains \(\rho_e\), so the absolute value of the flexible constraint variable $\widetilde{\xi}$ can be redefined as
\begin{equation}
|\widetilde{\xi}|=\frac{d_{sc}}{d_{sc}+\rho_e}<1.
\label{eq_|xi|}
\end{equation}
This guarantees that the flexible controller is well-defined and ensures the effective operation of the controller. If the system is not ISS and ISpS, as \(d_{sc}\) increases, \(|\widetilde{\xi}|\) continues to increase and approaches 1, while the flexible controller also continues to increase and tends to infinity.

Based on the above analysis, there must exists a sufficiently small positive constant \(\bar{\rho}_e\)
\begin{equation}
\bar{\rho}_e=\min\left\{d_U-\Gamma^{-1}\left(\frac{\max\{\bar{u},-\underline{u}\}}{k_u}\right)\right\},
\label{eq_brhoe}
\end{equation}
such that the controller will enter saturation before flexible expansion is activated for \(\rho_e\in(0,\bar{\rho}_e]\). In this case, if the actuator output capability is sufficient, the controller will not reach the input constraint \eqref{eq_sat}, thus achieving the same control effect as in Theorem \ref{theoremSF} (\romannumeral1) and (\romannumeral2). If the actuator output capability is insufficient, as the state approaches the constraint boundary, the system will first enter saturation, and then proceed with the flexible expansion of the constraint boundary as \eqref{eq_WxyUxyL}. In the reverse process, as the state \((s_{n-1}(t), ds_{n-1}(t))\) moves back from outside the original constraint set \(\Psi_s\) to within the set as \((s_{n-1}(t), ds_{n-1}(t))\in\Psi_s\), when the distance to the constraint boundary exceeds \(\rho_e\), the flexible constraint boundary will restore to the original constraint value, i.e., \(\widetilde{x}_U=-\widetilde{x}_L=x_U=-x_L\) and \(\widetilde{y}_U=-\widetilde{y}_L=y_U=-y_L\), since $\mathcal{T}_s=0$ in \eqref{eq_WxyUxyL}. This ensures that when the controller exits saturation, the flexible constraint boundary has been fully restored to the original constraint.

\section{Proof of Theorem \ref{theoremOF}}
\label{Proof_theoremOF}

Define the scaled estimation errors as:
\begin{equation}
\zeta_i=\frac{z_i-\hat{z}_i}{\mu^{n-i}},\quad i=1,\dots,n
\label{eq_SZ}
\end{equation}
Therefore, $\hat{\textbf{Z}}=\textbf{Z}-\mathcal{D}(\mu)\zeta$ with $\mathcal{D}(\mu)=diag(\mu^{n-1},\dots,\mu,1)$ and $\zeta=[\zeta_1,\dots,\zeta_n]^T$.

The derivative of $\zeta$ is
\begin{equation}
\dot{\zeta}=\frac{1}{\mu}(\mathcal{A}-H\mathcal{C})\zeta+\mathcal{B}(Gu+F)
\label{eq_dzeta}
\end{equation}
with $H=[a_1,a_2,\dots,a_n]^T$.

From Assumptions \ref{Assumption yd}-\ref{Assumption gf}, there exits positive constants $\bar{\xi}<1$, $\bar{\textbf{Z}}$, and $\tau_2$ that $(\xi,\textbf{Z})\in\Psi_{\xi Z},\forall t\in[0,\tau_2]$, where $\Psi_{\xi Z}$ is a set defined as $\Psi_{\xi Z}=\left\{(\xi,\textbf{Z})| |\xi|\leq\bar{\xi},\|\textbf{Z}\|\leq\bar{\textbf{Z}}\right\}$. Hence, it can be found positive constant $\bar{\Delta}$ that $Gu+F\leq\bar{\Delta}, \forall t\in[0,\tau_2]$.

Define Lyapunov function $V_{\zeta}=\zeta^T\mathcal{P}\zeta$ where positive definite symmetric matrix $\mathcal{P}\in\mathcal{R}^{n\times n}$ is the solution of $\mathcal{P}(\mathcal{A}-H\mathcal{C})+(\mathcal{A}-H\mathcal{C})^T\mathcal{P}=-\textbf{I}_{n}$.
Differentiating $V_{\zeta}$ with respect to time yields
\begin{equation}
\begin{aligned}
\dot{V}_\zeta
=&-\frac{1}{\mu}\zeta^T\zeta+2\zeta^T\mathcal{P}\mathcal{B}(Gu+F)\\
\leq&-\frac{1}{\mu}\|\zeta\|^2+2\|\zeta\|\|\mathcal{P}\|\bar{\Delta}\\
\leq&-\frac{1}{\mu}\|\zeta\|^2+\frac{1}{2\mu}\|\zeta\|^2+2\mu\|\mathcal{P}\|^2\bar{\Delta}^2\\
\leq&-\frac{1}{2\mu\|\mathcal{P}\|}V_{\zeta}+2\mu\|\mathcal{P}\|^2\bar{\Delta}^2.
\end{aligned}
\label{eq_dVzeta}
\end{equation}
Since the convergence rate of $V_{\zeta}$ will faster and faster as well as the convergence error will smaller and smaller when $\mu$ tends to $0$, there must be a small positive constant $\mu_1$ and $\tau_1<\tau_2$ to make $\|\zeta\|\leq\bar{\zeta}$ when $\mu\in(0,\mu_1)$ and $t\in(\tau_1,\tau_2]$ for any positive constant $\bar{\zeta}$.

Considering that the controller structure is Lipschitz, there must exist positive constants $\mathcal{L}$ and $\mu_2<1$ for any $\mu\in(0,\mu_2)$ such that
\begin{equation}
\|v(\hat{\textbf{Z}})-v(\textbf{Z})\|=\|v(\textbf{Z}-\mathcal{D}(\mu)\zeta)-v(\textbf{Z})\|\leq\mathcal{L}\|\zeta\|.
\label{eq_uOFD}
\end{equation}

Therefore there exists a positive constant $\mu_3$ such that for $\mu\in(0,\mu_3)$, $\|\zeta\|$, $\|\hat{\textbf{Z}}-\textbf{Z}\|$, and $\mathcal{L}$ are all sufficiently small, hence \textbf{Theorem} \ref{theoremSFIC} can be used to obtain the same conclusion since the error of the $v$ is bounded. Thus $\textbf{Z}(t)\in\Psi_s$ holds for $t\geq0$ when $0<\mu<\min\left(\mu_1,\mu_2,\mu_3\right)=\bar{\mu}$.


\section*{References}

\bibliographystyle{IEEEtran}
\bibliography{FAMC}

\end{document}